\numberwithin{equation}{section}
\begin{document}

\newtheorem{theorem}{Theorem}[section]
\newtheorem{corollary}[theorem]{Corollary}
\newtheorem{lemma}[theorem]{Lemma}
\newtheorem{proposition}[theorem]{Proposition}

\newcommand{\be}{\begin{equation}}
\newcommand{\ee}{\end{equation}}
\newcommand{\bea}{\begin{eqnarray}}
\newcommand{\eea}{\end{eqnarray}}
\newcommand{\De}{\Delta}
\newcommand{\de}{\delta}
\newcommand{\Z}{{\mathbb Z}}
\newcommand{\N}{{\mathbb N}}
\newcommand{\C}{{\mathbb C}}
\newcommand{\Cs}{{\mathbb C}^{*}}
\newcommand{\R}{{\mathbb R}}
\newcommand{\Q}{{\mathbb Q}}
\newcommand{\T}{{\mathbb T}}
\newcommand{\cW}{{\cal W}}
\newcommand{\cJ}{{\cal J}}
\newcommand{\cE}{{\cal E}}
\newcommand{\cA}{{\cal A}}
\newcommand{\cR}{{\cal R}}
\newcommand{\cP}{{\cal P}}
\newcommand{\cM}{{\cal M}}
\newcommand{\cN}{{\cal N}}
\newcommand{\cI}{{\cal I}}
\newcommand{\cB}{{\cal B}}
\newcommand{\cD}{{\cal D}}
\newcommand{\cC}{{\cal C}}
\newcommand{\cL}{{\cal L}}
\newcommand{\cF}{{\cal F}}
\newcommand{\cH}{{\cal H}}
\newcommand{\cS}{{\cal S}}
\newcommand{\cT}{{\cal T}}
\newcommand{\cU}{{\cal U}}
\newcommand{\cQ}{{\cal Q}}
\newcommand{\cV}{{\cal V}}
\newcommand{\cK}{{\cal K}}
\newcommand{\intR}{\int_{-\infty}^{\infty}}
\newcommand{\diag}{{\rm diag}}
\newcommand{\Ln}{{\rm Ln}}
\newcommand{\Arg}{{\rm Arg}}
\newcommand{\pz}{\partial_z}
\newcommand{\re}{{\rm Re}\, }
\newcommand{\im}{{\rm Im}\, }
\newcommand{\res}{{\rm Res}}

\title{Difference operators of Sklyanin and van Diejen type}
\author{Eric Rains \\Department of Mathematics, \\California Institute of Technology, Pasadena, USA \\ and \\Simon Ruijsenaars \\ School of Mathematics, \\ University of Leeds, Leeds LS2 9JT, UK}

\date{}

\maketitle

\begin{abstract}
  The Sklyanin algebra $\cS_{\eta}$ has a well-known family of
  infinite-dimensional representations $\cD(\mu)$, $\mu \in \Cs$, in terms
  of difference operators with shift $\eta$ acting on even meromorphic
  functions. We show that for generic $\eta$ the coefficients of these
  operators have solely simple poles, with linear residue relations
  depending on their locations. More generally, we obtain explicit
  necessary and sufficient conditions on a difference operator for it to
  belong to $\cD(\mu)$. By definition, the even part of $\cD(\mu)$ is
  generated by twofold products of the Sklyanin generators. We prove that
  any sum of the latter products yields a difference operator of van Diejen
  type. We also obtain kernel identities for the Sklyanin generators. They
  give rise to order-reversing involutive automorphisms of $\cD(\mu)$, and
  are shown to entail previously known kernel identities for the van Diejen
  operators. Moreover, for special $\mu$ they yield novel
  finite-dimensional representations of $\cS_{\eta}$.
\end{abstract}

\tableofcontents

% \newpage

\section{Introduction}
The Sklyanin algebra $\cS_{\eta}$~\cite{Skl82,Skl83} can be defined via four generators $S_0,S_1,S_2,S_3$ satisfying six relations
\be\label{Sk1}
[S_0,S_k]_-=iJ_{lm}(\eta)[S_l,S_m]_+,
\ee
\be\label{Sk2}
[S_k,S_l]_-=i[S_0,S_m]_+.
\ee
Here, $(k,l,m)$ is a cyclic permutation of $(1,2,3)$, and the structure constants $J_{23},J_{31}$ and $J_{12}$ are elliptic functions of $\eta$. Throughout, the elliptic lattice will be fixed as
\be\label{Lam}
\Lambda =\Z + \tau \Z,\ \ \ \im \tau>0.
\ee

The Sklyanin algebra has been studied from various perspectives and has been generalized in more than one way. There is meanwhile a considerable literature on this subject, from which we mention specifically Refs.~\cite{ATB90,CLOZ08,KZ95,OF93,Ode03,Ros04,SS93}.
This paper is concerned with representations of the Sklyanin algebra which involve analytic difference operators. These operator representations were introduced by Sklyanin in~\cite{Skl83}, restricting attention to special values of the representation parameter for which the operators leave certain finite-dimensional spaces invariant. In this paper we encounter additional special values yielding finite-dimensional modules, but our focus is on generic parameters and infinite-dimensional representations. 

One of our main goals is to characterize the difference operators that arise in the latter representations. The Sklyanin generator~$S_t$ is represented by a difference operator~$D_t$ characterized by a meromorphic coefficient~$f_t(z)$,
which has period~1 and quasi-period~$\tau$ with multiplier~$\mu$. The difference operators are defined on the space~$\cM_e$ of meromorphic even functions and are of the form
\be\label{Dt}
(D_tF)(z)=f_t(z)F(z+\eta)+f_t(-z)F(z-\eta),\ \ \ F\in\cM_e, \ \ \ \eta\in\C^*,\ \ \ t=0,1,2,3.
\ee
Thus they leave the space~$\cM_e$  invariant. (See~Section~2 for the details of their definition.)
As a rule, the parameters on which the coefficients depend (in particular the representation label~$\mu$) will be suppressed, unless confusion might arise. 

In the sequel, an operator action such as~\eqref{Dt} will be abbreviated as
\be
D_t=f_t(z)\exp(\eta \pz)+(z\to -z).
\ee
We denote the four-dimensional vector space $\cV_1(\mu)$ of analytic difference operators (henceforth A$\De$Os) spanned by $D_0,D_1,D_2$ and $D_3$ by $\cV_1(\mu)$, and the associated representation of~$\cS_{\eta}$ by~$\cD(\mu)$.  We shall mostly work with A$\De$Os of the form
\be\label{AR}
A_R=f_R(z)\exp(\eta \pz)+(z\to -z),
\ee
that yield $\cV_1(\mu)$ as the coefficients $f_R(z)$ vary over a four-dimensional vector space~$V_1(\mu)$, cf.~the paragraph containing~\eqref{fR}. These A$\De$Os were introduced by the first-named author in~\cite{Rai06}. 

Clearly, the linear combinations of $k$-fold products of the operators $A_R$ are of the form
\be\label{AVk}
A^{(k)}=\sum_{m=0}^k c^{(k)}_{k-2m}(\eta;z)\exp((k-2m)\eta\partial_z),\ \ k>1.
\ee
We denote the subspace consisting of the A$\De$Os~\eqref{AVk} 
by $\cV_k(\mu)$. It is plain that the operators in these subspaces of the representation $\cD(\mu)$ are not periodic in~$\eta$, whereas the Sklyanin algebra~$\cS_{\eta}$ is elliptic in~$\eta$. As such, the representation and its subspaces have an additional dependence on the choice of~$\eta$, which we shall not make explicit in our notation, just as the dependence on $\tau$ is suppressed. 

A principal result of this paper consists in necessary and sufficient conditions on an A$\De$O of the form
\be\label{Agen}
A=\sum_{j\in\Z}c_j(z)\exp(j\eta \pz),
\ee
with finitely many coefficients $c_j$ nonzero, to belong to $\cD(\mu)$,
assuming no multiple of $\eta$ belongs to the elliptic lattice:
\be\label{etareq}
\eta \notin \Q \Lambda.
\ee
 The coefficient conditions are of a quite explicit nature:  the $c_j(z)$ are meromorphic functions satisfying
 \be\label{cmul}
c_j(z+1)=c_j(z),\ \ \ c_j(z+\tau)=\mu^jc_j(z),
\ee
 and related by
 \be\label{cjeven}
 c_j(-z)=c_{-j}(z),
 \ee
 whereas their poles are constrained by requiring that they be simple and occur only at points of the form
 \be
 z=z_e/2-\ell\eta,\ \ \ z_e\in\Lambda,\ \ \ \ell\in\Z,
 \ee
  with linear relations among the residues depending on the pole locations, cf.~(ii)--(iii) in Lemma~3.2. We refer to such A$\De$Os as A$\De$Os of {\it Sklyanin type}.
  
As just defined, the notion of Sklyanin type A$\De$O refers to a fixed $\eta$-value that does not belong to the dense set $\Q\Lambda$. On the other hand, it is already clear from~\eqref{AR}--\eqref{AVk} that the coefficients in~\eqref{AVk} are meromorphic in their dependence on~$\eta$. (Indeed, the functions~$f_R(z)$ are $\eta$-independent by definition.) We also obtain various results that have a bearing on this $\eta$-dependence. In particular, we shall see that the coefficients in~\eqref{AVk} have at most simple $z$-poles when we only require
\be\label{etark}
\eta\notin \cup_{\ell =1}^k (2\ell)^{-1}\Lambda,
\ee
as opposed to the restriction~\eqref{etareq}. (The latter is necessary to ensure that any A$\De$O in $\cD(\mu)$ have coefficients with at most simple $z$-poles, cf.~Lemma~3.2.) More generally, we shall prove that multiplication of the coefficient $c^{(k)}_{k-2m}(\eta;z)$ in~\eqref{AVk} by a theta function product (depending on $k$ and $m$) yields a function that is holomorphic in~$\eta$ and~$z$. Accordingly, we obtain an explicit picture of the $z$-poles of the coefficients in their dependence on~$\eta$.

Clearly, any A$\De$O in $\cV_2(\mu)$ is of the form
\be
A_{D}=c_2(z)\exp(2\eta \pz)+(z\to -z)+c_0(z),
\ee
with the coefficient $c_0$ an even function. From its definition it will also be obvious that $c_0$ is actually an elliptic function. We shall show that all of these A$\De$Os are of the type introduced by van Diejen~\cite{vDie94}, and that a given van Diejen A$\De$O belongs to the space $\cV_2(\mu)$ for a unique $\mu\in\Cs$. (In point of fact, the A$\De$Os $A_R$ can also be viewed as van Diejen A$\De$Os of a special type, as explained at the end of~Subsection~4.3.)

The van Diejen A$\De$Os satisfy kernel identities of the form
\be\label{KD}
(A_{D}(z)-A_{D}'(y))K(z,y)=cK(z,y),\ \ \ c\in\C,
\ee
Here, the constant $c$ depends on the convention for the additive constants
in the A$\De$Os, and  the prime signifies that the parameters of the
$y$-dependent A$\De$O differ (in general) from those of the $z$-dependent
one. Also, the kernel function $K(z,y)$ is a product of elliptic gamma
functions~\cite{Rui09}, cf.~also~\cite{KNS09} and \cite{Rui05}. A second main result of this paper is that when the A$\De$O~$A_D(z)$ is viewed as belonging to $\cV_2(\mu)$, then the same kernel function also serves as such for the A$\De$Os~$A_R(z)$ in $\cV_1(\mu)$: 
\be\label{KR}
(A_{R}(z)-A_{R}'(y))K(z,y)=0.
\ee
(The operators $A_D'(y)$ and $A_R'(y)$ also belong to~$\cV_2(\mu)$ and~$\cV_1(\mu)$, resp.)
Moreover, the kernel identity~\eqref{KD} can be viewed as a consequence of~\eqref{KR}.

We proceed with a more detailed sketch of the results and organization of this paper. Section~2 has a preparatory character. We introduce notation used throughout the paper and define various operators and spaces involving theta functions. In particular, Proposition~2.1 encodes a key description of a space of meromorphic functions satisfying certain quasi-periodicity and holomorphy restrictions. Specifically, this space can be viewed as a $4k$-dimensional vector space $V_k(\mu)$ of theta function ratios.

In Section~3 we obtain various insights into the structure of the subspaces $\cV_k(\mu)$. For the $\eta$-values satisfying the restriction~\eqref{etareq}, we arrive at the explicit characterization of the Sklyanin type A$\De$Os defined above via several lemmas. These lemmas contain additional information for general $\eta$-values. The characterization is encoded in Theorem~3.7, whereas Theorem~3.9 collects results concerning the $\eta$-dependence of the coefficients~$c^{(k)}_{k-2m}(\eta;z)$ in~\eqref{AVk}, including a remarkable quasi-periodicity feature.

In Subsection~4.1 we first collect notation and some results associated with the elliptic gamma function~$G(r,a_+,a_-;z)$ introduced in~\cite{Rui97}. With its real period $\pi/r$ normalized to 1, it serves as the building block for the kernel function $K(z,y)$. It is symmetric under interchange of $a_+$ and $a_-$ (`modular invariant'), and since $\tau $ and $\eta$ correspond to $ia_+$ and $ia_-/2$, a second copy of the Sklyanin algebra naturally arises. Hence we arrive at two (non-commuting) Sklyanin algebras $\cS_{\pm}$. These  algebras and their amalgamation were arived at before and studied in some detail by Spiridonov~\cite{Spi09}.

In Subsection~4.2 we obtain the kernel identities~\eqref{KR}. More precisely, in Theorem~4.1 we reformulate the identities so that they apply to the generating A$\De$Os of both Sklyanin algebras $\cS_+$ and $\cS_-$ at once. They give rise to two distinct order-reversing automorphisms of the algebras. 

Choosing special values for the parameter $\mu$, the kernel function becomes a product of theta functions. From the kernel identities it can then be deduced that the Sklyanin algebras~$\cS_{\pm}$ leave an associated finite-dimensional vector space of theta functions invariant.  For the $\mu$-choice $\exp(-2N\pi a_{+})$ with $N$ a nonnegative integer and for $\cS_{+}$ these spaces amount to the finite-dimensional modules studied by Sklyanin~\cite{Skl83} (cf.~also~\cite{Ros04}), but for~$\cS_{-}$ these modules are of a different type. Indeed, reverting to the single algebra $\cS_{\eta}$,  the building block of Sklyanin's modules  is the theta function $\theta_1(z|\tau)$ with quasi-period $\tau$, whereas for the latter modules it is the theta function $\theta_1(z|2\eta)$ with quasi-period $2\eta$. For the more general $\mu$-choices
\be
\mu=\exp (-2\pi(Ma_{-}+Na_{+})),\ \ \ M,N=1,2,3,\ldots,
\ee
there still exist finite-dimensional submodules of~$\cM_e$ for $\cS_{+}$ and $\cS_{-}$. It would be of interest to study these further. In particular, it is not obvious how these modules fit in the classification of finite-dimensional modules given by Smith and Staniszkis~\cite{SS93}.

Subsection~4.3 deals with the van Diejen A$\De$Os and their relation to the Sklyanin algebras $\cS_{\pm}$. The (modular generalization of the) identities~\eqref{KD} are obtained in Theorem~4.4, and the relation to the pertinent results of~\cite{Rui09} is established. We also discuss a remarkable consequence of the reinterpretation of the A$\De$Os $A_R$ as van Diejen A$\De$Os.

In Appendix~A we present the proof of a key lemma (Lemma~3.2), while in Appendix~B we focus on the connection between the Sklyanin relations~\eqref{Sk1}--\eqref{Sk2} and the relations between the generators~$A_R$ obtained in~\cite{Rai06}. Their equivalence for $\eta\notin \Lambda/2$ is explicitly established here for the first time, cf.~Theorem~B.4. We also discuss the state of affairs for $2\eta\in\Lambda$ (an $\eta$-choice that is usually excluded in the literature), and add a few remarks on the representations $\cD(\mu)$. 

%\newpage

\section{Preliminaries}
The above difference operators can all be defined in terms of theta functions. Various notations and conventions for theta functions can be found in the literature, and we need to specify our choice. We shall work at first with the theta functions used in particular by Rosengren in~\cite{Ros04}, but switch in Section~4 to a building block that is more convenient when working with the elliptic gamma function and van Diejen type A$\De$Os~\cite{Rui04}. 

Recalling our convention~\eqref{Lam} for the elliptic lattice $\Lambda$, the four Jacobi theta functions can be defined starting from
the building block
\be\label{th1}
\theta (z)\equiv \theta_1(z|\tau)=i\sum_{n\in\Z}(-)^nq^{(n-1/2)^2}\exp(i\pi (2n-1)z),\ \ \ q\equiv \exp(i\pi\tau),
\ee
which is odd, entire, satisfies
\be\label{aut}
\theta(z+1)=-\theta(z),\ \ \ \theta(z+\tau)=-q^{-1} \exp(-2i\pi z)\theta(z),
\ee
and has its zeros in the elliptic lattice points.
Specifically, the remaining theta functions are given by
\be\label{th3}
\theta_3(z)\equiv q^{1/4}\exp(i\pi z)\theta(z+1/2+\tau/2),
\ee
\be\label{th24}
\theta_2(z)\equiv \theta(z+1/2),\ \ \ \theta_4(z)\equiv\theta_3(z+1/2).
\ee
The structure constants of the Sklyanin algebra can now be expressed as
\be\label{Jlm}
J_{23}=\left( \frac{\theta_1\theta_2}{\theta_3\theta_4}\right)^2(\eta),\ \ 
J_{31}=-\left( \frac{\theta_1\theta_3}{\theta_2\theta_4}\right)^2(\eta),\ \ 
J_{12}=\left( \frac{\theta_1\theta_4}{\theta_2\theta_3}\right)^2(\eta),
\ee
from which their ellipticity in $\eta$ is readily checked. 

As has become customary, we use notation exemplified by
\be
\theta(a\pm b)=\theta(a+b)\theta(a-b),
\ee
\be\label{prnot}
\theta(a_1,\ldots,a_n)=\prod_{m=1}^n\theta(a_m),
\ee
\be
\theta(z+\vec{a})=\prod_{m=1}^n\theta(z+a_m),\ \ \ z\in\C,\ \ a\in\C^n.
\ee
Using the product notation~\eqref{prnot}, the duplication formula for $\theta(z)$ reads
\be\label{dup}
\theta(2z)=iq^{1/4}G^{-3}\theta(z,z+1/2,z+\tau/2,z-1/2-\tau/2),
\ee
where we have set
\be
G\equiv \prod_{m=1}^{\infty}(1-q^{2m}).
\ee

The coefficient function $f_R(z)$ of the A$\De$O $A_R$~\eqref{AR} is a meromorphic function with at most simple poles for $z\in \Lambda/2$ and no poles for $z\notin \Lambda/2$, which satisfies
\be\label{fR}
f_R(z+1)=f_R(z),\ \ \ f_R(z+\tau)=\mu f_R(z),\ \ \ \mu\in\Cs.
\ee
The vector space spanned by functions with these properties will be denoted by $V_1(\mu)$. In particular, it follows by using~\eqref{dup} and~\eqref{aut} that any function of the form
\be\label{fRth}
f(a,\nu;z)\equiv\theta(z+\vec{a}-\nu)/\theta(2z),\ \ a\in\C^4, \ \ \nu\in\C,
\ee
where
\be\label{anu}
\sum_{i=1}^4a_i=0,\ \ \ \exp(8i\pi\nu)=\mu,\ \ \re \nu\in[0,1/4),
\ee
belongs to $V_1(\mu)$.
(In fact, all $f_R\in V_1(\mu)$ are multiples of a function of the form~\eqref{fRth}--\eqref{anu}, and $V_1(\mu)$  is four-dimensional, cf.~Proposition~2.1 below.) 

The A$\De$Os $D_t$ representing the Sklyanin generators~$S_t$ are related to the A$\De$Os
\be\label{Apar}
A(a,\nu)\equiv f(a,\nu;z)\exp(\eta \pz)+(z\to -z),
\ee
 as follows:
\be\label{D0}
D_0=iq^{1/4}G^{-3}\theta(\eta)A((0,1,\tau,-1-\tau)/2,\nu),
\ee
\be
D_1=-iq^{1/4}G^{-3}\theta(\eta+1/2)A((1,-1,1+2\tau,-1-2\tau)/4,\nu),
\ee
\be
D_2=iq^{1/4}G^{-3}\exp(i\pi\eta)\theta(\eta+1/2+\tau/2)A((1+\tau,1-\tau,-1+\tau,-1-\tau)/4,\nu),
\ee
\be\label{D3}
D_3=iq^{1/4}G^{-3}\exp(i\pi\eta)\theta(\eta+\tau/2)A((\tau,-\tau,2+\tau,-2-\tau)/4,\nu).
\ee
Conversely, in the appendix of Rosengren's paper~\cite{Ros04} an explicit formula can be found for  $A(a,\nu)$ as a linear combination of the $D_t$, which we have no occasion to use here. We come back to the resulting representations $\cD(\mu)$ of $\cS_\eta$ in Appendix~B, where we also clarify the connection between the Sklyanin relations and the relations between the A$\De$Os $A(a,\nu)$ that were obtained by the first-named author in~\cite{Rai06}.

In the next section we shall see that  a consideration of $k$-fold products of the A$\De$Os $A(a,\nu)$ leads to a vector space $V_k(\mu)$ consisting of meromorphic functions $g(z)$ satisfying
\be
g(z+1)=g(z),\ \ \ \ g(z+\tau)=\mu^kg(z),
\ee
and such that the product
\be\label{holreq}
g(z)P_k(\eta,z),
\ee
with
\be\label{defPk}
P_k(\eta,z)\equiv \prod_{\ell=0}^{k-1}\theta (2z+2\ell\eta),\ \ \ \eta\in\C,
\ee
is holomorphic. The following proposition yields a more explicit picture of this space. 

\begin{proposition}
The vector space $V_k(\mu)$ is $4k$-dimensional and any $g\in V_k(\mu)$ can be written as
\be\label{Vkform}
g(z)=c\theta(z+\vec{a}-\nu)/P_k(\eta,z),\ \ \ c\in\C,\ \ \ a\in\C^{4k},
\ee
with
\be\label{asum}
\sum_{i=1}^{4k}a_i=2k(k-1)\eta,\ \ \ \exp(8i\pi\nu)=\mu,\ \ \re \nu\in[0,1/4).
\ee
\end{proposition}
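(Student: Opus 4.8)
The plan is to trade the meromorphy constraint for a holomorphy one by clearing the poles. Since every $g\in V_k(\mu)$ has its poles confined to the zeros of $P_k(\eta,z)$ with order bounded so that the product \eqref{holreq} is entire, I would first set $h(z)\equiv g(z)P_k(\eta,z)$ and record how the automorphy factor of $h$ is governed by that of $P_k$. Applying the quasi-periodicity \eqref{aut} of $\theta$ to each factor $\theta(2z+2\ell\eta)$ in \eqref{defPk} gives $P_k(\eta,z+1)=P_k(\eta,z)$ and
\be
P_k(\eta,z+\tau)=q^{-4k}\exp(-8i\pi k z)\exp(-4i\pi k(k-1)\eta)P_k(\eta,z).
\ee
Combined with the defining relations $g(z+1)=g(z)$, $g(z+\tau)=\mu^k g(z)$, this shows that $g\mapsto h$ is a linear isomorphism of $V_k(\mu)$ onto the space $\Theta$ of entire functions satisfying $h(z+1)=h(z)$ and $h(z+\tau)=A\exp(-8i\pi k z)h(z)$, with the constant $A=\mu^k q^{-4k}\exp(-4i\pi k(k-1)\eta)$ fixed. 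Hence it suffices to analyze $\Theta$.

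Second, I would establish $\dim\Theta=4k$. Writing the period-1 function $h$ as a Fourier series $h(z)=\sum_{m\in\Z}c_m\exp(2i\pi m z)$ and inserting it into the quasi-periodicity relation under $z\to z+\tau$ yields the recursion $c_{m+4k}=A^{-1}q^{2m}c_m$. Thus the Fourier coefficients split into $4k$ residue classes modulo $4k$, each class being freely determined by a single coefficient, and the resulting $q^{m^2}$-type decay guarantees that every such choice produces a genuine entire function. This gives $\dim\Theta=4k$, and therefore $\dim V_k(\mu)=4k$. (Equivalently, $\Theta$ is the space of global sections of a degree-$4k$ line bundle on $\C/\Lambda$, whose dimension is $4k$ by Riemann--Roch.)

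Third, to produce the explicit form \eqref{Vkform} I would use that any nonzero $h\in\Theta$, being a theta function of even weight $4k$ (so that a product of $\theta$-factors is genuinely 1-periodic), has exactly $4k$ zeros $w_1,\dots,w_{4k}$ in a fundamental domain (counted with multiplicity) and factors as $h(z)=C\prod_{i=1}^{4k}\theta(z-w_i)$, the sum $\sum_i w_i$ being pinned down modulo $\Lambda$ by the automorphy factor $A\exp(-8i\pi k z)$. Setting $w_i=\nu-a_i$ and choosing $\nu$ with $\exp(8i\pi\nu)=\mu$, $\re\nu\in[0,1/4)$, I would match the $\tau$-automorphy of $\prod_i\theta(z+a_i-\nu)$ against that of $h$; this comparison forces $\exp\big(-2i\pi(\sum_i a_i-2k(k-1)\eta)\big)=1$, i.e.\ $\sum_i a_i\equiv 2k(k-1)\eta\pmod\Z$. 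Shifting a single $a_i$ by the resulting integer (which only alters $h$ by a sign absorbed into $C$) then enforces the exact normalization $\sum_i a_i=2k(k-1)\eta$ in \eqref{asum}, and dividing by $P_k(\eta,z)$ recovers \eqref{Vkform}.

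The routine parts are the automorphy computations and the Fourier recursion; the one point demanding care is the bookkeeping of the additive constant, namely verifying that the weight, the multiplier constant $A$, the normalization $\exp(8i\pi\nu)=\mu$, and the zero-sum constraint $\sum_i a_i=2k(k-1)\eta$ are mutually consistent, so that the parametrization by $(\vec{a},\nu)$ in \eqref{Vkform}--\eqref{asum} is both well defined and exhaustive. I expect the main obstacle to be precisely this matching of quasi-periodicity constants --- in particular confirming that the $\eta$-dependent factor $\exp(-4i\pi k(k-1)\eta)$ coming from $P_k$ is exactly accounted for by the prescribed zero-sum $2k(k-1)\eta$, leaving only the $\mu$-dependence to be carried by $\nu$.
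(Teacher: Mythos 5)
Your proof is correct, but it takes a genuinely different route from the paper's. The paper first exhibits one explicit element $f_0(z)=\theta(z+\vec{a}^{(0)}-\nu)/P_k(\eta,z)$ of $V_k(\mu)$ with pairwise incongruent $a^{(0)}_j$, then observes that $g/f_0$ is an elliptic function with at most simple poles at the $4k$ points $\nu-a^{(0)}_j$; the dimension count comes from the classical fact that such elliptic functions form a $4k$-dimensional space ($4k-1$ free residues plus a constant), and the form \eqref{Vkform} follows from the standard factorization of an elliptic function as a ratio of theta products. You instead clear denominators, identify $V_k(\mu)$ with the space $\Theta$ of entire $1$-periodic functions with $\tau$-multiplier $A\exp(-8i\pi kz)$, get the dimension $4k$ from the Fourier recursion $c_{m+4k}=A^{-1}q^{2m}c_m$, and then invoke the zero-factorization of order-$4k$ theta functions. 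Your automorphy computation for $P_k$ and the matching of $\exp(-4i\pi k(k-1)\eta)$ against the zero-sum $2k(k-1)\eta$ are correct, and this is exactly the consistency check the paper buries in its unproved claim that $f_0\in V_k(\mu)$ is ``easily verified.'' What your route buys is self-containedness (no reference element needed, and the dimension is read off directly from Fourier coefficients); what the paper's route buys is that the factorized form \eqref{Vkform} and the simple-pole/residue structure, which are reused later in Section~3, appear immediately. The only point you gloss over is that the quotient $h(z)/\prod_i\theta(z-w_i)$ is a priori of the form $C\exp(2i\pi mz)$ rather than a constant, so the representatives $w_i$ of the zeros must first be adjusted by multiples of $\tau$ before the final integer shift of a single $a_i$; this is the standard bookkeeping behind your phrase ``pinned down modulo $\Lambda$'' and does not affect the validity of the argument.
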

\begin{proof}
We fix $4k$ numbers $a^{(0)}_1,\ldots,a^{(0)}_{4k}$ that are pairwise incongruent and satisfy
\be
\sum_{i=1}^{4k}a^{(0)}_i=2k(k-1)\eta.
\ee
Then it is easily verified that
\be
f_0(z)\equiv \theta(z+\vec{a}^{(0)}-\nu)/P_k(\eta,z)
\ee
belongs to $V_k(\mu)$. Now let $g(z)\in V_k(\mu)$ and consider the ratio $g(z)/f_0(z)$. This is an elliptic function with at most simple poles at the $4k$ pairwise incongruent numbers $\nu-a^{(0)}_1,\ldots,\nu-a^{(0)}_{4k}$. The space of elliptic functions with this property is $4k$-dimensional, since $4k-1$ residues and a constant can be freely chosen. Moreover, any function in this space can be factorized as
\be\label{ratio}
c\prod_{j=1}^{4k}\frac{\theta(z+z_j)}{\theta(z+a_j^{(0)}-\nu)},\ \ \ c\in\C,\ \ \ \sum_{j=1}^{4k} z_j\equiv 2k(k-1)\eta -4k\nu \pmod 1.
\ee
Setting $\tilde{a}_j:= \nu+z_j$, we need only shift one of the components of $\tilde{a}$ by a suitable integer to obtain a vector $a$ satisfying~\eqref{asum}. Then~\eqref{ratio} becomes
\be
c\prod_{j=1}^{4k}\frac{\theta(z+a_j-\nu)}{\theta(z+a_j^{(0)}-\nu)},
\ee
and the assertions easily follow.
\end{proof}

%\newpage

\section{A$\De$Os of Sklyanin type}
In this section we aim to characterize the A$\De$Os that belong to the representation $\cD(\mu)$ of the Sklyanin algebra~$\cS_{\eta}$. We start from an A$\De$O of the general form~\eqref{Agen} and obtain first necessary conditions for it to belong to $\cD(\mu)$ with the $\eta$-constraint~\eqref{etareq} in effect. To this end we begin by deriving features shared by all A$\De$Os~$A(a,\nu)$ (given by~\eqref{fRth}--\eqref{Apar}). First we introduce
\be
\omega_0=0,\ \ \omega_1=1/2,\ \ \omega_2=1/2+\tau/2,\ \ \omega_3=\tau/2,
\ee
\be
\lambda_0=\lambda_1= 0,\ \ 
\lambda_2=\lambda_3= 8i\pi\nu.
\ee
The properties~\eqref{fR} of the functions $f(a,\nu;z)$ given by~\eqref{fRth}--\eqref{anu} can be rewritten as
\be\label{faut}
f(a,\nu;z+\omega_t)=\exp(\lambda_t)f(a,\nu;z-\omega_t),\ \ \ t=0,1,2,3.
\ee
From this we readily deduce that the residues of the functions $f(a,\nu;\pm z)$ at the simple pole $z=\omega_t$ are related by
\be\label{fres}
\res_{z=\omega_t}f(a,\nu;z)=-\exp(\lambda_t)\res_{z=\omega_t}f(a,\nu;-z).
\ee

Next, we define four vector spaces~$\cM_t$, consisting of meromorphic functions $g(z)$ that are regular at all points $\omega_t+k\eta$, $k\in\Z$, and that satisfy
\be\label{gaut}
g(\omega_t-z)=\exp(\lambda_tz/\eta)g(\omega_t+z),\ \ \ t=0,1,2,3.
\ee
In particular, $\cM_0$ consists of the even meromorphic functions that have no poles at any integer multiple of $\eta$. The other three spaces are `equally large', in the sense that the maps
\be\label{Phit}
\Phi_t\, :\, \cM_t\to \cM_0,\ \ \ g(z)\mapsto h(z)=\exp (\lambda_t z/2\eta)g(z+\omega_t),\ \ \ t=1,2,3,
\ee
are easily seen to be bijections. Our use of the spaces $\cM_t$ is tied to the restriction~\eqref{etareq} on~$\eta$, which encompasses the $k$-dependent restrictions~\eqref{etark}.
We are now prepared for the following lemma.

\begin{lemma}
Assuming~\eqref{etareq}, the A$\De$Os $A(a,\nu)$ leave the above spaces invariant:
\be
A(a,\nu)\cM_t\subset \cM_t,\ \ t=0,1,2,3.
\ee
\end{lemma}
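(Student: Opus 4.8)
The plan is to show that $h:=A(a,\nu)g$ satisfies the two defining requirements of $\cM_t$: the quasi-symmetry \eqref{gaut} and regularity at every point $\omega_t+k\eta$, $k\in\Z$. Writing $f(z):=f(a,\nu;z)$, we have $h(z)=f(z)g(z+\eta)+f(-z)g(z-\eta)$ by \eqref{Apar}, which is manifestly meromorphic, so only these two conditions need to be verified.

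For the symmetry I would evaluate $h$ at $\omega_t+z$ and at $\omega_t-z$ and compare. In $h(\omega_t-z)$ the $f$-factors have arguments $\pm(\omega_t-z)$, while the $g$-factors have arguments $\omega_t-(z\mp\eta)$. Applying \eqref{faut} rewrites each $f$-factor as $\exp(\pm\lambda_t)$ times an $f$-value with argument $\mp(\omega_t+z)$, and applying \eqref{gaut} with $z$ replaced by $z\mp\eta$ rewrites each $g$-factor as $\exp(\lambda_t(z\mp\eta)/\eta)\,g(\omega_t+z\mp\eta)$. In both terms the accumulated exponential then collapses to $\exp(\lambda_t z/\eta)$, the $\pm1$ coming from the $f$-factor cancelling the $\mp\eta/\eta$ inside the $g$-exponent, and the remaining bracket is precisely $h(\omega_t+z)$. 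Thus $h(\omega_t-z)=\exp(\lambda_t z/\eta)\,h(\omega_t+z)$. This step is pure bookkeeping of exponents and presents no obstruction.

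For regularity I would first locate the poles of $f(z)$ and $f(-z)$. Since $f=\theta(z+\vec a-\nu)/\theta(2z)$ by \eqref{fRth}, its poles are simple and lie in $\Lambda/2=\{\omega_s+\lambda\}$. A point $\omega_t+k\eta$ lies in $\Lambda/2$ iff $k\eta\in\tfrac12\Lambda$, which for $k\neq0$ would force $\eta\in\Q\Lambda$, contradicting \eqref{etareq}; the same applies to $-(\omega_t+k\eta)$. Hence for $k\neq0$ both $f(z)$ and $f(-z)$ are regular at $z=\omega_t+k\eta$, while $g(z\pm\eta)$ is there evaluated at $\omega_t+(k\pm1)\eta$, where $g$ is regular by hypothesis; so $h$ is regular at every $\omega_t+k\eta$ with $k\neq0$.

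The remaining, and only delicate, point is the resonance $z=\omega_t$ (i.e.\ $k=0$), where both $f(z)$ and $f(-z)$ have simple poles while $g(\omega_t\pm\eta)$ stay finite; thus $h$ has at most a simple pole there, and I would show its residue vanishes. Writing $r=\res_{z=\omega_t}f(z)$, relation \eqref{fres} gives $\res_{z=\omega_t}f(-z)=-\exp(-\lambda_t)r$, so
\[
\res_{z=\omega_t}h = r\,g(\omega_t+\eta)-\exp(-\lambda_t)\,r\,g(\omega_t-\eta).
\]
Applying \eqref{gaut} with $z=\eta$ yields $g(\omega_t-\eta)=\exp(\lambda_t)g(\omega_t+\eta)$, and the two terms cancel. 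This cancellation, in which the residue relation \eqref{fres} for $f$ is matched exactly against the quasi-symmetry \eqref{gaut} for $g$, is the crux of the argument; everything else is either immediate from \eqref{etareq} or routine. Collecting the three steps gives $h\in\cM_t$, as claimed.
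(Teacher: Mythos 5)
Your proposal is correct and follows essentially the same route as the paper: regularity away from $\omega_t$ via the constraint \eqref{etareq}, residue cancellation at $z=\omega_t$ by combining \eqref{fres} with \eqref{gaut} at $z=\eta$, and the automorphy of $A(a,\nu)g$ from \eqref{faut} together with \eqref{gaut}. The paper's version is merely terser, leaving the exponent bookkeeping and the residue computation to the reader.
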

\begin{proof}
Letting $g\in\cM_t$, consider the function
\be
(A(a,\nu)g)(z)=f(a,\nu;z)g(z+\eta)+f(a,\nu;-z)g(z-\eta).
\ee
Since $g\in\cM_t$, the functions $g(z\pm \eta)$ have no poles for $z-\omega_t\in \eta\Z$. In view of the $\eta$-constraint~\eqref{etareq}, the functions $f(a,\nu;\pm z)$ have no poles for $z-\omega_t\in \eta\Z^*$, whereas they have at most a simple pole at $z=\omega_t$. However, $(A(a,\nu)g)(z)$ has no pole at $z=\omega_t$, since the residues of the two terms on the right-hand side cancel due to~\eqref{fres} and~\eqref{gaut}.

It remains to show that the function~$A(a,\nu)g$ has the automorphy property~\eqref{gaut}. This is easily verified by combining this property for $g$ with~\eqref{faut}.
\end{proof}

Since any A$\De$O in $\cV_1(\mu)$ is of the form $cA(a,\nu)$ with $c\in\C$ and $a,\nu$ satisfying~\eqref{anu}, it follows from this lemma that with~\eqref{etareq} in force the four spaces $\cM_t$ are left invariant by all of the A$\De$Os belonging to the representation~$\cD(\mu)$ of $\cS_{\eta}$. In the next lemma we characterize A$\De$Os of the form~\eqref{Agen} that have this property. Its proof is somewhat long and technical, so we have relegated it to Appendix~A. 
\begin{lemma}
Assume $\eta$ satisfies \eqref{etareq}. Let
\be
A=\sum_{j=-k}^{k}c_j(z)\exp(j\eta \pz),
\ee
where $k$ is a positive integer and the coefficients are meromorphic functions. Then we have
\be\label{AMt}
A\cM_t\subset \cM_t,\ \ t=0,1,2,3,
\ee
if and only if the coefficients~$c_j(z)$, $|j|\le k$, have the following three properties: 

\noindent
(i) They satisfy the symmetries
\be\label{i}
c_j(z+\omega_t)=\exp(j\lambda_t)c_{-j}(-z+\omega_t),\ \ \ t=0,1,2,3;
\ee

\noindent
(ii) At any point of the form 
\be\label{ii}
z(t,\ell):=\omega_t-\ell\eta,\ \ \  \ell\in\Z,
\ee
they have at most simple poles with residues $r_j(t,\ell)$;

\noindent
(iii) These residues satisfy
\be\label{iiia}
r_j(t,j)=0,\ \ \ r_j(t,\ell)= 0,\ \ \ |2\ell-j|>k,
\ee
\be\label{iiib}
r_j(t,\ell)=-\exp((j-\ell)\lambda_t )r_{2\ell -j}(t,\ell) ,\ \ \ |2\ell-j|\le k.
\ee
\end{lemma}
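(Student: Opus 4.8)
The plan is to read the claimed equivalence as two separate requirements on $Ag$ for $g\in\cM_t$: a twisted reflection (automorphy) condition of the type~\eqref{gaut}, and a regularity condition at the points $\omega_t+\Z\eta$. I expect~\eqref{i} to encode the former, and the simple-pole property~(ii) together with~\eqref{iiia}--\eqref{iiib} to encode the latter. Since both directions rest on producing test functions in $\cM_t$ with prescribed local data, I would first record such a construction: by the bijections $\Phi_t$ of~\eqref{Phit} it suffices to build, in $\cM_0$, even functions taking arbitrary prescribed values and derivatives at finitely many of the distinct points $m\eta$, and even entire functions $\phi(z^2)$ produced by Hermite interpolation in the variable $z^2$ do this. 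The hypothesis $\eta\notin\Q\Lambda$ is what guarantees that the points entering each interpolation are genuinely distinct and disjoint from the prescribed-regularity lattice, so no constraint of $\cM_t$ is violated.

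For the automorphy part I would compute $(Ag)(\omega_t-z)$ by inserting~\eqref{gaut} and reindexing $j\mapsto-j$, then compare with $\exp(\lambda_t z/\eta)(Ag)(\omega_t+z)$; both are finite sums $\sum_j[\cdots]\,g(\omega_t+z+j\eta)$. For generic $z$ the $2(2k+1)$ points $\omega_t\pm(z+j\eta)$ are pairwise distinct, so~\eqref{gaut} couples them only in conjugate pairs and I may prescribe $g$ freely at the $2k+1$ points $\omega_t+z+j\eta$; equality for all such $g$ then forces the bracketed differences to vanish termwise, which is exactly~\eqref{i}. Conversely~\eqref{i} makes the two expressions identical for every $g$, recovering~\eqref{gaut} for $Ag$. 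This is the routine half.

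For the regularity part I would localize at $z(t,\ell)=\omega_t-\ell\eta$. As $g$ is regular on $\omega_t+\Z\eta$, each factor $g(z+j\eta)$ is holomorphic there, so all poles of $Ag$ come from the $c_j$, their size being governed by the Laurent data of $c_j$ against the Taylor data of $g$ at $\omega_t+(j-\ell)\eta$. Writing the residue of $Ag$ at $z(t,\ell)$ as $\sum_j r_j(t,\ell)\,g(\omega_t+(j-\ell)\eta)$, folding the negative-index values into positive ones through~\eqref{gaut}, and reading off the coefficient of each independent value $g(\omega_t+p\eta)$, $p\ge0$, I would obtain~\eqref{iiia}--\eqref{iiib}: the $p=0$ term gives $r_j(t,j)=0$, the pairing $j\leftrightarrow2\ell-j$ gives the exponential residue relation, and an out-of-range partner forces the isolated residue to vanish. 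Run backwards, the same computation shows (ii)--(iii) make every residue vanish, so $Ag$ is regular on $\omega_t+\Z\eta$.

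I expect the real obstacle to be the first half of the regularity step, namely that the poles are at most simple (property~(ii)). Because~\eqref{gaut} ties $g$ near $\omega_t+(j-\ell)\eta$ to $g$ near the conjugate point $\omega_t-(j-\ell)\eta$, the coefficient $c_j$ and its partner $c_{2\ell-j}$ feed the same pole of $Ag$, so the leading Laurent coefficient yields only a linear relation between their top coefficients, not their vanishing. To separate them I would descend one Laurent order and use that the derivative $g'$ at $\omega_t+(j-\ell)\eta$ is prescribable independently of the value there; the coefficient of $g'$ in the next Laurent coefficient isolates the difference of the two top coefficients, which with the leading relation forces both to zero, and an induction on the pole order then removes all orders $\ge2$. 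Tracking these derivative contributions under the twisted reflection, together with the exponential factors $\exp(j\lambda_t)$ and the simultaneous contributions of the other, simply-poled $c_{j'}$, is the delicate bookkeeping that presumably motivates deferring the proof to Appendix~A.
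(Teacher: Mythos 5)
Your proposal follows essentially the same route as the paper's Appendix~A proof: reduction to $\cM_0$ via the bijections $\Phi_t$, test functions with prescribed local data to isolate individual terms (the paper leaves their construction implicit where you invoke Hermite interpolation in $z^2$), termwise comparison for the symmetry~\eqref{i}, the folded residue sum for~\eqref{iiia}--\eqref{iiib}, and—for the simplicity of the poles—exactly the paper's device of playing the value $a$ of $g$ against its derivative $b$ at the conjugate pair of points to first force the leading Laurent coefficients of $c_j$ and $c_{2\ell-j}$ to cancel and then to contradict their nonvanishing when the pole order exceeds one. The argument is sound and no essential step is missing.
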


 Note that it follows from~\eqref{faut}--\eqref{fres} that the coefficient properties in the lemma are satisfied for the special case $A=A(a,\nu)$. Note also that when we set $\ell =j$ in~\eqref{iiib}, then it follows that $r_j(t,j)=0$. Likewise, the second vanishing property in~\eqref{iiia} can be viewed as a consequence of~\eqref{iiib}, provided we omit the restriction on $\ell$ and put $c_m(z)\equiv 0$ for $|m|>k$.
 
 Next, we obtain information on the space $\cV_2(\mu)$ of A$\De$Os that are linear combinations of twofold products of the A$\De$Os in $\cV_1(\mu)$. Letting
 \be\label{Af}
A(f)=f(z)\exp(\eta\partial_z)+(z\to -z),\ \ \ f\in V_1(\mu),
\ee
the product $A(f_1)A(f_2)$ is of the form 
\be\label{2form}
c_2(z)\exp(2\eta\partial_z)+c_0(z)+c_{-2}(z)\exp(-2\eta\partial_z),
\ee
with
\be\label{c0}
c_{\pm 2}(z)= f_1(\pm z)f_2(\pm z+\eta),\ \  c_0(z)= f_1(z)f_2(-z-\eta)+(z\to -z).
\ee
From this it is plain that the coefficients are meromorphic 1-periodic functions satisfying
\be
c_{\pm 2}(z+\tau)=\mu^{\pm 2}c_{\pm 2}(z),\ \ \ c_0(z+\tau)=c_0(z).
\ee
Moreover, assuming $\eta\notin\Lambda/2$, the coefficient $c_2(z)$ has at most simple poles for $z\in \Lambda/2$ and $z\in \Lambda/2-\eta$, and the coefficient properties (i)--(iii) in the previous lemma are easily verified directly.

More generally, it is plain that for all $\eta\in\C^{*}$ the general A$\De$O in~$\cV_2(\mu)$ is of the form~\eqref{2form}, with $c_2(z)$ in the space~$V_2(\mu)$ (defined above Prop.~2.1) and~$c_0(z)$ an elliptic function. 
On the other hand, it is not obvious, but true that any coefficient $c_2\in V_2(\mu)$ arises by taking suitable linear combinations of twofold products. For the special case $c_2(z)=0$ and $\eta$ satisfying~\eqref{etareq}, it follows from the previous lemma that $c_0(z)$ has no poles and hence is constant. This constant need not vanish, however. Indeed, constants arise for any $\eta\in\C^*$. In the following lemma we prove these two assertions.

\begin{lemma}
Let $\eta\in\C^*$. Then the constants form a subspace of~$\cV_2(\mu)$. Moreover, for any~$c_2\in V_2(\mu)$ there exists an A$\De$O in~$\cV_2(\mu)$ of the form~\eqref{2form}.
\end{lemma}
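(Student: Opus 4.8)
The plan is to prove the two assertions separately, since they address complementary features of the space $\cV_2(\mu)$. For the first assertion---that nonzero constants arise in $\cV_2(\mu)$ for every $\eta\in\Cs$---I would exhibit an explicit linear combination of twofold products $A(f_1)A(f_2)$ whose coefficients $c_2$ and $c_{-2}$ vanish identically while $c_0$ reduces to a nonzero constant. Using the formulas~\eqref{c0}, the condition $c_{\pm2}(z)\equiv 0$ forces relations among the coefficient functions, and the remaining term $c_0(z)=f_1(z)f_2(-z-\eta)+f_1(-z)f_2(z-\eta)$ must then be independent of $z$. The natural candidate is to take $A(f_1)$ and $A(f_2)$ to be (related to) the generators $D_t$, and to use the fact that the Sklyanin relations~\eqref{Sk2}---specifically a commutator $[S_k,S_l]_-=i[S_0,S_m]_+$---or an analogous antisymmetrized product produces a pure shift-zero term. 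More concretely, I expect that a suitable antisymmetric combination $A(f_1)A(f_2)-A(f_2)A(f_1)$, or a symmetric combination tuned to the structure constants~\eqref{Jlm}, collapses the $\exp(\pm2\eta\pz)$ terms and leaves behind a $z$-independent multiple of the identity; its nonvanishing can be checked by evaluating at a convenient point.

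For the second assertion---that every $c_2\in V_2(\mu)$ is realized as the top coefficient of some A$\De$O in $\cV_2(\mu)$---the key input is that $V_2(\mu)$ is $8$-dimensional by Proposition~2.1, whereas twofold products of the $4$-dimensional family $V_1(\mu)$ a priori span only the image of a bilinear map. The strategy is to show that the top coefficients $c_2(z)=f_1(z)f_2(z+\eta)$, as $f_1,f_2$ range over $V_1(\mu)$, span all of $V_2(\mu)$. By Proposition~2.1 (at level $k=1$), every $f\in V_1(\mu)$ is a scalar multiple of $\theta(z+\vec a-\nu)/\theta(2z)$ with $\sum a_i=0$, so the product $f_1(z)f_2(z+\eta)$ has the shape $\theta(z+\vec b-\nu)/[\theta(2z)\theta(2z+2\eta)]=\theta(z+\vec b-\nu)/P_2(\eta,z)$, which is manifestly of the canonical form~\eqref{Vkform} for $k=2$ with the correct constraint $\sum b_i=8\eta=2\cdot2\cdot1\cdot\eta$. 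The task is thus to show these particular products span the full $8$-dimensional space.

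I would carry out the spanning argument by a dimension count on the zero-sets of the numerator theta products. Writing $f_1=\theta(z+\vec a-\nu)/\theta(2z)$ and $f_2=\theta(z+\vec a'-\nu)/\theta(2z)$, the product $f_1(z)f_2(z+\eta)$ has numerator $\theta(z+\vec a-\nu)\,\theta(z+\eta+\vec a'-\nu)$, an $8$-fold theta product whose eight zero-parameters are $\nu-\vec a$ and $\nu-\eta-\vec a'$ subject only to $\sum a_i=\sum a_i'=0$. Since the four shifted zeros $\nu-\eta-a_i'$ range (with their constrained sum fixed) over a three-parameter family disjoint in character from the four zeros $\nu-a_i$, the attainable $8$-tuples $\vec b$ fill out a generic slice of the constraint surface $\sum b_i=8\eta$. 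An element of $V_2(\mu)$ that is orthogonal (in the sense of vanishing) against all such products would force a theta identity holding for all admissible parameter choices, which I expect to rule out directly. The cleanest route may be to fix a basis of $V_2(\mu)$ indexed by residue data at the eight poles in $\Lambda/2\cup(\Lambda/2-\eta)$ and verify that the products already realize all eight independent residue configurations.

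The main obstacle I anticipate is the second assertion, and specifically the spanning claim: twofold products form only a quadratic cone inside the $8$-dimensional target, and it is not automatic that their \emph{linear span} exhausts $V_2(\mu)$ rather than some proper subspace. The delicate point is to produce enough products with linearly independent top coefficients---eight of them---while respecting the single sum-constraint on the zeros and the genericity hypothesis $\eta\notin\Lambda/2$, which is needed precisely to keep the eight poles distinct so that the residues separate. I would handle this either by an explicit eight-element list of products chosen to have triangular residue data at the poles $z\in\Lambda/2-\eta$, or by invoking the nondegeneracy of the relevant theta-function pairing; the first assertion, by contrast, should follow from a short explicit computation once the right antisymmetric combination is identified.
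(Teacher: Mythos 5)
There are genuine gaps in both halves. For the constants, the mechanism you propose does not work: a commutator $A(f_1)A(f_2)-A(f_2)A(f_1)$ does not collapse the shift terms (its coefficient of $\exp(2\eta\pz)$ is $f_1(z)f_2(z+\eta)-f_2(z)f_1(z+\eta)$, which is generically nonzero), and the Sklyanin relation~\eqref{Sk2} only says that such a commutator equals an anticommutator of other generators, i.e.\ another genuine second-order operator, not a constant. The combination that does produce a constant is the Casimir $\sum_t S_t^2$ (cf.\ Appendix~B), but you never identify it, and one would still have to verify that its image is a \emph{nonzero} constant. The paper instead takes a difference of two products $A(f)A(g)-A(f')A(g')$ whose coefficient functions~\eqref{fag}--\eqref{fpagp} are engineered so that $f(z)g(z+\eta)=f'(z)g'(z+\eta)$ identically, which kills $c_{\pm2}$ by construction, and then computes $c_0-c_0'$ explicitly as a product $-k_1k_3$ of theta constants that can be made nonzero by choosing the free parameters.

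For the spanning claim you correctly isolate the hard point, namely that the linear span of the quadratic cone of products might be a proper subspace, but you leave it unresolved: ``I would carry out'', ``I expect'', ``the cleanest route may be'' are plans, not arguments. Moreover, your residue-separation scheme needs the eight poles in $\Lambda/2\cup(\Lambda/2-\eta)$ to be distinct, and you treat $\eta\notin\Lambda/2$ as a standing hypothesis; the lemma, however, asserts the result for \emph{every} $\eta\in\Cs$, including $\eta\in\Lambda/4$ and $\eta\in\Lambda/2$. The paper replaces the spanning question for single products by the surjectivity of the linear map $(g_1,g_2)\mapsto f_1(z)g_1(z+\eta)-f_2(z)g_2(z+\eta)$ from the $8$-dimensional $V_1(\mu)\oplus V_1(\mu)$ onto the $8$-dimensional $V_2(\mu)$, proving injectivity by a zero-counting argument when $\eta\notin\Lambda/4$; for $\eta\in\Lambda/4$ the map acquires a $1$-dimensional kernel and a delicate case analysis is required (the $7$-dimensional images $V(z_0)$ and $V(0)$, and the exceptional zero-free functions $e_{2n}$ when $\mu^2=\exp(2\pi in\tau)$). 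None of this degenerate-parameter analysis is anticipated in your proposal. (Minor: the sum constraint for $k=2$ is $\sum b_i=4\eta$, not $8\eta$.)
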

\begin{proof}
In order to show $\C\subset\cV_2(\mu)$, we define four functions
\be
e_j(z)=\theta(z+b_j)\theta(z-\alpha -b_j),\ \ e_{j+2}(z)=\theta(z+b_{j+2})\theta(z-\beta -b_{j+2}),\ \ j=1,2,
\ee
where the constants $b_1,\ldots,b_4$ are arbitrary and
\be\label{abres}
\alpha + \beta=4\nu+2\eta.
\ee
This entails that the four functions
\be\label{fag}
f(z)=e_1(z)e_3(z+\eta)/\theta(2z),\ \ \ g(z)=e_4(z)e_2(z+\eta)/\theta(2z),
\ee
and 
\be\label{fpagp}
f'(z)=e_1(z)e_4(z+\eta)/\theta(2z),\ \ \ g'(z)=e_3(z)e_2(z+\eta)/\theta(2z),
\ee
belong to $V_1(\mu)$, and that we have an equality
\be\label{fg2}
f(z)g(z+\eta)=f'(z)g'(z+\eta).
\ee
Hence we have
\be
A(f)A(g)-A(f')A(g')=c_0(z)-c_0'(z),
\ee
where (cf.~\eqref{2form}--\eqref{c0})
\be
c_0(z)= f(z)g(-z-\eta)+(z\to -z),\ \ \ c_0'(z)= f'(z)g'(-z-\eta)+(z\to -z).
\ee

Next, we introduce two functions
\be
F_j(z)=e_j(z)e_{j+1}(-z)/\theta(2z), \ \ \ j=1,3,
\ee
which are elliptic with at most simple poles for $z\in\Lambda/2$. Thus we have
\be
F_j(z)+F_j(-z)= k_j,\ \ j=1,3,
\ee
with some constants $k_1,k_3$. (Indeed, the residues at the four points $z=\omega_t$ vanish.) 
Now a straightforward calculation yields
\be
c_0(z)-c_0'(z)  =   -F_1(z)[F_3(z+\eta)+F_3(-z-\eta)] -F_1(-z)[F_3(-z+\eta)+F_3(z-\eta)].
\ee
Hence we have
\be
A(f)A(g)-A(f')A(g')=-k_1k_3.
\ee

We proceed to show that parameter choices exist such that the constants $k_1$ and $k_3$ do not vanish. First, we note that they can be written
\be
 k_1 = F_1(b_1)+F_1(-b_1) = F_1(b_1)
   = \theta(-\alpha)\theta(b_2-b_1)\theta(-\alpha-b_1-b_2),
\ee
\be
k_3 = F_3(b_3)+F_3(-b_3) = F_3(b_3)
   = \theta(-\beta)\theta(b_4-b_3)\theta(-\beta-b_3-b_4).
\ee
Since the constants $b_1-b_2$ and $b_3-b_4$ are at our disposal, we can choose them such that $\theta(b_2-b_1)$ and $\theta(b_4-b_3)$ do not vanish. Next, we can choose $\alpha$ and $\beta$ such that the remaining theta-factors do not vanish either, since $\alpha$ and $\beta$ are only constrained by~\eqref{abres}. Thus we have shown that $\cV_2(\mu)$ contains the constants. 

We now prove that for any~$c_2\in V_2(\mu)$ there exists an A$\De$O in~$\cV_2(\mu)$ of the form~\eqref{2form}. Clearly, the assertion amounts to the claim that the space $V_2(\mu)$ is spanned by products $f(z)g(z+\eta)$ with $f,g\in V_1(\mu)$. Our proof proceeds in two steps, revealing a remarkable dichotomy. First, we handle the case
\be\label{eL4}
\eta\notin\Lambda/4,
\ee
and then we consider nonzero eta's in $\Lambda/4$.  To prove the claim, we fix~$f_1,f_2\in V_1(\mu)$ with four simple zeros in a period cell and no common zeros. Then we consider the map
\be\label{kmap}
(g_1,g_2)\mapsto f_1(z)g_1(z+\eta)-f_2(z)g_2(z+\eta),
\ee 
where $g_1,g_2\in V_1(\mu)$. This is a linear map from the 8-dimensional space $V_1(\mu)\oplus V_1(\mu)$ into the 8-dimensional space~$V_2(\mu)$. To show that it is onto, we need only prove it has trivial kernel. 

Thus, let us assume we have
\be\label{fg}
f_1(z)g_1(z+\eta)=f_2(z)g_2(z+\eta),
\ee
with $g_1,g_2\ne 0$. Now $f_1(z)$ is of the form $cf(a,\nu;z)$ (cf.~\eqref{fRth}--\eqref{anu}), and its four zeros at $z=\nu -a_j$ are not among those of $f_2(z)$. Thus $g_2(z)$ must have zeros for $z=\nu -a_j+\eta$. On the other hand, $g_2(z)$ is of the form $c'f(a',\nu)$, so reshuffling the components of $a$ if need be, we must have four congruences
\be
\nu -a_j+\eta \equiv \nu -a'_j,\ \ j=1,2,3,4.
\ee
Summing over $j$, this implies $4\eta\equiv 0$, i.e., $4\eta\in \Lambda$. This contradicts our assumption~\eqref{eL4}, so we must have $g_1=g_2=0$, hence a trivial kernel.

Turning to the case of $\eta\in\Lambda/4$ (including 0), we can still show
that any function in $V_2(\mu)$ can be written as a linear combination of two products
$f(z)g(z+\eta)$ with $f,g\in V_1(\mu)$, which suffices to complete the
proof of the lemma.  
Unlike in the case $\eta\notin \Lambda/4$, however, a map of the kind just considered has a 1-dimensional
kernel, and thus the range of the map is a 7-dimensional subspace of
$V_2(\mu)$.

The 1-dimensional kernel arises from the choice of functions $g_1(z)$ and $g_2(z)$ in $V_1(\mu)$ whose zeros are given by those of $f_2(z)$ and $f_1(z)$, shifted by $\eta$. We give an example for $\eta=\tau/4$ to illustrate this.
Letting
\be\label{eg1}
f_j(z)=\theta(z+\vec{a}^{(j)}-\nu)/\theta(2z), \ \ \ \sum_{n=1}^4a_n^{(j)}=0, \ \ \ j=1,2,
\ee
the kernel consists of multiples of $(t_1(z)/\theta(2z),t_2(z)/\theta(2z))\in V_1(\mu)^2$, with $t_j$ given by
\be\label{eg2}
t_1(z)=\exp(-2\pi iz)\theta(z-\tau/4+\vec{a}^{(2)}-\nu),\ \ t_2(z)=\exp(-2\pi iz)\theta(z-\tau/4+\vec{a}^{(1)}-\nu).
\ee

It is unclear how a function in the 7-dimensional image for 
$f_1$, $f_2$ with the above restrictions can be recognized. But we can relax the requirements on $f_1$ and $f_2$ in such a way that we still obtain a 7-dimensional image that can be explicitly described, and this can be exploited to complete the proof.

In order to detail this, we fix attention on the period cell spanned by the numbers~1 and~$\tau$. Then
 we can either allow $f_1$ and $f_2$ to have four simple zeros and a unique common zero~$z_0$ in the cell, or to have three pairwise distinct zeros and no pole at~$z=0$. In either case,
the map still has 1-dimensional kernel. (Indeed, the locations of three zeros of $g_1$ and
$g_2$ in the cell follow from the constraint that $(g_1,g_2)$ be in the kernel, and
quasi-periodicity then renders $(g_1,g_2)$ unique up to a constant, cf.~again the example~\eqref{eg1}--\eqref{eg2}, now with~$a_1^{(1)}=a_1^{(2)}$, say.)  Moreover, in the first case every function in the image vanishes at~$z_0$. 
The subspace~$V(z_0)\subset V_2(\mu)$ of functions with this property is 7-dimensional,  since we need only supplement~$V(z_0)$ with a function in~$V_2(\mu)$ that is nonzero at~$z_0$ to obtain all of~$V_2(\mu)$ via linear combinations.Therefore, the image equals~$V(z_0)$. 

Likewise, in the second case every function in the image is analytic at $z=0$ for $\eta$ not congruent to 0, and has at most a simple pole at $z=0$ for $\eta\equiv 0$. Once more, 
this entails that the subspace~$V(0)\subset V_2(\mu)$ of functions with this property is 7-dimensional.  (Just as in the previous case, one of the eight zeros of the factor~$\theta(z+\vec{a}-\nu)$ in~\eqref{Vkform} is prescribed, namely, $z=0$.) Hence the image equals~$V(0)$.

There is now a corresponding case distinction depending both on $\mu$ and the function in~$V_2(\mu)$ we consider. Letting $\mu^2\ne\exp(2\pi i n\tau)$ with $n\in\Z$, any $f\in V_2(\mu)$ has at least one zero in the cell, so~$f$ belongs to $V(z_0)$ for some $z_0$. Hence~$f$ can be written as a linear combination of two products.
Next, assume~$\mu^2$ equals $\exp(2n\pi i \tau)$ with $n\in\Z$. Then it follows as before that any~$f\in V_2(\mu)$ having zeros is a linear combination of two products. However, setting
\be
e_k(z)\equiv \exp(k\pi iz),\ \ \ k\in\Z,
\ee
 the function~$e_{2n}(z)$ belongs to~$V_2(\mu)$ and has no zeros. Also, any~$f\in V_2(\mu)$ without zeros is a multiple of~$e_{2n}(z)$. 

To dispose of this last case, it suffices to note~$e_{2n}(z)$ belongs to $ V(0)$, since this implies that the special function~$e_{2n}(z)$ can also be written as a linear combination of two products. In fact,
 for $\mu=\exp(n\pi i \tau)$ with $n$ even, we have~$e_n(z)\in V_1(\mu)$, and so $e_{2n}(z)$ can already be obtained as a single product $f(z)g(z+\eta)$ with $f,g\in V_1(\mu)$. On the other hand, for~$\mu$ equal to $\exp(n\pi i \tau)$ with $n$ odd or equal to~$-\exp(n\pi i \tau)$ with $n\in\Z$, all functions in $V_1(\mu)$ have at least one zero, so that we need two products.  
\end{proof}

In order to handle $k$-fold products of the generators, we need the following lemma.

\begin{lemma}
Assume $\eta\in\C$ and $k>2$. Choose $f_1,f_2\in V_1(\mu)$ with four simple zeros in a period cell and no common zeros. Then any function $f$ in the vector space $V_k(\mu)$ (defined above Prop.~2.1) can be written as
\be\label{flc}
f(z)=f_1(z)g_1(z+\eta)-f_2(z)g_2(z+\eta),\ \ \ g_1,g_2\in V_{k-1}(\mu).
\ee
\end{lemma}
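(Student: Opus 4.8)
The plan is to show that the linear map
\[
\Psi:\ V_{k-1}(\mu)\oplus V_{k-1}(\mu)\longrightarrow V_k(\mu),\qquad (g_1,g_2)\longmapsto f_1(z)g_1(z+\eta)-f_2(z)g_2(z+\eta),
\]
is well defined and surjective; the claim~\eqref{flc} is precisely the surjectivity. For well-definedness I would exploit the factorization
\[
P_k(\eta,z)=\theta(2z)\,P_{k-1}(\eta,z+\eta),
\]
which is immediate from~\eqref{defPk}. Indeed, for $g\in V_{k-1}(\mu)$ and $f\in V_1(\mu)$ the product $f(z)g(z+\eta)$ is $1$-periodic and acquires the multiplier $\mu\cdot\mu^{k-1}=\mu^k$ under $z\mapsto z+\tau$, while multiplying it by $P_k(\eta,z)$ turns it into $[f(z)\theta(2z)]\cdot[g(z+\eta)P_{k-1}(\eta,z+\eta)]$, a product of two entire functions. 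Hence $f(z)g(z+\eta)\in V_k(\mu)$, so $\Psi$ maps into $V_k(\mu)$.

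For surjectivity I would argue constructively, by interpolation followed by division. Fix $f\in V_k(\mu)$ and let $w_1,\dots,w_4$ be the four zeros of $f_1$ in a period cell; since $f_1$ and $f_2$ have no common zeros, $f_2(w_i)\ne0$. Evaluating the desired identity at $z=w_i$ forces $g_2(w_i+\eta)=-f(w_i)/f_2(w_i)$ for $i=1,\dots,4$. The first step is to produce a $g_2\in V_{k-1}(\mu)$ realizing these four values, i.e.\ to show that the evaluation map $V_{k-1}(\mu)\to\C^4$, $g\mapsto\big(g(w_i+\eta)\big)_{i=1}^4$, is onto. Granting this, I set $\tilde f:=f+f_2\cdot g_2(\,\cdot+\eta)\in V_k(\mu)$; then $\tilde f(w_i)=0$ for all $i$, so $\tilde f$ vanishes at every zero of $f_1$. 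The second step is to divide: put $g_1(z):=\tilde f(z-\eta)/f_1(z-\eta)$ and verify $g_1\in V_{k-1}(\mu)$. Quasi-periodicity is immediate (the multiplier is $\mu^k/\mu=\mu^{k-1}$), and holomorphy is a routine theta-function computation using $P_k(\eta,z-\eta)=\theta(2z-2\eta)\,P_{k-1}(\eta,z)$: the simple zeros of $f_1(z-\eta)$ at $z=w_i+\eta$ are killed by the zeros of $\tilde f(z-\eta)$; the simple poles of $f_1(z-\eta)$ at $z\in\Lambda/2+\eta$ are at least compensated by the at most simple poles of $\tilde f(z-\eta)$, so the quotient stays regular there; and the remaining poles of $g_1$ lie exactly at the zeros $\omega_t-\ell\eta$, $0\le\ell\le k-2$, of $P_{k-1}(\eta,z)$ and are simple. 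With $g_1$ so defined one computes $f_1(z)g_1(z+\eta)=\tilde f(z)=f(z)+f_2(z)g_2(z+\eta)$, which is~\eqref{flc}.

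The crux, and the only place where the hypothesis $k>2$ is used, is the surjectivity of the evaluation map in the first step. Here I would invoke Proposition~2.1: every element of $V_{k-1}(\mu)$ has the form $c\,\theta(z+\vec a-\nu)/P_{k-1}(\eta,z)$ with the $4(k-1)$ components of $\vec a$ subject only to the single constraint $\sum a_i=2(k-1)(k-2)\eta$, so that $4(k-1)-1\ge 7$ of its zeros $\nu-a_i$ may be placed at will. For each fixed $i$ I can therefore construct a function in $V_{k-1}(\mu)$ vanishing at the three points $w_j+\eta$ with $j\ne i$ (putting three zeros there) while keeping it nonzero at $w_i+\eta$, since there remain $4(k-2)\ge 4$ further zeros with which to absorb the sum constraint and to steer clear of $w_i+\eta$. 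This shows the four point-evaluation functionals are linearly independent, whence the evaluation map has rank $4$ and is onto. For $k=2$ the space $V_1(\mu)$ is only $4$-dimensional and this flexibility disappears, which is exactly why that case needed the separate dichotomy of Lemma~3.3. I expect this independence of the four evaluation conditions to be the main obstacle to a fully rigorous write-up, since one must check that the constructed zeros can always be arranged to avoid the target point while honouring the quasi-periodicity constraint; the well-definedness and the divisibility check are routine theta bookkeeping.
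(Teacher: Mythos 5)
Your route is genuinely different from the paper's. The paper keeps the same map $(g_1,g_2)\mapsto f_1(z)g_1(z+\eta)-f_2(z)g_2(z+\eta)$ but proves surjectivity by a pure dimension count: a kernel pair satisfies $f_1(z)g_1(z+\eta)=f_2(z)g_2(z+\eta)$, which (since $f_1,f_2$ have no common zeros) forces $g_1(z+\eta)/f_2(z)=g_2(z+\eta)/f_1(z)=\theta(z+\vec{a})\theta(2z)/\prod_{n=1}^{k-1}\theta(2z+2n\eta)$ with $a\in\C^{4k-8}$, so the kernel is $(4k-8)$-dimensional and the image has dimension $(8k-8)-(4k-8)=4k=\dim V_k(\mu)$. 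Your interpolation-plus-division argument is more explicit and, where it applies, actually constructs $(g_1,g_2)$. Your well-definedness step and the division step are correct (the factorizations $P_k(\eta,z)=\theta(2z)P_{k-1}(\eta,z+\eta)$ and $P_k(\eta,z-\eta)=\theta(2z-2\eta)P_{k-1}(\eta,z)$ do what you need), and the step you single out as the main obstacle --- independence of the four evaluation functionals --- is in fact unproblematic for $k>2$: with $4(k-1)-3\ge 5$ remaining zeros one can absorb the single sum constraint and avoid $w_i+\eta$.

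The genuine gap is elsewhere. The conditions $g_2(w_i+\eta)=-f(w_i)/f_2(w_i)$ presuppose both that $f$ is finite at $w_i$ and that elements of $V_{k-1}(\mu)$ can be evaluated at $w_i+\eta$, and neither is guaranteed by the hypotheses. ``Four simple zeros in a period cell'' only excludes $w_i\in\Lambda/2$; it does not exclude $w_i\in\Lambda/2-\ell\eta$ for some $\ell\in\{1,\dots,k-1\}$. At such a point a general $f\in V_k(\mu)$ has a simple pole, and $w_i+\eta$ lies on the polar divisor of $V_{k-1}(\mu)$, so $g_2(w_i+\eta)$ is generically infinite as well (take $k=3$ and $f_1$ with a zero at $1/2-\eta$, say). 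In that configuration the point evaluations must be replaced by residue conditions, $\res_{z=w_i}f=-f_2(w_i)\,\res_{z=w_i}\,g_2(z+\eta)$, and the surjectivity of the corresponding modified functionals on $V_{k-1}(\mu)$ has to be argued afresh; alternatively one can perturb the zeros of $f_1$ off the bad set and pass to a limit. This is repairable, but as written your proof covers only the case where the zeros of $f_1$ avoid $\Lambda/2-\ell\eta$, $1\le\ell\le k-1$. The paper's kernel computation sidesteps the issue entirely because it never evaluates at individual points.
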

\begin{proof}
We reconsider the map~\eqref{kmap} in the proof of Lemma~3.3, now with $g_1,g_2\in V_{k-1}(\mu)$. The kernel of this linear map from the $(8k-8)$-dimensional space $V_{k-1}(\mu)\oplus V_{k-1}(\mu)$ into the $4k$-dimensional space $V_k(\mu)$ consists of pairs $(g_1,g_2)$ satisfying~\eqref{fg}. Thus the zeros of $f_2(z)$ and $f_1(z)$ are also zeros of $g_1(z+\eta)$ and $g_2(z+\eta)$, resp., and so we have
\be
g_1(z+\eta)/f_2(z)=g_2(z+\eta)/f_1(z)=
\theta (z+\vec{a})\theta(2z)/\prod_{n=1}^{k-1}\theta (2z+2n\eta),
\ee
where $a$ is a vector in~$\C^{4k-8}$. From this it readily follows that the kernel is $(4k-8)$-dimensional. (This follows e.~g.~by arguing as in the proof of Prop.~2.1.) Thus, the map is onto $V_k(\mu)$.
\end{proof}
 
Consider now the space $\cV_k(\mu)$ of A$\De$Os spanned by the $k$-fold products of the $A(f)$ with $f\in V_1(\mu)$.

\begin{lemma} 
Let $k>0$ and let $\eta\in\C^*$. Then all operators in $\cV_k(\mu)$ are of the form~\eqref{AVk},
where the coefficients are meromorphic functions that satisfy~\eqref{cmul}--\eqref{cjeven}. Also, we have
\be\label{ckV}
c^{(k)}_k(z)\in V_k(\mu),
\ee 
\be\label{subs}
\cV_{k-2}(\mu)\subset \cV_k(\mu),\ \ \ \cV_{-1}(\mu)\equiv \{ 0\},\ \ \ \cV_0(\mu)\equiv \C.
\ee
Finally, assuming $\eta$ is restricted by~\eqref{etareq}, the coefficients have the properties (i)--(iii) specified in Lemma~3.2 and the functions
\be\label{poles}
c^{(k)}_{k-2m}(z)\prod_{\ell=-k+1}^{k-1}\theta(2(z-\ell\eta)),\ \ \ m=0,\ldots,k,
\ee
are holomorphic.
\end{lemma}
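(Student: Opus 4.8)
The plan is to establish the several assertions one at a time; apart from the final holomorphy statement, each follows from a direct computation on a single $k$-fold product together with one of the preceding lemmas, so no fresh induction is required. First I would verify the shift structure \eqref{AVk}. Writing a generic $k$-fold product as $A(f_1)\cdots A(f_k)$ with $f_i\in V_1(\mu)$ and expanding each factor via $A(f)=f(z)\exp(\eta\pz)+f(-z)\exp(-\eta\pz)$, one collects the $2^k$ terms according to the total shift $\sum_i\epsilon_i\in\{-k,-k+2,\dots,k\}$; this forces the operator into the form \eqref{AVk}, with $c^{(k)}_{k-2m}$ a sum over sign patterns having exactly $m$ minus signs. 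Since each $f_i$ is $1$-periodic so are the coefficients, and since a factor carrying a plus (resp.\ minus) sign picks up a factor $\mu$ (resp.\ $\mu^{-1}$) under $z\to z+\tau$, a term with $m$ minus signs scales by $\mu^{k-2m}$; this gives \eqref{cmul}. The relation \eqref{cjeven} I would obtain conceptually rather than by bookkeeping: each $A(f)$ maps the even meromorphic functions $\cM_e$ into itself, hence so does every product, and an operator $\sum_j c_j(z)\exp(j\eta\pz)$ preserves $\cM_e$ precisely when $c_j(z)=c_{-j}(-z)$.

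Next I would compute the leading coefficient. The all-plus term of $A(f_1)\cdots A(f_k)$ is $\prod_{i=1}^k f_i(z+(i-1)\eta)$, so $c^{(k)}_k$ is a sum of such products. Each factor $f_i(z+(i-1)\eta)$ is $1$-periodic, scales by $\mu$ under $z\to z+\tau$, and has at most simple poles where $z+(i-1)\eta\in\Lambda/2$; multiplying by $\theta(2(z+(i-1)\eta))$ removes these poles, and the product of these theta-factors over $i=1,\dots,k$ is exactly $P_k(\eta,z)$ of \eqref{defPk}. Thus $c^{(k)}_k P_k$ is holomorphic and $c^{(k)}_k\in V_k(\mu)$, proving \eqref{ckV}. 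For the inclusions \eqref{subs} I would invoke Lemma~3.3: since the constants lie in $\cV_2(\mu)$, one may write $1=\sum_i\lambda_i A(f_i)A(g_i)$, and then for any $(k-2)$-fold product $B$ the identity $B=\sum_i\lambda_i\,B\,A(f_i)A(g_i)$ exhibits $B$ as a linear combination of $k$-fold products, giving $\cV_{k-2}(\mu)\subset\cV_k(\mu)$ (the conventions $\cV_{-1}(\mu)=\{0\}$ and $\cV_0(\mu)=\C$ dispose of the base cases).

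Assuming now \eqref{etareq}, properties (i)--(iii) follow at once: by Lemma~3.1 each $A(f)$, $f\in V_1(\mu)$, preserves every space $\cM_t$, hence so does every operator in $\cV_k(\mu)$, and Lemma~3.2 then yields (i)--(iii) for its coefficients. The final claim, the holomorphy of \eqref{poles}, is the one genuinely delicate point and is where I expect the main obstacle. By (ii) the function $c^{(k)}_{k-2m}$ has only simple poles, located among the points $z(t,\ell)=\omega_t-\ell\eta$; by (iii) a nonzero residue at such a point requires $|2\ell-(k-2m)|\le k$ and $\ell\ne k-2m$, i.e.\ $-m\le\ell\le k-m$ with the extreme value excluded. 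The factor $\theta(2(z-\ell'\eta))$ vanishes, simply, at $z(t,\ell)$ exactly when $\ell'=-\ell$, because $\eta\notin\Q\Lambda$ forces $n\eta\in\Lambda/2$ only for $n=0$; hence the product in \eqref{poles}, running over $\ell'=-k+1,\dots,k-1$, supplies a simple zero at every admissible pole provided $-(k-1)\le\ell\le k-1$.

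The crux is then to check that this range contains all admissible $\ell$. For $1\le m\le k-1$ the interval $[-m,k-m]$ already lies inside $[-(k-1),k-1]$, while for $m=0$ (resp.\ $m=k$) the sole offending value $\ell=k$ (resp.\ $\ell=-k$) equals $j=k-2m$ and so is removed by the vanishing condition $r_j(t,j)=0$ in \eqref{iiia}. Thus each simple pole of $c^{(k)}_{k-2m}$ is cancelled by a simple zero of the theta-product, and \eqref{poles} is holomorphic. The genuinely nontrivial matching is precisely this alignment of the residue-support constraint \eqref{iiia}--\eqref{iiib} with the zero-set of the fixed theta-product, together with the arithmetic role of \eqref{etareq} in keeping those zeros simple and correctly placed.
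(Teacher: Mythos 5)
Your proposal follows essentially the same route as the paper: expand the $k$-fold products to obtain the form~\eqref{AVk} together with the quasi-periodicity and evenness of the coefficients, read off the leading coefficient as $f_1(z)f_2(z+\eta)\cdots f_k(z+(k-1)\eta)$ for~\eqref{ckV}, use Lemma~3.3 (constants lie in $\cV_2(\mu)$) for~\eqref{subs}, and combine Lemmas~3.1 and~3.2 for properties (i)--(iii). The paper disposes of the final holomorphy claim with a one-line ``inspection of the pole locations''; your detailed matching of the residue-support constraints~\eqref{iiia}--\eqref{iiib} against the zero set of the theta product, including the special role of $\ell=k-2m$ for $m=0$ and $m=k$, is correct and is a useful elaboration.

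One step is misattributed, however. You write that ``by (ii) the function $c^{(k)}_{k-2m}$ has only simple poles, located among the points $z(t,\ell)$.'' Property (ii) of Lemma~3.2 asserts only that \emph{at} the points $z(t,\ell)$ any pole is at most simple; it does not exclude poles (of any order) at other locations, and since Lemma~3.2 is an equivalence, an operator satisfying (i)--(iii) may well have coefficients with singularities away from $\cup_t(\omega_t+\eta\Z)$. The confinement you actually need --- that the poles of $c^{(k)}_{k-2m}$ lie in $\cup_{|\ell|\le k-1}(\Lambda/2+\ell\eta)$, which is precisely the zero set of the product in~\eqref{poles} --- comes instead from the explicit expansion you already set up: each term of $c^{(k)}_{k-2m}$ is a product $\prod_i f_i(\epsilon_i(z+s_i\eta))$ with $|s_i|\le k-1$ and $f_i\in V_1(\mu)$ having poles only in $\Lambda/2$. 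With that one observation inserted, your argument (including the use of~\eqref{etareq} to keep the theta zeros simple and indexed by $\ell'=-\ell$) is complete and correct.
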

\begin{proof}
It is plain that any operator in~$\cV_k(\mu)$ has the form~\eqref{AVk} with  meromorphic coefficients. The quasi-periodicity property~\eqref{cmul} is valid for $k=1,2$ and readily follows inductively for arbitrary~$k$. Likewise, the evenness property~\eqref{cjeven} follows inductively. The coefficient~$c_k^{(k)}(z)$ of a product $A(f_1)\cdots A(f_k)$ is of the form
\be
f_1(z)f_2(z+\eta)\cdots f_k(z+(k-1)\eta),
\ee
so it is in $V_k(\mu)$. The assertion~\eqref{ckV} is then clear from linearity, cf.~Prop.~2.1. The inclusion~\eqref{subs} follows inductively from the constants being a subspace of $\cV_2(\mu)$, cf.~Lemma~3.3.

Next assume $\eta$ satisfies~\eqref{etareq}. Since the generators $A(f)$ have the invariance property~\eqref{AMt} (as proved in Lemma~3.1), this is also true for $A^{(k)}$~\eqref{AVk}, so the properties~(i)--(iii) are a consequence of~Lemma~3.2. Also, holomorphy of the functions~\eqref{poles} follows upon inspection of the locations of the simple poles.
\end{proof}

We are now in the position to characterize the A$\De$Os in the representation $\cD(\mu)$ with $\eta$ satisfying~\eqref{etareq}. We say that a difference operator of the form
\be\label{ASk}
A=\sum_{j=-k}^{k}c_j(z)\exp(j\eta \pz),\ \ \ k>0,\ \ \ \eta\notin\Q\Lambda,
\ee
is an A$\De$O of Sklyanin type if and only if the coefficients are meromorphic functions that satisfy~\eqref{cmul}, are such that the functions~\eqref{poles} are holomorphic,
and have the properties (i)--(iii) detailed in Lemma~3.2. (Since \eqref{i} with $t>0$ is an easy consequence of~\eqref{cmul} and \eqref{i} with $t=0$, this definition is equivalent to the one given in the Introduction, cf.~the paragraph containing~\eqref{cjeven}.) We have already shown that for $\eta$ satisfying~\eqref{etareq} any $A\in\cD(\mu)$ is of Sklyanin type, and we shall now prove the converse.  

We can write any $A$ as a sum of two A$\De$Os containing the even and odd  powers of the shift $\exp(\eta\partial_z)$. We shall call these summands even and odd A$\De$Os, resp. Obviously, $A$ is of Sklyanin type if and only if its even and odd summands are.

\begin{lemma}
Assume $A$ is an A$\De$O of Sklyanin type that is even or odd. Let $k$ be the smallest integer such that $A$ can be written as~\eqref{ASk}. Then we have
\be\label{AcV}
A\in \cV_k(\mu).
\ee
\end{lemma}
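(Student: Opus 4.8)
The plan is to argue by induction on the top index $k$, matching leading coefficients against operators already known to lie in $\cV_k(\mu)$ and then invoking the inclusion $\cV_{k-2}(\mu)\subset\cV_k(\mu)$ from~\eqref{subs}. The engine of the argument is a surjectivity statement that I would establish first: \emph{every $c\in V_k(\mu)$ occurs as the top coefficient $c^{(k)}_k$ of some A$\De$O in $\cV_k(\mu)$.} This I would prove by a preliminary induction on $k$. For $k=1$ the coefficient of $\exp(\eta\pz)$ in $A(f)$ is $f$ itself, so all of $V_1(\mu)$ is realized. For $k=2$ it is exactly the second assertion of Lemma~3.3, which exhibits $V_2(\mu)$ as spanned by products $f(z)g(z+\eta)$, each of which is the top coefficient of $A(f)A(g)$. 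For $k>2$, Lemma~3.4 writes an arbitrary $c\in V_k(\mu)$ as $f_1(z)g_1(z+\eta)-f_2(z)g_2(z+\eta)$ with $f_1,f_2\in V_1(\mu)$ and $g_1,g_2\in V_{k-1}(\mu)$; by the inductive hypothesis $g_1,g_2$ are top coefficients of operators $B_1,B_2\in\cV_{k-1}(\mu)$, and then $A(f_1)B_1-A(f_2)B_2\in\cV_k(\mu)$ has top coefficient $c$, because the coefficient of $\exp(k\eta\pz)$ in a product $A(f)B$ with $B\in\cV_{k-1}(\mu)$ is $f(z)$ times the $\eta$-shifted top coefficient of $B$.

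With this in hand I would run the main induction. First I would check that the leading coefficient $c_k$ of a Sklyanin type A$\De$O lies in $V_k(\mu)$. Property (iii) forces $r_k(t,\ell)=0$ whenever $\ell<0$, $\ell>k$, or $\ell=k$, so by (ii) the only poles of $c_k$ are simple poles located among the points $\omega_t-\ell\eta$ with $t=0,1,2,3$ and $0\le\ell\le k-1$. Since $\eta\notin\Q\Lambda$, these $4k$ points are pairwise incongruent modulo $\Lambda$ and coincide with the simple zeros of $P_k(\eta,z)$; hence $c_k(z)P_k(\eta,z)$ is entire, and together with the quasi-periodicity~\eqref{cmul} this places $c_k\in V_k(\mu)$. (Note that the weaker holomorphy requirement~\eqref{poles} alone would permit extra poles at $\omega_t+\ell\eta$, which is precisely where the residue vanishing in~\eqref{iiia} is essential.)

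Now let $A$ be an even or odd Sklyanin type A$\De$O with top index $k$. By the surjectivity statement there is an operator $B\in\cV_k(\mu)$ whose top coefficient equals $c_k$. Since $\eta$ satisfies~\eqref{etareq}, Lemma~3.5 guarantees $B$ is itself of Sklyanin type, and as the defining conditions~\eqref{cmul} and (i)--(iii) are all linear in the coefficients, the difference $A-B$ is again of Sklyanin type. Moreover $A-B$ has the same parity as $A$ and its coefficient of $\exp(k\eta\pz)$ vanishes, so its top index is at most $k-2$. By the inductive hypothesis $A-B\in\cV_{k-2}(\mu)\subset\cV_k(\mu)$, whence $A=B+(A-B)\in\cV_k(\mu)$. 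The base cases are immediate: an even operator of top index $0$ is a constant, hence lies in $\cV_0(\mu)=\C\subset\cV_k(\mu)$ by~\eqref{subs}, while an odd operator of top index $1$ equals $A(c_1)\in\cV_1(\mu)$, its bottom coefficient matching automatically via the $t=0$ case of~(i), i.e.\ $c_{-1}(z)=c_1(-z)$.

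I expect the genuine content to sit in the surjectivity statement, where Lemma~3.4 does the heavy lifting of decomposing an arbitrary element of $V_k(\mu)$ into two products; everything downstream is a degree-reduction argument. The points requiring care are the identification $c_k\in V_k(\mu)$—confirming that no pole of $c_k$ escapes the zero set of $P_k(\eta,z)$, which is exactly what the residue relations~\eqref{iiia} supply—and the verification that $A-B$ is again of Sklyanin type with top index \emph{strictly} below $k$ (rather than merely $\le k$), guaranteed by the parity constraint together with the cancellation of leading coefficients.
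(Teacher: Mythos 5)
Your argument is correct and follows essentially the same route as the paper's proof: identify $c_k\in V_k(\mu)$ via the residue conditions (ii)--(iii), realize it as the top coefficient of an operator in $\cV_k(\mu)$ by means of Lemmas~3.3--3.4, subtract, and descend by two using~\eqref{subs}. The one point worth making explicit is that the inductive step needs $A-B$ to satisfy the holomorphy requirement~\eqref{poles} with $k$ replaced by $k-2$ (a strictly smaller theta product), which does not follow from linearity of the level-$k$ conditions alone but does follow from the inherited relations (ii)--(iii), by the same computation you carried out to place $c_k$ in $V_k(\mu)$ — this is exactly the one-sentence verification the paper inserts at that stage.
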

\begin{proof}
The restriction on $k$ implies $c_k(z)$ does not vanish identically. Combining~\eqref{cmul}, holomorphy of~\eqref{poles} with $j=k$, and~\eqref{ii}--\eqref{iiia}, we  deduce that $c_k$ belongs to $V_k(\mu)$. For $k=1$ this entails $A$ equals $A(c_1)$, so~\eqref{AcV} is clear.  Next let $k>1$. Recalling Lemma~3.4, it readily follows that $c_k(z)$ can be written as a linear combination of functions of the form
\be
f_1(z)f_2(z+\eta)\ldots f_k(z+(k-1)\eta),\ \ \ f_1,\ldots, f_k\in V_1(\mu).
\ee
Subtracting the associated linear combination of monomials $A(f_1)\ldots A(f_k)$ from $A$, we obtain a constant for the special case $k=2$, so~\eqref{AcV} follows again. 

For $k>2$ the difference is an A$\De$O $A'$ of the form~\eqref{ASk} with $k\to k-2$. Now Lemma~3.2 implies that $A$ has the invariance property~\eqref{AMt}, and by Lemma~3.1 the monomials also satisfy~\eqref{AMt}. Therefore $A'$ satisfies~\eqref{AMt}, so by Lemma~3.2 its coefficients $c_j'(z)$ have the properties (i)--(iii). From this it easily follows that the functions
\be
c_j'(z)\prod_{\ell=-k+3}^{k-3}\theta(2(z+\ell\eta)),\ \ \ |j|\le k-2,
\ee
are holomorphic.
Obviously the coefficients of $A'$  also have the quasi-periodicity property~\eqref{cmul}, so $A'$ is an A$\De$O of Sklyanin type. Recalling~\eqref{subs}, the lemma now follows by finite induction.  
\end{proof}

In view of these lemmas, the following theorem needs no further proof. 
\begin{theorem}
An A$\De$O is of Sklyanin type if and only if it belongs to the representation~$\cD(\mu)$ of the Sklyanin algebra $\cS_{\eta}$ with $\eta\notin\Q\Lambda$. 
\end{theorem}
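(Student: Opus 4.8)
The plan is to obtain Theorem 3.7 with essentially no further computation, reading it off from Lemmas 3.5 and 3.6 together with the parity decomposition introduced just before Lemma~3.6. For the implication ``$A\in\cD(\mu)$ $\Rightarrow$ $A$ of Sklyanin type'' I would first record the structural fact that, by the definition of the representation, $\cD(\mu)$ is the linear span of all finite products of the generators $A(f)$, $f\in V_1(\mu)$, so that $\cD(\mu)=\sum_{k\ge 0}\cV_k(\mu)$. Since an operator in $\cV_k(\mu)$ is even when $k$ is even and odd when $k$ is odd (the shift indices $k-2m$ in~\eqref{AVk} all share the parity of $k$), the inclusions~\eqref{subs} let me write any element of $\cD(\mu)$ as an even summand lying in some $\cV_k(\mu)$ with $k$ even plus an odd summand lying in some $\cV_k(\mu)$ with $k$ odd. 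Lemma~3.5, applied under the standing hypothesis~\eqref{etareq}, then guarantees that the coefficients of each summand satisfy~\eqref{cmul}, make the functions~\eqref{poles} holomorphic, and enjoy properties (i)--(iii). Because these conditions are homogeneous under the even/odd grading of the shift index (in particular the residue relation~\eqref{iiib} couples only the indices $j$ and $2\ell-j$, which always have equal parity), the requirements on the two summands are independent, and hence the full operator is of Sklyanin type.

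For the converse, ``Sklyanin type $\Rightarrow$ $A\in\cD(\mu)$'', I would use the same parity decomposition in reverse. Writing a Sklyanin-type A$\De$O $A$ as the sum of its even and odd summands, the decoupling along parity just noted shows that each summand is again of Sklyanin type. For each summand I then choose the minimal $k$ for which it takes the form~\eqref{ASk}, and Lemma~3.6 places it in the corresponding $\cV_k(\mu)\subset\cD(\mu)$. Adding the two summands and using that $\cD(\mu)$ is a vector space finishes the argument. The degenerate cases pose no problem: a vanishing or purely constant summand lies in $\cV_0(\mu)\equiv\C$, which is contained in $\cD(\mu)$ via~\eqref{subs}.

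The main obstacle is not located in this concluding step at all, but in the two lemmas it rests on. The genuinely hard input is Lemma~3.2 (relegated to Appendix~A), which converts the invariance $A\cM_t\subset\cM_t$ into the explicit residue relations (i)--(iii), and Lemma~3.3, whose dichotomy shows that every $c_2\in V_2(\mu)$ actually arises from twofold products of generators, the case $\eta\in\Lambda/4$ being the delicate one. Once those are in hand, the only points in the proof of the theorem that require care are the two structural observations above: first, that $\cD(\mu)$ coincides with $\sum_k\cV_k(\mu)$, so that membership in the representation is equivalent to membership in one of the product spaces; and second, that the defining conditions for Sklyanin type split cleanly across the even and odd parts, so that the parity decomposition is fully compatible with membership in $\cD(\mu)$. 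These are the steps I would verify explicitly, after which the equivalence follows immediately.
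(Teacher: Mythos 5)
Your proposal is correct and follows essentially the same route as the paper, which states that Theorem~3.7 ``needs no further proof'' given Lemmas~3.5 and~3.6 together with the even/odd decomposition introduced just before Lemma~3.6. You have merely made explicit the routine bookkeeping the paper leaves implicit (that $\cD(\mu)=\sum_k\cV_k(\mu)$, that the inclusions~\eqref{subs} collect the even and odd parts into single spaces $\cV_k(\mu)$, and that the Sklyanin-type conditions decouple along the parity of the shift index), and you correctly identify Lemmas~3.2 and~3.3 as the genuinely hard inputs.
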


It should be noted that the lemmas give more information than the theorem. In particular, Lemmas~3.3--3.5 involve far weaker $\eta$-restrictions. 

Thus far, we have viewed~$\eta$ as a fixed parameter. For the Sklyanin generators~$D_0,D_1,D_2$ and~$D_3$ given by~\eqref{D0}--\eqref{D3}, a specific holomorphic dependence on~$\eta$ occurs already in the coefficients. By contrast, we work with generators~$A_R$ with coefficient functions varying over the space~$V_1(\mu)$, which has no dependence on~$\eta$, cf.~\eqref{AR} and the paragraph containing~\eqref{fR}. (We shall elaborate on this difference in Appendix~B.) Accordingly, the $\eta$-dependence in the~$k+1$ coefficients~$c^{(k)}_{k-2m}(\eta;z)$  of the general A$\De$O~$A^{(k)}$~\eqref{AVk} derives solely from the~$\eta$-dependent shifts in the generators~$A_R$. 

We conclude this section by deriving some results pertaining to the~$\eta$-dependence of the latter coefficients. First of all, it follows inductively that they are meromorphic functions of $\eta$ and $z$, with polar divisors contained in the union of hyperplanes
\be\label{polediv}
\cP_k \equiv \{ z+ \ell \eta \in \Lambda/2 \mid |\ell|\le k-1\}.
\ee
Next, we introduce discrete subsets of the complex plane by
\be\label{cDN}
\cD_N\equiv \cup_{\ell=1}^N\frac{\Lambda}{2\ell}.
\ee
Clearly, the hyperplanes in~\eqref{polediv} are distinct, provided $\eta$ does not belong to $\cD_{2k-2}$. When we fix $\eta_0\notin \cD_{2k-2}$, therefore, the coefficients~$c^{(k)}_{k-2m}(\eta_0;z)$ are meromorphic functions of $z$ whose poles can only occur at $4(2k-1)$ distinct locations in a period cell.

To continue, we fix $k>1$ (the case $k=1$ being trivial) and introduce the product functions  
\be\label{Pkm}
P^{(k)}_m(\eta,z)\equiv \prod_{{\substack{\ell =-m\\ \ell\ne k-2m}}}^{k-m}\theta(2z+2\ell\eta),\ \ \ m=0,1,\ldots,k,
\ee
which are obviously holomorphic in $\eta$ and $z$. Our aim is now to show that the functions
\be\label{Hkm}
H^{(k)}_m(\eta,z)\equiv P^{(k)}_m(\eta,z)c^{(k)}_{k-2m}(\eta;z),\ \ \ 0\le m\le k,
\ee
are also holomorphic in $\eta$ and $z$. To be sure, for the special cases $m=0$ and $m=k$ this feature is nearly immediate, and indeed we have already encountered the $m=0$ product function in a related setting, cf.~\eqref{defPk}. For the general case, however, it is already an arduous task to verify holomorphy directly for small~$k$-values and~$m$ near~$k/2$, as there are delicate cancellations present.

Our general holomorphy proof involves several steps.
First, we observe that when we fix $\eta_0\notin \Q\Lambda$, the  theta function product $P^{(k)}_m(\eta_0,z)$ has~$4k$
simple zeros in a period cell, and it follows from Lemma~3.2 that the zero locations match the locations of the simple poles of~$c^{(k)}_{k-2m}(\eta_0;z)$. Hence the functions $H^{(k)}_m(\eta_0,z)$ are holomorphic in~$z$.

Now it suffices to require $\eta_0\notin \cD_k$  for the zeros of $P^{(k)}_m(\eta_0,z)$ to be simple. As shall become clear shortly, this also suffices to retain simple poles for~$c^{(k)}_{k-2m}(\eta_0;z)$ at the corresponding locations, so that $H^{(k)}_m(\eta_0,z)$ remains holomorphic. But it is expedient to first prove holomorphy for a fixed $\eta_0$ that does not belong to the larger discrete set $\cD_{2k-2}$. From Lemma~3.2 we already know holomorphy when $\eta_0$ does not belong to the dense set $\Q\Lambda$, so we now fix $\eta_0$ satisfying
\be
\eta_0\in\Q\Lambda ,\ \ \ \eta_0\notin \cD_{2k-2},
\ee
until further notice.

Next, we observe that the complement of $\Q\Lambda$ is dense as well. Thus we can find a sequence $\eta_n$ satisfying
\be\label{etan}
\eta_n\notin\Q\Lambda,\ \ \ \lim_{n\to\infty}\eta_n=\eta_0.
\ee
The functions $H^{(k)}_m(\eta_n,z)$ are holomorphic, and they converge to $H^{(k)}_m(\eta_0,z)$ uniformly on $z$-compacts that are disjoint from the polar divisor $D_0$ given by~\eqref{polediv} with $\eta=\eta_0$. Choosing $z_0\in D_0$, we now aim to show that~$H^{(k)}_m(\eta_0,z)$ has no pole at $z=z_0$, so as to deduce holomorphy of~$H^{(k)}_m(\eta_0,z)$. 

To this end we choose $r>0$ small enough so that the punctured disc $|z-z_0| \in (0,r]$ does not meet~$D_0$. Since $\eta_n\notin\cD_{2k-2}$, the polar divisors $D_n$ given by~\eqref{polediv} with $\eta=\eta_n$ meet the disc $|z-z_0|\le r$ in a unique $z_n$ for $n$ large enough, with $z_n\to z_0$ as $n\to\infty$. Therefore, we have 
\be
\lim_{n\to\infty}H^{(k)}_m(\eta_n,z)=H^{(k)}_m(\eta_0,z),\ \ \ |z-z_0|\in(0,r],
\ee
uniformly on the circle $|z-z_0|=r/2$. Invoking the following elementary lemma, it now follows that~$H^{(k)}_m(\eta_0,z)$ has no pole at $z=z_0$ and holomorphy of~$H^{(k)}_m(\eta_0,z)$ results.

\begin{lemma}
Assume $F_n(w)$ is a sequence of functions that are holomorphic for $|w|\le r$ and $F_0(w)$ is holomorphic for $|w|\in(0,r]$. Next, assume
\be
\lim_{n\to\infty}F_n(w)=F_0(w),\ \ \ |w|\in(0,r],
\ee
uniformly on the circle $|w|=r/2$. Then the function $F_0(w)$ is holomorphic at $w=0$.
\end{lemma}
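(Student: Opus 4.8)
The plan is to recover $F_0$ near the origin from its boundary values on the circle $|w|=r/2$ by means of a Cauchy-type integral, exploiting that the \emph{approximating} functions $F_n$ are holomorphic across $w=0$. Since each $F_n$ is holomorphic on the whole disc $|w|\le r$, Cauchy's integral formula applied to the circle $|\zeta|=r/2$ gives, for every $w$ with $|w|<r/2$,
\be
F_n(w)=\frac{1}{2\pi i}\oint_{|\zeta|=r/2}\frac{F_n(\zeta)}{\zeta-w}\,d\zeta .
\ee

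Next I would let $n\to\infty$ on both sides. Fixing $w$ with $|w|<r/2$, the kernel $1/(\zeta-w)$ is bounded on the contour by $1/(r/2-|w|)$, and by hypothesis $F_n(\zeta)\to F_0(\zeta)$ uniformly for $|\zeta|=r/2$; hence the integrals converge and
\be
\lim_{n\to\infty}F_n(w)=\frac{1}{2\pi i}\oint_{|\zeta|=r/2}\frac{F_0(\zeta)}{\zeta-w}\,d\zeta=:G(w),\ \ \ |w|<r/2 .
\ee
The crucial point is that the right-hand side defines a function $G$ that is holomorphic on the \emph{entire} disc $|w|<r/2$, including the origin: a Cauchy-type integral of a fixed continuous density over a fixed contour may be differentiated under the integral sign (equivalently, Morera's theorem applies), so no singularity can be introduced at $w=0$.

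It then remains to identify $G$ with $F_0$ off the origin. For $0<|w|<r/2$ the hypothesis $\lim_{n\to\infty}F_n(w)=F_0(w)$ holds, and comparison with the previous display gives $F_0(w)=G(w)$ there. Thus $G$ furnishes a holomorphic extension of $F_0$ across $w=0$, i.e.\ the singularity at the origin is removable and $F_0$ is holomorphic at $w=0$, as claimed. The only place requiring care is the interchange of limit and integral in the second display; this is precisely where the uniform convergence on $|\zeta|=r/2$ is used, together with the uniform lower bound on $|\zeta-w|$ for $w$ in any compact subset of the open disc. No further obstacle arises, the argument being a standard application of the Cauchy integral.
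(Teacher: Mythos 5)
Your proposal is correct and follows essentially the same route as the paper: apply Cauchy's integral formula to $F_n$ on the circle $|v|=r/2$, pass to the limit using the uniform convergence on that circle to obtain a function holomorphic throughout $|w|<r/2$, and identify it with $F_0$ on the punctured disc via the pointwise convergence. No gaps.
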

\begin{proof}
For $|w|<r/2$ we have the Cauchy integral formula
\be
F_n(w)=\frac{1}{2\pi i}\oint_{|v|=r/2}\frac{F_n(v)}{v-w}dv.
\ee
For $n\to\infty$ the right-hand side converges to the limit function
\be
L(w)=\frac{1}{2\pi i}\oint_{|v|=r/2}\frac{F_0(v)}{v-w}dv,
\ee
which is holomorphic for $|w|<r/2$. The left-hand side converges to $F_0(w)$ for~$|w|\in(0,r/2)$, so $F_0(w)=L(w)$ and holomorphy at $w=0$ follows.
\end{proof}

The upshot of our reasoning is that the functions $H^{(k)}_m(\eta,z)$ are holomorphic in~$\eta$ and~$z$, unless $\eta$ belongs to the discrete set $\cD_{2k-2}$ and in addition $z$ belongs to the discrete set obtained from the polar divisor $\cP_k$~\eqref{polediv} by fixing $\eta\in\cD_{2k-2}$. This exceptional set is a discrete subset of $\C^2$, so we are now in the position to invoke Hartogs' theorem on analytic completion to conclude that the functions~$H^{(k)}_m(\eta,z)$ are in fact jointly holomorphic in~$(\eta,z)$. (For the case at hand, Hartogs' theorem can be proved by a second application of the Cauchy integral formula, cf.~\cite{GH78}, p.~7.)

We now summarize and extend this finding in the following theorem.

\begin{theorem}
Let $k>1$ and $m\in\{0,\ldots,k\}$. Then the functions $H^{(k)}_m(\eta,z)$ defined by~\eqref{Pkm}--\eqref{Hkm} are holomorphic in $\C^2$. Fixing~$\eta\notin\cD_k$, the coefficients~$c^{(k)}_{k-2m}(\eta;z)$ have at most simple poles for $z+\ell\eta\in\Lambda/2$, where $\ell=-m,\ldots, k-m$ and $\ell\ne k-2m$, with residues~$r_{k-2m}(t,\ell)$ at $z=\omega_t-\ell \eta$ that are related by~\eqref{iiib}. Finally, fixing $z\notin\Lambda/2$, we have 
\be\label{etaper}
c^{(k)}_{k-2m}(\eta+1;z)=c^{(k)}_{k-2m}(\eta;z),\ \ \ c^{(k)}_{k-2m}(\eta+\tau;z)=\mu^{e(k,m)}c^{(k)}_{k-2m}(\eta;z),
\ee
where the exponent is the integer
\be\label{ekm}
e(k,m)=(k-2m)^2/2-k/2.
\ee
\end{theorem}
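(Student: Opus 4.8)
The plan is to dispatch the three assertions in turn, taking as given the joint holomorphy of $H^{(k)}_m(\eta,z)$ on $\C^2$, which the reasoning preceding the statement has already secured (Lemma~3.8 together with Hartogs' theorem). Granting this, the pole statement is almost immediate. For $\eta\notin\cD_k$ no two of the factors $\theta(2z+2\ell\eta)$ occurring in $P^{(k)}_m(\eta,z)$, cf.~\eqref{Pkm}, can share a $z$-zero, since a coincidence would force $2(\ell-\ell')\eta\in\Lambda$ with $|\ell-\ell'|\le k$, i.e.\ $\eta\in\cD_k$. Hence $P^{(k)}_m(\eta,z)$ has $4k$ simple zeros, located precisely at the points $z+\ell\eta\in\Lambda/2$ with $\ell\in\{-m,\ldots,k-m\}\setminus\{k-2m\}$; writing $c^{(k)}_{k-2m}=H^{(k)}_m/P^{(k)}_m$ via \eqref{Hkm} with holomorphic numerator then yields at most simple poles exactly there.

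For the residue relations I would argue by analytic continuation in $\eta$. At a simple zero $z=\omega_t-\ell\eta$ of $P^{(k)}_m$ the residue of $c^{(k)}_{k-2m}$ is the value of $H^{(k)}_m$ there divided by the $z$-derivative of $P^{(k)}_m$; the latter factors as $2\theta'(2\omega_t)$ times the remaining $\theta$-product evaluated at $z=\omega_t-\ell\eta$, which is nonvanishing on all of $\C\setminus\cD_k$ by the same coincidence count. Thus each residue $r_{k-2m}(t,\ell)$, and likewise its partner $r_{2\ell-(k-2m)}(t,\ell)$ appearing in \eqref{iiib}, is a holomorphic function of $\eta$ on the connected open set $\C\setminus\cD_k$. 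By Lemmas~3.5 and~3.2 the relation \eqref{iiib} holds for every $\eta\notin\Q\Lambda$, and since $\cD_k\subset\Q\Lambda$ while $\C\setminus\Q\Lambda$ is dense and accumulates at each point of $\C\setminus\cD_k$, the identity theorem propagates \eqref{iiib} to all $\eta\notin\cD_k$.

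The quasi-periodicity \eqref{etaper} I would obtain by a direct computation on a single $k$-fold product $A(f_1)\cdots A(f_k)$ with $f_1,\ldots,f_k\in V_1(\mu)$ held fixed (hence $\eta$-independent), the general element of $\cV_k(\mu)$ being a linear combination of such products. A straightforward induction on $k$ shows that the coefficient of $\exp((k-2m)\eta\pz)$ equals
\be
\sum_{\substack{\epsilon\in\{\pm1\}^k\\ \#\{i:\epsilon_i=-1\}=m}}\ \prod_{i=1}^k f_i\Big(\epsilon_i\big(z+\sigma_{i-1}\eta\big)\Big),\qquad \sigma_{i-1}=\sum_{j=1}^{i-1}\epsilon_j,
\ee
so that each summand is a product of the $f_i$ evaluated at $\pm z$ shifted by integer multiples of $\eta$. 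Replacing $\eta$ by $\eta+1$ leaves every factor invariant by the unit period in \eqref{fR}, whereas replacing $\eta$ by $\eta+\tau$ multiplies the $i$-th factor by $\mu^{\epsilon_i\sigma_{i-1}}$, the total exponent being $\sum_{1\le j<i\le k}\epsilon_i\epsilon_j=\tfrac12\big[(k-2m)^2-k\big]=e(k,m)$, cf.~\eqref{ekm}.

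The crucial—and I expect most delicate—point is that this exponent depends only on $k$ and $m$ (through $\#\{i:\epsilon_i=-1\}$), and not on the individual sign pattern $\epsilon$ nor on the chosen $f_i$: were it pattern-dependent, the sum would fail to be quasi-periodic at all. It is exactly the identity $\sum_{j<i}\epsilon_i\epsilon_j=\tfrac12\big[(\sum_i\epsilon_i)^2-\sum_i\epsilon_i^2\big]$, with $\sum_i\epsilon_i=k-2m$ and $\sum_i\epsilon_i^2=k$, that imposes the common multiplier $\mu^{e(k,m)}$ on every summand and hence on the whole coefficient; integrality of $e(k,m)$ follows since $(k-2m)^2\equiv k\pmod 2$. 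Summing over the products making up a general $A^{(k)}\in\cV_k(\mu)$ then yields \eqref{etaper}--\eqref{ekm} as an identity of meromorphic functions, valid in particular at any fixed $z\notin\Lambda/2$.
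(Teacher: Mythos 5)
Your proposal is correct, and for the first two assertions it runs along the same lines as the paper: holomorphy of $H^{(k)}_m$ is taken from the preceding limiting argument plus Hartogs, the pole statement follows from the simplicity of the zeros of $P^{(k)}_m(\eta,z)$ for $\eta\notin\cD_k$, and the residue relations are propagated from $\eta\notin\Q\Lambda$ to all of $\C\setminus\cD_k$ (the paper invokes continuity of the residues away from $\cD_k$; your version, via holomorphy of $H^{(k)}_m(\eta,\omega_t-\ell\eta)/\partial_zP^{(k)}_m(\eta,\omega_t-\ell\eta)$ and the identity theorem, is a slightly stronger packaging of the same idea). Where you genuinely diverge is the quasi-periodicity \eqref{etaper}: the paper proves it by induction on $k$, multiplying $A^{(k)}$ by a generator and using the recurrence \eqref{crec} together with the two exponent identities $k-2n+e(k,n)=e(k+1,n)$ and $-k+2n-2+e(k,n-1)=e(k+1,n)$, whereas you expand a fixed $k$-fold product $A(f_1)\cdots A(f_k)$ explicitly as a sum over sign patterns $\epsilon\in\{\pm1\}^k$ and compute the multiplier of each summand under $\eta\to\eta+\tau$ directly, via $\sum_{j<i}\epsilon_i\epsilon_j=\tfrac12\bigl[(\sum_i\epsilon_i)^2-\sum_i\epsilon_i^2\bigr]$. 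Your route buys a closed formula for the coefficients and makes completely transparent \emph{why} the exponent depends only on $(k,m)$ and not on the sign pattern or the $f_i$ (which is the hidden content of the paper's induction); the paper's recurrence argument is shorter to write and reuses machinery already set up for the holomorphy discussion. Both correctly rely on the $f_i$ being $\eta$-independent elements of $V_1(\mu)$, so that all $\eta$-dependence enters through the integer shifts $\sigma_{i-1}$, and both give the same integrality of $e(k,m)$ from $(k-2m)^2\equiv k\pmod 2$. I see no gap.
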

\begin{proof}
We have already proved the holomorphy assertion. From this and the simplicity of the zeros of~$P^{(k)}_m(\eta,z)$ for $\eta\notin\cD_k$, the assertions concerning the pole locations and their simplicity follow. Furthermore, the residue relations hold true for $\eta\notin\Q\Lambda$, and the residues are finite and continuous in $\eta$ as long as $\eta$ stays away from the set $\cD_k$ (on which the pole multiplicity can be greater than one). Hence the residue relations continue to hold on the complement of~$\cD_k$. 

It remains to prove the quasi-periodicity claim. Of course, 1-periodicity is plain. More generally, it is straightforward to check that~\eqref{etaper}--\eqref{ekm} are valid for small~$k$, and for general~$k$ when~$m$ equals~$0$ and~$k$. 

Turning to the general case, we first express the coefficients of the product $A(a,\nu)A^{(k)}$, where $A(a,\nu)$ is given by~\eqref{Apar} and $A^{(k)}$ is the general A$\De$O~\eqref{AVk} in~$\cV_k(\mu)$, in terms of those of~$A^{(k)}$. This yields
\be\label{crec}
c_{k+1-2n}^{(k+1)}(\eta;z)=\frac{\theta(z+\vec{a}-\nu)}{\theta(2z)}c_{k-2n}^{(k)}(\eta;z+\eta)+
\frac{\theta(z-\vec{a}+\nu)}{\theta(2z)}c_{k-2n+2}^{(k)}(\eta;z-\eta),
\ee
where $n=0,\ldots,k+1$, and $c^{(k)}_{\pm (k+2)}\equiv 0$. 

We now proceed by induction on~$k$. It is easy to check quasi-periodicity with the exponents~\eqref{ekm} for~$k$ equal to 1 and 2. Assuming quasi-periodicity is valid with the stated exponents up to and including~$k\ge 2$, we exploit the recurrence~\eqref{crec} to verify its validity for~$k+1$, as follows. Using quasi-periodicity in~$z$ (as given by~\eqref{cmul}), we have
\be
c_{k-2n}^{(k)}(\eta+\tau;z+\eta+\tau)=\mu^{k-2n}c_{k-2n}^{(k)}(\eta+\tau;z+\eta).
\ee
Next, we use the induction assumption to obtain
\be
c_{k-2n}^{(k)}(\eta+\tau;z+\eta+\tau)=\mu^{k-2n}\mu^{e(k,n)}c_{k-2n}^{(k)}(\eta;z+\eta).
\ee
Thus we should check
\be
k-2n+e(k,n)=e(k+1,n),
\ee
which is routine. Likewise, the second coefficient in~\eqref{crec} leads to the easily verified identity
\be
-k+2n-2+e(k,n-1)=e(k+1,n).
\ee
Hence quasi-periodicity of~$c_{k+1-2n}^{(k+1)}(\eta;z)$ with the asserted exponent is now clear from~\eqref{crec}.
\end{proof}

To conclude this section, we would like to repeat that Theorem~3.9 is concerned with the $\eta$-dependence of the coefficients of A$\De$Os arising from generators~$A(f)$~\eqref{Af} that only depend on $\eta$ via the shifts. In this connection we recall that we found it convenient to allow $\eta$-dependence of the four functions~\eqref{fag}--\eqref{fpagp} in the proof of Lemma~3.3, so as to show that $\cV_2(\mu)$ contains the constants. In view of this extra dependence,  this conclusion only pertains to a fixed $\eta$-value. In particular, it does not entail that the above coefficients~$c^{(2)}_0(\eta;z)$ can ever be constant in~$z$ and nonconstant in~$\eta$. Indeed, such a behavior can be ruled out, thanks to the holomorphy and quasi-periodicity results of the theorem. Rather, Lemma~3.3 implies that for any fixed nonzero~$\eta_0$ there exist coefficients~$c^{(2)}_0(\eta;z)$ that reduce to a nonzero constant for $\eta=\eta_0$ (together with $c^{(2)}_{\pm 2}(\eta_0;z)$ vanishing identically).

%\newpage

\section{A$\De$Os of van Diejen type}

\subsection{The Sklyanin algebras $\cS_{\pm}$}

Thus far the shift constant $\eta$ and elliptic modulus $\tau$ have played completely different roles. In this section we switch to notation in which the parameters
\be\label{apm}
a_+\equiv -i\tau,\ \ \ a_-\equiv -2i\eta,\ \ \ \im \tau>0,\ \ \im \eta>0,
\ee
enter in a symmetric way. This is because the results in this section involve the elliptic gamma function $G(r,a_+,a_-;z)$ introduced and studied in~\cite{Rui97}, and because the symmetries of the van Diejen A$\De$Os~\cite{vDie94} find their most natural expression in this notation~\cite{Rui04}. Note in particular that in this section we require $\eta$ to be in the upper half plane, alongside with $\tau$.

Taking $r=\pi$ henceforth, the elliptic gamma function can be defined by
\be\label{Gell}
G(z)=\prod_{m,n=1}^{\infty}
\frac{1-q_{+}^{2m-1}q_{-}^{2n-1}e^{-2i\pi z}}{1-q_{+}^{2m-1}
q_{-}^{2n-1}e^{2i\pi z}},  \ \ \ q_{\pm}\equiv \exp (-\pi a_{\pm}),\ \ \ \re a_+,\re a_->0.
\ee
(Here and below, we suppress the dependence on the parameters when no confusion can arise.) Obviously, $G(z)$ is a meromorphic function that is symmetric under the interchange of $a_+$ and $a_-$, a property that is now often referred to as modular invariance. Moreover, $G(z)$ satisfies the A$\De$Es (analytic difference equations)
\be\label{Gades}
G(z+ia_{-\de}/2)/G(z-ia_{-\de}/2) =R_{\de}(z),\ \ \ \ \de=+,-,
\ee
where the right-hand side functions are given by
\be
R_{\de}(z)=\prod_{k=1}^{\infty}(
1-q_{\de}^{2k-1}\exp (2i\pi z)) (
1-q_{\de}^{2k-1}\exp (-2i\pi z)).
\ee
Hence $R_{\de}(z)$ is a holomorphic even 1-periodic function that satisfies the A$\De$E
\be\label{Rade}
R_{\de}(z+ia_{\de}/2)/R_{\de}(z-ia_{\de}/2)=-\exp(-2i\pi z).
\ee

The relation of the functions $R_{\pm}(z)$ to the theta functions used so far (cf.~\eqref{th1}--\eqref{th24}) is given by
\be
R_{\de}(z)=\prod_{k=1}^{\infty}(1-q_{\de}^{2k})^{-1} \cdot \theta_4(z|ia_{\de}),\ \ \ \de=+,-,
\ee
\be\label{thR}
\theta (z|\tau)=iq^{1/4}\prod_{k=1}^{\infty}(1-q^{2k})\cdot \exp(-i\pi z)R_{+}(z-\tau/2).
\ee
The elliptic gamma function satisfies a variety of multiplication formulae~\cite{Rui97}; in particular, its duplication formula entails
\be\label{Rdup}
R_{\de}(2z)=R_{\de}(z\pm ia_{\de}/4,z+1/2\pm ia_{\de}/4),
\ee
which is the counterpart of~\eqref{dup}. 

Consider now the pair of A$\De$Os
\be\label{AR2}
A_{R,\de}(l;z)\equiv f_{\de}(l;z)\exp(ia_{-\de}\partial_z/2)+(z\to -z),\ \ \ l\in\C^4,\ \ \ \de=+,-,
\ee
where
\be\label{fde}
f_{\de}(l;z)\equiv R_{\de}(z+\vec{l}+ia_{-\de}/4)/R_{\de}(2z-ia_{\de}/2),\ \ \ l\in\C^4,\ \ \ \de=+,-.
\ee
Using~\eqref{apm} and~\eqref{thR}, we see that $A_{R,+}(l;z)$ is of the form~\eqref{AR}, with the multiplier given by
\be\label{fmu}
f_{+}(l;z+ia_{+})/f_{+}(l;z)=\exp\Big(-2\pi i \Big(\sum_{n=1}^4l_n+2ia\Big)\Big)=:\mu ,
\ee
cf.~\eqref{Rade}. Here, we have introduced the modular invariant parameter
\be\label{defa}
a\equiv (a_{+}+a_-)/2.
\ee
It should be noted, however, that we have introduced an explicit dependence of the coefficients on $a_{-}$. This is expedient with an eye on the relation to the van Diejen A$\De$Os; we need only replace $l_n$ by $l_n-ia/2$ in $A_{R,+}(l;z)$ to revert to an A$\De$O of the form $A(a,\nu)$~\eqref{Apar}, up to a multiplicative constant~$\exp(- 4i\pi\nu)$ following from~\eqref{thR}.

As a consequence, we have rewritten the generators of the $\cD(\mu)$-representation of the Sklyanin algebra $\cS_{\eta}$ with respect to the elliptic lattice $\Z+\tau \Z$ in such a way that when the parameters $a_+$ and $a_-$ are interchanged, we get the generators of the $\cD(\mu)$-representation of the Sklyanin algebra $\cS_{\tau/2}$ with respect to the elliptic lattice $\Z+2\eta \Z$. Equivalently, letting
\be
\Lambda_{\de}\equiv \Z +ia_{\de}\Z,\ \ \ \de=+,-,
\ee
we get generators of the $\cD(\mu)$-representation of two Sklyanin algebras $\cS_+$ and $\cS_-$ associated to elliptic lattices $\Lambda_+$ and $\Lambda_-$, respectively. (Such a modular pair of Sklyanin algebras was first obtained and studied by Spiridonov~\cite{Spi09}.)

\subsection{Kernel identities for the Sklyanin generators}

We proceed to derive kernel identities that hold for both sets of generators at once. Specifically, we introduce a kernel function
\be\label{defK}
K(\gamma;z,y)\equiv G(\pm z\pm y-\gamma),\ \ \  \gamma\in \C,
\ee
and consider the question if and when identities of the form
\be\label{Kid}
A_{R,\de}(l;z)K(\gamma;z,y)=c_{\de} A_{R,\de}(k;y)K(\gamma;z,y),\ \ \ c_{\de}\in\C,\ \ \ \de=+,-,
\ee
are valid. More precisely, we ask: Fixing  $\gamma\in\C$ with
\be\label{gass}
2\gamma\notin \Lambda_+\cup \Lambda_-,
\ee
until further notice, can one find $c_+,c_-\in\C$ and $l,k\in\C^4$ such that~\eqref{Kid} holds true? 

To answer this, we divide~\eqref{Kid} by $K(\gamma;z-ia_{-\de}/2,y)$ and then use the $G$-A$\De$Es~\eqref{Gades} to obtain 
\bea\label{Rreq}
&  &  \frac{R_{\de}(z+\vec{l}+ia_{-\de}/4)}{R_{\de}(2z-ia_{\de}/2)}\cdot\frac{R_{\de}(z\pm y-\gamma)}{R_{\de}(z\pm y+\gamma)}+\frac{R_{\de}(z-\vec{l}-ia_{-\de}/4)}{R_{\de}(2z+ia_{\de}/2)}   
\nonumber \\
  & = & c_{\de}\left( \frac{R_{\de}(y+\vec{k}+ia_{-\de}/4)}{R_{\de}(2y-ia_{\de}/2)} 
\cdot\frac{R_{\de}(z+ y-\gamma)}{R_{\de}(z+ y+\gamma)} +
\Big( y\to -y\Big)\right).
\eea
For this to hold, it is necessary that the residues at the (generically) simple pole $z=y-\gamma+ia_{\de}/2$ of the left-hand and right-hand sides be equal. 
This yields upon canceling equal factors (in particular a constant factor $R_{\de}(ia_{\de}/2-2\gamma)$, which does not vanish thanks to our assumption~\eqref{gass}), 
\be\label{RcR}
R_{\de}(y-\gamma +ia_{\de}/2 +\vec{l}+ia_{-\de}/4)=c_{\de}
R_{\de}(y-\vec{k}-ia_{-\de}/4).
\ee
Clearly, \eqref{RcR} implies
\be\label{c=1}
c_{\de}=1,
\ee
\be\label{klg}
k_n=-l_n+\gamma -ia,\ \ \ n=1,2,3,4,
\ee
with the last equality holding modulo permutations and addition of integers. 

From now on, we require~\eqref{c=1}--\eqref{klg}, so that we get equal residues at the pole $z=y-\gamma+ia_{\de}/2$. To obtain residue equality at all poles $\Lambda_{\de}$-congruent to the latter, we must next ensure that the first term on the left-hand side of~\eqref{Rreq} and the second term on the right-hand side have equal multipliers under $\Lambda_{\de}$-translations. Taking $z\to  z+ia_{\de}$ and using~\eqref{Rade}, this requirement yields
\be\label{gamreq}
\exp(4\pi i\gamma )=\exp\Big( 2\pi i\Big(\sum_{n=1}^4l_n+2ia\Big)\Big)=1/\mu,
\ee
cf.~\eqref{fmu}.

Imposing~\eqref{gamreq} as well, all terms have equal multipliers under $\Lambda_{\de}$-translations of~$z$. Residue equality at the pole $z=-y-\gamma+ia_{\de}/2$ is readily checked, so the difference of left-hand and right-hand side of~\eqref{Rreq} yields a function $d(z)$ that has period 1, quasi-period~$ia_{\de}$ with multiplier $1/\mu$, and no $y$-dependent $z$-poles. 

We proceed to show that $d(z)$ vanishes identically. 
By virtue of our assumption~\eqref{gass} we have
 $\mu\ne \exp(2\pi ka_{\de})$,  $\forall k\in\Z$. Hence one need only check that $d(z)$ has no poles for $\pm z=0,1/2,ia_{\de}/2 +1/2,ia_{\de}/2$.  Using evenness of $R_{\de}(z)$ and~\eqref{Rade}, this follows by straightforward calculations that we omit. We now summarize and extend our findings in the following theorem.
 
\begin{theorem}
Let $\gamma,a_+,a_-\in\C$ with $\re a_+,\re a_->0$, and assume $l\in\C^4$ satisfies~\eqref{gamreq}. Then we have
\be\label{Kid2}
A_{R,\de}(l;z)K(\gamma;z,y)=A_{R,\de}(k;y)K(\gamma;z,y),\ \ \ \de=+,-,
\ee
where $K(\gamma;z,y)$ is given by~\eqref{defK} and $k$ by~\eqref{klg}.
\end{theorem}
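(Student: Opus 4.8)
The plan is to establish the kernel identity \eqref{Kid2} by verifying that, under the stated hypotheses, the difference operator statement reduces to an identity among $R_\delta$-functions which can be checked by a residue/quasi-periodicity argument. Concretely, I would divide \eqref{Kid2} through by $K(\gamma;z-ia_{-\delta}/2,y)$ and invoke the fundamental A$\De$Es \eqref{Gades} satisfied by the elliptic gamma function, exactly as in the passage leading to \eqref{Rreq}. This converts the operator identity into the purely function-theoretic equation \eqref{Rreq} relating the four terms built from $R_\delta$. Since everything is symmetric under $a_+\leftrightarrow a_-$ (modular invariance of $G$), it suffices to carry out the verification for a generic $\delta$, with the other case following verbatim.

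Next I would organize the argument around the poles and multipliers of the function $d(z)$ defined as the difference of the left- and right-hand sides of \eqref{Rreq}. First, matching residues at the generically simple pole $z=y-\gamma+ia_\delta/2$ forces, upon canceling the nonvanishing factor $R_\delta(ia_\delta/2-2\gamma)$ (nonzero by \eqref{gass}), the relation \eqref{RcR}, which in turn pins down $c_\delta=1$ and the parameter identification $k_n=-l_n+\gamma-ia$ of \eqref{klg}. Then, to extend residue equality to all $\Lambda_\delta$-congruent poles, I would compare multipliers under $z\to z+ia_\delta$ using \eqref{Rade}; this is precisely where the balancing condition \eqref{gamreq} enters, ensuring the two relevant terms transform identically.

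With \eqref{klg} and \eqref{gamreq} in force, all four terms acquire the same quasi-periodicity multiplier $1/\mu$ under $z\to z+ia_\delta$, so $d(z)$ is $1$-periodic, has quasi-period $ia_\delta$ with multiplier $1/\mu$, and—after confirming residue equality also at $z=-y-\gamma+ia_\delta/2$—carries no $y$-dependent poles. The final step is to show $d(z)\equiv 0$. Because \eqref{gass} guarantees $\mu\ne\exp(2\pi k a_\delta)$ for all $k\in\Z$, a nonzero $d$ would be forced to have poles; one checks directly, using evenness of $R_\delta$ and \eqref{Rade}, that $d$ is in fact regular at the representative points $\pm z=0,\,1/2,\,ia_\delta/2,\,ia_\delta/2+1/2$. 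A quasi-periodic function with the stated multiplier and no poles must vanish, giving $d\equiv 0$ and hence \eqref{Kid2}.

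I expect the main obstacle to be the residue-matching bookkeeping at the several distinguished poles, where the two $\delta$-shifts $ia_\delta/2$ and $ia_{-\delta}/2$ must be tracked carefully so that the correct factors cancel; in particular, verifying regularity of $d(z)$ at the four special points demands attention to potential coincidences of poles and zeros of the $R_\delta$-factors. The quasi-periodicity accounting \eqref{gamreq} is delicate but routine once the multiplier of each $R_\delta$-factor under \eqref{Rade} is written out. The conceptual content is that the single balancing constraint \eqref{gamreq} on $\sum_n l_n$ simultaneously reconciles the residues and the multipliers, and modular invariance then delivers both the $\delta=+$ and $\delta=-$ identities with no extra work.
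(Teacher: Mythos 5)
Your argument reproduces the paper's derivation essentially verbatim: dividing by $K(\gamma;z-ia_{-\de}/2,y)$, reducing to \eqref{Rreq} via \eqref{Gades}, extracting \eqref{RcR}, \eqref{c=1} and \eqref{klg} from residue matching at $z=y-\gamma+ia_{\de}/2$, imposing \eqref{gamreq} to equalize the multipliers under $z\to z+ia_{\de}$, and then killing the difference $d(z)$ by checking regularity at the four half-periods and using that a pole-free quasi-periodic function with multiplier $1/\mu\ne\exp(-2\pi k a_{\de})$ must vanish.

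The one point you have not addressed is that the theorem is stated for \emph{all} $\gamma$ satisfying \eqref{gamreq}, whereas your entire argument runs under the genericity assumption \eqref{gass}, i.e.\ $2\gamma\notin\Lambda_+\cup\Lambda_-$. That assumption is genuinely used twice: to cancel the nonvanishing factor $R_{\de}(ia_{\de}/2-2\gamma)$ when matching residues, and to guarantee $\mu\ne\exp(2\pi k a_{\de})$ in the final step. Since \eqref{gamreq} does not exclude $2\gamma\in\Lambda_{\pm}$ (e.g.\ $\mu=1$ forces $2\gamma\in\Z$), the excluded values must be recovered separately; the paper does this in one line by observing that both sides of \eqref{Kid2} are meromorphic in $\gamma$, so the identity extends from the generic set by continuation. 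Adding that remark closes the gap.
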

\begin{proof}
We have already shown that~\eqref{Kid2} is valid under the assumption~\eqref{gass}. Since we are dealing with functions that are meromorphic in $\gamma$, the theorem readily follows.
\end{proof}

From~\eqref{klg}--\eqref{gamreq} and~\eqref{fmu} we obtain
\be
f_{\de}(k;z+ia_{\de})/f_{\de}(k;z)=f_{\de}(l;z+ia_{\de})/f_{\de}(l;z)=\mu,\ \ \ \de=+,-.
\ee
As a consequence, if the A$\De$O $A_{R,\de}(l;z)$ in~\eqref{Kid2} belongs to the representation~$\cD(\mu)$, then so does~$A_{R,\de}(k;z)$.
 We proceed to show that there are actually two distinct bijections involved. To this end we introduce 
\be\label{defkap}
\kappa := (1,1,1,1),
\ee
and two maps
\be\label{phi1}
\phi_1\colon \C^4\to\C^4,\ \ \ l\mapsto \hat{l}:= -l+ \langle \kappa,l\rangle \kappa/2,
\ee
\be\label{phi2}
\phi_2\colon \C^4\to\C^4,\ \ \ l\mapsto \tilde{l}:= \hat{l}+\kappa/2,
\ee
where $\langle \cdot,\cdot\rangle$ denotes the standard bilinear form on~$\C^4$. Clearly, $\phi_1$ is a reflection, whereas
\be
\phi_2(\phi_2(l))=l+\kappa.
\ee
We now set (cf.~\eqref{gamreq})
\be\label{gp}
2\hat{\gamma}\equiv 2ia+ \sum_{n=1}^4l_n  \pmod 1,\ \ \ \re \hat{\gamma}\in [0,1/2),
\ee
\be\label{gt}
\tilde{\gamma}:= \hat{\gamma}+1/2,
\ee
and introduce kernel functions
\be\label{K12}
K_1(z,y):= K(\hat{\gamma};z,y),\ \ \ K_2(z,y):= K(\tilde{\gamma};z,y).
\ee
Then the theorem has the following corollary.

\begin{corollary}
Let $a_+,a_-\in\C$ with $\re a_+,\re a_->0$, and let $l\in\C^4$. Then we have
\be\label{Kidj}
A_{R,\de}(l;z)K_j(z,y)=A_{R,\de}(\phi_j(l);y)K_j(z,y),\ \ \ \de=+,-,\ \ \ j=1,2.
\ee
\end{corollary}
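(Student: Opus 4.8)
The plan is to derive the corollary directly from Theorem~4.1 by matching the two maps $\phi_1,\phi_2$ against the substitution $k_n=-l_n+\gamma-ia$ that appears in~\eqref{klg}, for the two specific $\gamma$-values $\hat\gamma$ and $\tilde\gamma$ defined in~\eqref{gp}--\eqref{gt}. The key observation is that Theorem~4.1 is an unconditional identity (the assumption~\eqref{gass} having been removed by the meromorphy argument in its proof), so I need only verify that for each $j$ the data $(\gamma,k)=(\hat\gamma,\phi_1(l))$ and $(\gamma,k)=(\tilde\gamma,\phi_2(l))$ satisfy the two hypotheses of the theorem, namely the constraint~\eqref{gamreq} linking $\gamma$ to $l$, and the relation~\eqref{klg} defining $k$ from $l$ and $\gamma$.

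First I would treat $j=1$. The constraint~\eqref{gamreq} reads $\exp(4\pi i\gamma)=\exp(2\pi i(\langle\kappa,l\rangle+2ia))$, and this is precisely the content of~\eqref{gp}, which sets $2\hat\gamma\equiv 2ia+\langle\kappa,l\rangle\pmod 1$; exponentiating gives $\exp(4\pi i\hat\gamma)=\exp(2\pi i(\langle\kappa,l\rangle+2ia))$ as required. Next I would check that $\phi_1(l)$ equals the $k$ produced by~\eqref{klg} with $\gamma=\hat\gamma$: componentwise, $k_n=-l_n+\hat\gamma-ia$, and since $\hat\gamma=ia+\langle\kappa,l\rangle/2\pmod{1/2}$ by~\eqref{gp}, we get $k_n=-l_n+\langle\kappa,l\rangle/2$ modulo addition of half-integers, which is exactly $\hat l_n=\bigl(-l+\langle\kappa,l\rangle\kappa/2\bigr)_n$ from~\eqref{phi1}. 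Here one must be slightly careful that~\eqref{klg} holds only modulo permutations and integer shifts, but since $R_\delta$ is $1$-periodic and the components enter symmetrically through $R_\delta(z+\vec l+\cdots)$, these ambiguities do not affect the resulting A$\Delta$O.

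For $j=2$, I would exploit the relations $\tilde\gamma=\hat\gamma+1/2$ and $\tilde l=\hat l+\kappa/2$. The constraint~\eqref{gamreq} is insensitive to shifting $\gamma$ by $1/2$ since $\exp(4\pi i\tilde\gamma)=\exp(4\pi i\hat\gamma)\exp(2\pi i)=\exp(4\pi i\hat\gamma)$, so~\eqref{gamreq} holds for $(\tilde\gamma,l)$ whenever it holds for $(\hat\gamma,l)$. Then~\eqref{klg} with $\gamma=\tilde\gamma$ gives $k_n=-l_n+\tilde\gamma-ia=(-l_n+\hat\gamma-ia)+1/2=\hat l_n+1/2=\tilde l_n$, matching~\eqref{phi2}. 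Theorem~4.1 then yields~\eqref{Kidj} for $j=2$ directly.

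The only genuine subtlety, and the step I expect to require the most care, is the reconciliation of the exact formulas for $\phi_1,\phi_2$ with the congruence~\eqref{klg}, which holds merely modulo permutations and integer translations. I would argue that because the coefficient $f_\delta(l;z)$ in~\eqref{fde} depends on $l$ only through the symmetric combination $R_\delta(z+\vec l+ia_{-\delta}/4)=\prod_{n=1}^4 R_\delta(z+l_n+ia_{-\delta}/4)$, which is manifestly invariant under permuting the $l_n$ and under shifting any $l_n$ by an integer (by $1$-periodicity of $R_\delta$), the A$\Delta$O $A_{R,\delta}(k;y)$ is unchanged if $k$ is replaced by any vector congruent to it in this sense. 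Hence the representative $\phi_j(l)$ given by the clean closed forms~\eqref{phi1}--\eqref{phi2} produces the same operator as the representative singled out in~\eqref{klg}, and the corollary follows. No new computation beyond this bookkeeping is needed, since the analytic heart of the result is already contained in Theorem~4.1.
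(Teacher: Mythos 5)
Your argument is correct and coincides with the paper's own proof, which likewise consists of observing that the condition~\eqref{gamreq} holds for both $\hat{\gamma}$ and $\tilde{\gamma}$ and that $\phi_j(l)$ realizes~\eqref{klg}; you merely spell out these verifications. The one loose point --- shared with the paper --- is that the normalization $\re \hat{\gamma}\in[0,1/2)$ in~\eqref{gp} fixes $\hat{\gamma}$ only up to a half-integer, and a uniform shift of all components $k_n$ by $1/2$ is \emph{not} absorbed by the $1$-periodicity of $R_{\de}$ (it interchanges $\phi_1(l)$ and $\phi_2(l)$), so for the stated pairing of $K_j$ with $\phi_j$ one should really take $\hat{\gamma}\equiv ia+\langle \kappa,l\rangle/2$ modulo $1$ rather than modulo $1/2$.
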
 
\begin{proof}
For both choices of $j$, the condition~\eqref{gamreq} is satisfied, and $\phi_j(l)$ amounts to~\eqref{klg}. Hence the identities~\eqref{Kid2} imply~\eqref{Kidj}.
\end{proof}

It is clear from the above that the kernel functions $K_j(z,y)$ give rise to two involutions
\be\label{Phi_j}
\Phi_j\colon A_{R,\pm}(l;z)\mapsto A_{R,\pm}(\phi_j(l);z),\ \ \ j=1,2,
\ee
of the generators of the Sklyanin algebras~$\cS_{\pm}$. Also, a moment's thought shows that both bijections preserve the Sklyanin relations for twofold products  in the opposite order (see also Appendix~B, in particular~\eqref{aa1}--\eqref{aa2}). Taking linear combinations of arbitrary products of generators, it therefore follows that the two maps extend to order-reversing involutive automorphisms of the $\cD(\mu)$-representations of~$\cS_{\pm}$.

Thus far we have viewed the $\cD(\mu)$-representation of the Sklyanin algebras $\cS_{\pm}$ as being defined in terms of endomorphisms on the space $\cM_e$ of even meromorphic functions. As a corollary of our kernel results, however, we can easily show that for special $\mu$ finite-dimensional submodules occur.

The simplest case is $\mu=1$. Then all coefficient functions are elliptic functions, and since we have $\hat{\gamma}=0$ and $\tilde{\gamma}=1/2$, we get
\be
K_j(z,y)=1,\ \ \ j=1,2,\ \ \ (\mu=1).
\ee
Thus \eqref{Kidj} reduces to the identities
\be
f_{\de}(l;z)+(z\to -z)=f_{\de}(\phi_j(l);y)+(y\to -y),\ \ \ j=1,2\ \ \ \ (\mu=1),
\ee
and it follows in particular that the algebras leave the space of constant functions invariant.

Next, we study the choice
\be
\mu =\exp(-2\pi a_{-\de})\Rightarrow 2\gamma =-ia_{-\de} \pmod 1,
\ee
cf.~\eqref{gamreq}. From~\eqref{Gades} we deduce that the two kernel functions reduce to
\be
R_{\de}(z+ y)R_{\de}(z- y),\ \ \ R_{\de}(z+ y +1/2)R_{\de}(z- y +1/2).
\ee
Viewing them as functions of~$z$ depending on a parameter $y$, their span as~$y$ varies over~$\C$ yields a two-dimensional subspace~$\Theta_{1,\de}$ of $\cM_e$, namely the space of even, order-2 theta functions (with respect to $\Lambda_{\de}$). By virtue of the kernel identities, the action of the A$\De$Os $A_{R,\pm}(l;z)$ on these two kernel functions can be translated to an action on the parameter~$y$, and so it follows that we have
\be
A_{R,\pm}(l;z)\Theta_{1,\de}\subset \Theta_{1,\de}\Rightarrow \cS_{\pm}\Theta_{1,\de} \subset \Theta_{1,\de},\ \ \ (\mu=\exp(-2\pi a_{-\de})).
\ee

More generally, the choices
\be
\mu =\exp(-2N\pi a_{-\de})\Rightarrow 2\gamma =-iNa_{-\de} \pmod 1,\ \ \ N=0,1,2,\ldots,
\ee
yield two kernel functions
\be\label{kerN}
K_{\de}^{(N)}(z,y)\equiv\prod_{k=0}^{N-1}R_{\de}(z\pm y +(2k-N+1)ia_{-\de}/2),\ \ \ K_{\de}^{(N)}(z+1/2,y),
\ee
which, viewed as functions of $z$ depending on a parameter $y\in\C$, belong to the $(N+1)$-dimensional space $\Theta_{N,\de}$ of even, order-$2N$ theta functions. It then follows as before that whenever these functions span this space, we have
\be
\cS_{\pm}\Theta_{N,\de} \subset \Theta_{N,\de},\ \ \ (\mu=\exp(-2N\pi a_{-\de})).
\ee

As a consequence of the kernel identities, we have therefore arrived at the finite-dimensional modules first studied by Sklyanin~\cite{Skl83}. Moreover, the second kernel function in~\eqref{kerN} is proportional to the reproducing kernel for the module~$\Theta_{N,\de}$. The explicit form of this reproducing kernel was already conjectured by Sklyanin, but first proved by Rosengren~\cite{Ros04}. To be sure, in these papers only the algebra $\cS_{\eta}\sim\cS_{+}$ and modules~$\Theta_{N,+}$ are considered, with further restrictions on $a_+$ and $a_-$ (as given by~\eqref{apm}).

The insight that the latter modules are also left invariant by the `modular copy' $\cS_{-}$ seems to be new. It would be worthwhile to study the action of $\cS_{-}$ on $\Theta_{N,+}$ in more detail, but this is beyond our scope.

A different type of finite-dimensional module for~$\cS_{\pm}$ arises for the choices
\be
\mu =\exp(-2\pi (N_{+}a_{+}+N_{-}a_{-}))\Rightarrow 2\gamma =-i(N_{+}a_{+}+N_{-}a_{-}) \pmod 1,\ \ \ N_{+},N_{-}\ge 1.
\ee
Indeed, a straightforward calculation using the $G$-A$\De$Es~\eqref{Gades} yields the two kernel functions
\be\label{kerNpNm}
K^{(N_{+},N_{-})}(z,y)\equiv K_{-}^{(N_{+})}(z+iN_{-}a_{-}/2,y)
K_{+}^{(N_{-})}(z-iN_{+}a_{+}/2,y),\ \ \ K^{(N_{+},N_{-})}(z+1/2,y).
\ee
The first and second factor admit an expansion as tensors of rank at most $N_++1$ and $N_-+1$, resp., so upon multiplying out, we see that we obtain a module of dimension at most $(N_{+}+1)(N_{-}+1)$. Once again, we do not embark on a further study.

\subsection{The connection to the van Diejen A$\De$Os}

We proceed to clarify the connection of the products
\be\label{ARlm}
A_{R,\de}^{(2)}(l,m;z)\equiv A_{R,\de}(l;z)A_{R,\de}(m;z),
\ \ \ \langle \kappa, l-m\rangle =0,\ \ \ \de=+,-,
\ee
and the kernel identities
\be\label{Kidj2}
A_{R,\de}^{(2)}(l,m;z)K_j(z,y)=A_{R,\de}^{(2)}(\phi_j(m),\phi_j(l);y)K_j(z,y),\ \ \ j=1,2,
\ee
(which are immediate consequences of~\eqref{Kidj}), to the van Diejen A$\De$Os and their kernel identities, as encoded in~Proposition~3.1 of~\cite{Rui09}.

To begin with, we should define the van Diejen A$\De$Os. With the real period $\pi/r$ of their coefficients fixed to 1 (as in the elliptic gamma function~\eqref{Gell}), they depend on the modular parameters $a_+,a_-$ and a `coupling' vector $h\in\C^8$. Specifically, they are given by
\be\label{AD}
A_{D,\de}(h;z)\equiv V_{\de}(h;z)\exp(-ia_{-\de}\partial_z)+V_{\de}(h;-z)\exp(ia_{-\de}\partial_z)+V_{b,\de}(h;z),
\ee
where
\be\label{Ve}
V_{\de}(h;z)\equiv\frac{\prod_{n=1}^8R_{\de}(z-h_n-ia_{-\de}/2)}{R_{\de}(2z+ia_{\de}/2)R_{\de}(2z-ia_{-\de}+ia_{\de}/2)},
\ee
\be\label{Vbe}
V_{b,\de}(h;z)\equiv \frac{\sum_{t=0}^3p_{t,\de}(h)
[\cE_{t,\de}(\xi;z)-\cE_{t,\de}(\xi;\omega_{t,\de})]}{2R_{\de}(\xi -ia_{\de}/2)R_{\de}(\xi -ia_{-\de}-ia_{\de}/2)}.
\ee
Here we are using half-periods
\be
\omega_{0,\de}=0,\ \ \omega_{1,\de}=1/2,\ \ \omega_{2,\de}=ia_{\de}/2,\ \ \omega_{3,\de}=-1/2-ia_{\de}/2,
\ee
the product functions are given by
\be\label{p01}
p_{0,\de}(h)\equiv \prod_n R_{\de}(h_n),\ \ \ p_{1,\de}(h)\equiv \prod_n R_{\de}(h_n-1/2),
\ee
\be
p_{2,\de}(h)\equiv \exp(-2\pi a_{\de})\prod_n \exp(-i\pi h_n)R_{\de}(h_n-ia_{\de}/2),
\ee
\be\label{p3}
p_{3,\de}(h)\equiv \exp(-2\pi a_{\de})\prod_n \exp(i\pi h_n)R_{\de}(h_n+1/2+ia_{\de}/2),
\ee
and $\cE_{t,\de}$ reads
\be\label{cEt}
\cE_{t,\de}(\xi;z)\equiv
\frac{R_{\de}(z+\xi -ia-\omega_{t,\de})R_{\de}(z-\xi +ia-\omega_{t,\de})}{R_{\de}(z-ia-\omega_{t,\de})R_{\de}(z+ia-\omega_{t,\de})},\ \ \
t=0,1,2,3.
\ee

Clearly, the functions $\cE_{t,\de}(\xi;z)$ and $V_{b,\de}(h;z)$ are elliptic in $z$ with periods 1 and $ia_{\de}$. It is not clear by inspection, but true that the function $V_{b,\de}(h;z)$ does not depend on the parameter $\xi\in\C$. We claim this follows from Lemma~3.2 in~\cite{Rui04}. The function at issue there deviates from $V_{b,\de}(h;z)$ by a constant, but this constant is $\xi$-independent. Indeed, in~\cite{Rui04} (and also in~\cite{Rui09}) the factor $\cE_{t,\de}(\xi;\omega_{t,\de})$ is replaced by $\cE_{t,\de}(\xi;z_t)$, with $z_0=z_2=1/2$ and
$z_1=z_3=0$. Since $z_t$ does not depend on $\xi$, the claim follows. (It can be verified directly by noting that $V_{b,\de}(h;z)$ is elliptic in $\xi$ and checking that residues vanish.)

Next, we note that~\eqref{cEt} implies
\be\label{cEcon}
\cE_{t,\de}(\xi;\omega_{t,\de})=\left(\frac{R_{\de}(\xi -ia)}{R_{\de}(ia)}\right)^2.
\ee
Therefore, the additive constant diverges when 
$R_{\de}(ia)$ vanishes. Accordingly, we assume from now on that $a_{\pm}$ are such that we have
\be\label{apmz}
R_{\pm}(ia)\ne 0,\ \ \ a=(a_++a_-)/2.
\ee

We proceed by noting
\be\label{Vhm}
V_{\de}(h;-(z+ia_{\de}))/V_{\de}(h;-z)=\exp\Big(-2\pi i\Big(\sum_nh_n+4ia\Big)\Big).
\ee
Specializing to $h\in\C^8$ satisfying
\be\label{hmu}
\exp\Big(-2\pi i\Big(\sum_nh_n+4ia\Big)\Big)=\mu^2,
\ee
 it easily follows that $V_{+}(h;-z)$ belongs to the space $V_2(\mu)$ defined above Prop.~2.1. 

To establish the connection with the A$\De$Os of Sklyanin type characterized in Section~3, let us rewrite~\eqref{AD} as
\be\label{ADalt}
A_{D,\de}(h;z)=c_{-2,\de}(z)\exp(-ia_{-\de}\partial_z)+c_{2,\de}(z)\exp(ia_{-\de}\partial_z)+c_{0,\de}(z),
\ee
and use from now on notation that encodes the two Sklyanin algebras $\cS_{\pm}$ at issue in this section. Thus, we see from the previous paragraph that 
\be\label{c2V}
c_{2,\de}(z)\in V_{2,\de}(\mu).
\ee
Obviously, we also have 
\be
c_{-2,\de}(z)=c_{2,\de}(-z),
\ee
and $c_{0,\de}(z)$ is an even elliptic function. Furthermore, it should be noted that for $h$ and $h'$ satisfying~\eqref{hmu} we have
\be
[A_{D,+}(h;z),A_{D,-}(h';z)]_{-}=0,
\ee
in contrast to the non-commutativity of $A_{R,+}(l;z)$ and $A_{R,-}(l';z)$ for $l$ and $l'$ yielding the same multiplier.

Assuming from now on that the modular parameters $a_{\pm}$ satisfy (cf.~\eqref{etareq} and \eqref{apm})
\be\label{apmgen} 
ia_{\de}\notin\Q\Lambda_{-\de},\ \ \ \de=+,-,
\ee
(implying in particular~\eqref{apmz}), it follows that $c_{0,\de}(z)$ has simple poles for $z$-values $\Lambda_{\de}$-congruent to $\omega_{t,\de}\pm ia_{-\de}/2$, and no other poles.

In order to prove that we have
\be\label{ADcV}
A_{D,\de}(h;z)\in \cV_{2,\de}(\mu),
\ee
it is therefore enough to check that the residues of $c_{0,\de}(z)$ are related to those of $c_{2,\de}(z)$ in the way specified in Lemma~3.2. First, setting
\be
\rho_{\de}\equiv {\rm Res}\left( \frac{1}{R_{\de}(z)}\right)_{z=-ia_{\de}/2},
\ee
it is easy to check from~\eqref{Vbe}--\eqref{cEt} that we have
\be\label{c0res}
{\rm Res}\big(c_{0,\de}(z)\big)_{z=\omega_{t,\de}\pm ia_{-\de}/2}=\pm\frac{\rho_{\de}}{2}\frac{p_{t,\de}(h)}{R_{\de}(ia_{\de}/2+ia_{-\de})}.
\ee

Secondly, we need the pertinent residues of $c_{2,\de}(z)$. For $z$ equal to $-ia_{-\de}/2$ and $-ia_{-\de}/2+1/2$ they are easily seen to be equal to minus the residues 
~\eqref{c0res} of $c_{0,\de}(z)$ at these $z$-values. The calculation of the residues for $z$ equal to $-ia_{-\de}/2+ia_{\de}/2$ and $-ia_{-\de}/2+1/2+ia_{\de}/2$ is more arduous, but straightforward. The result is that they equal the residues 
~\eqref{c0res} of $c_{0,\de}(z)$ at these $z$-values times a factor $-\mu$.  

As a consequence, the coefficients of the A$\De$O~$A_{D,\de}(h;z)$ given by \eqref{ADalt} have the properties detailed below~\eqref{ASk}, so it is of Sklyanin type. Furthermore, it is manifestly even, so~Lemma~3.6 implies that \eqref{ADcV} holds true. We summarize this discussion in the following theorem.

\begin{theorem}
Assume $h\in\C^8$ satisfies~\eqref{hmu} and $a_+, a_-$ are numbers in the (open) right half plane satisfying~\eqref{apmgen}. Then the van Diejen A$\De$Os $A_{D,\de}(h;z)$ given by~\eqref{AD}--\eqref{cEt} belong to the $\cD(\mu)$-representation of the Sklyanin algebra $\cS_{\de}$ and satisfy~\eqref{ADcV}.
\end{theorem}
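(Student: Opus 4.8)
The plan is to verify that each van Diejen operator $A_{D,\de}(h;z)$ is an \emph{even} A$\De$O of Sklyanin type (relative to the lattice $\Lambda_{\de}$ and the shift $ia_{-\de}$) and then to invoke Lemma~3.6; this yields $A_{D,\de}(h;z)\in\cV_{2,\de}(\mu)$, i.e.~\eqref{ADcV}, and hence membership in the $\cD(\mu)$-representation of $\cS_{\de}$, at one stroke. Evenness is immediate from the form \eqref{AD}: the operator involves only the shifts $\exp(\pm ia_{-\de}\partial_z)$ and a multiplication term, and the coefficient of $\exp(-ia_{-\de}\partial_z)$ is that of $\exp(ia_{-\de}\partial_z)$ with $z\to -z$. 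Writing the operator as \eqref{ADalt}, it then remains to check that the coefficients $c_{\pm 2,\de}$ and $c_{0,\de}$ obey the three defining conditions of Sklyanin type: the quasi-periodicity \eqref{cmul}, holomorphy of the associated theta-products \eqref{poles}, and the pole/residue conditions (i)--(iii) of Lemma~3.2.

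First I would settle the top coefficient. Combining the multiplier computation \eqref{Vhm} with the constraint \eqref{hmu} on $h$ shows that $V_{+}(h;-z)$ has exactly the quasi-period multiplier $\mu^2$ required of an element of $V_{2,\de}(\mu)$; inspecting the denominator in \eqref{Ve} shows its poles are simple and located at the prescribed points, so Proposition~2.1 gives $c_{2,\de}(z)\in V_{2,\de}(\mu)$. The relation $c_{-2,\de}(z)=c_{2,\de}(-z)$ then supplies the evenness symmetry \eqref{cjeven} for $j=\pm 2$, and since $c_{0,\de}$ is an even elliptic function its quasi-periodicity \eqref{cmul} for $j=0$ holds trivially. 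Because $c_{2,\de}$ already lies in $V_{2,\de}(\mu)$, its poles and the residue relations among its own terms are automatically those dictated by Lemma~3.2; the remaining work is concentrated entirely in the middle coefficient.

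The crux is conditions (ii)--(iii) for $c_{0,\de}$. Under the non-resonance hypothesis \eqref{apmgen} the poles of $c_{0,\de}$ are simple and $\Lambda_{\de}$-congruent to $\omega_{t,\de}\pm ia_{-\de}/2$, with residues given explicitly by \eqref{c0res}. I would then compute the residues of $c_{2,\de}$ at the four matching points $\omega_{t,\de}-ia_{-\de}/2$ and check that \eqref{iiib} holds pole by pole, using evenness to transfer the verification to the conjugate points $\omega_{t,\de}+ia_{-\de}/2$. At $z=-ia_{-\de}/2$ and $z=-ia_{-\de}/2+1/2$ (the cases $t=0,1$, where $\exp(\lambda_t)=1$) the residue of $c_{2,\de}$ should simply equal minus that of $c_{0,\de}$, which is the clean instance of \eqref{iiib}. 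The main obstacle is the pair $z=-ia_{-\de}/2+ia_{\de}/2$ and $z=-ia_{-\de}/2+1/2+ia_{\de}/2$ (the cases $t=2,3$, where $\exp(\lambda_t)=\mu$): here the residue of $c_{2,\de}$ must acquire the extra multiplier $-\mu$ predicted by \eqref{iiib}, and extracting this factor requires a careful residue computation on the theta-quotient \eqref{Ve} across a full $\Lambda_{\de}$-period. This is the one genuinely laborious step, and it is where the modular-invariant parameter $a$ and the constraint \eqref{hmu} must conspire to produce precisely the multiplier $\mu$.

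Once these residue relations are confirmed, $A_{D,\de}(h;z)$ meets every requirement in the definition below \eqref{ASk} and is even, so Lemma~3.6 applies with $k=2$ (or with $k=0$ should $c_{2,\de}$ vanish identically, using the inclusion $\cV_{0}\subset\cV_{2}$ from \eqref{subs}) and yields \eqref{ADcV}. Since $\cV_{2,\de}(\mu)$ is by definition spanned by twofold products of the generators of $\cS_{\de}$, this establishes that $A_{D,\de}(h;z)$ belongs to the $\cD(\mu)$-representation, as claimed.
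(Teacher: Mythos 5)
Your proposal is correct and follows essentially the same route as the paper: identify $c_{2,\de}=V_{\de}(h;-z)\in V_{2,\de}(\mu)$ via \eqref{Vhm} and \eqref{hmu}, locate the simple poles of the even elliptic $c_{0,\de}$ at $\omega_{t,\de}\pm ia_{-\de}/2$ under \eqref{apmgen}, match residues against those of $c_{2,\de}$ with the factor $-1$ at the $t=0,1$ points and $-\mu$ at the $t=2,3$ points as dictated by \eqref{iiib}, and then apply Lemma~3.6 to the manifestly even Sklyanin-type operator. The paper likewise leaves the "arduous but straightforward" residue computation at the $ia_{\de}/2$-shifted points unwritten, so your level of detail matches its proof.
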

 
Next, we consider the twofold product~\eqref{ARlm},  still assuming~\eqref{apmgen}. Requiring 
\be
\exp\Big(- 2\pi i\Big(\sum_{n=1}^4k_n+2ia\Big)\Big)=\mu,\ \ \ k=l,m,
\ee
it belongs to~$\cV_{2,\de}(\mu)$. The coefficient of $\exp(-ia_{-\de}\partial_z)$ is of the form $V_{\de}(h;z)$~\eqref{Ve}, with
\be\label{hlm}
h_n=l_n-ia_{-\de}/4,\ \ \ h_{n+4}=m_n+ia_{-\de}/4,\ \ \ n=1,2,3,4.
\ee
Hence it follows that we have an equality
\be\label{ARAD}
A_{R,\de}^{(2)}(l,m;z)=A_{D,\de}(h;z)+c_{\de},\ \ \ c_{\de}\in\C,
\ee
with $h$ given by~\eqref{hlm}. 

The equality~\eqref{ARAD} just deduced clearly extends to any pair of modular parameters $a_+,a_-$ in the right half plane satisfying~\eqref{apmz}. Moreover, from the proof of Lemma~3.3 we see that up to addition of a constant any van Diejen A$\De$O can be written as a difference of two twofold products $A_{R,\de}^{(2)}(l,m;z)$. We continue by using~\eqref{Kidj2} to obtain kernel identities for the van Diejen A$\De$Os. To this end, we introduce the vector
\be
\zeta\equiv (1,\ldots,1)=\sum_{n=1}^8 e_n,
\ee
where $e_1,\ldots,e_8$ are the standard basis vectors of $\C^8$. Now we define two maps
\be
\chi_1\colon \C^8\to\C^8,\ \ \ h\mapsto \hat{h}\equiv -h+ \langle \zeta,l\rangle \zeta/4,
\ee
\be
\chi_2\colon \C^8\to\C^8,\ \ \ h\mapsto \tilde{h}\equiv \hat{h}+\zeta/2,
\ee
where $\langle \cdot,\cdot\rangle$ denotes the standard bilinear form on~$\C^8$. It is easily seen that $\chi_1$ is a reflection, whereas
\be
\chi_2(\chi_2(h))=h+\zeta.
\ee
We are now prepared for the following result.

\begin{theorem}
Assume $h\in\C^8$ satisfies~\eqref{hmu} and $a_+,a_-$ are numbers in the right half plane satisfying~\eqref{apmz}.
 Then we have
\be\label{KidjD}
 A_{D,\de}(h;z)K_j(z,y)=A_{D,\de}(\chi_j(h);y)K_j(z,y),\ \ \ \de=+,-,\ \ \ j=1,2.
\ee
\end{theorem}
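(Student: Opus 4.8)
The plan is to derive the van Diejen kernel identities~\eqref{KidjD} as a direct consequence of the already-established kernel identities~\eqref{Kidj2} for the twofold products $A_{R,\de}^{(2)}(l,m;z)$, together with the equality~\eqref{ARAD} relating these products to the van Diejen A$\De$Os. The essential observation is that~\eqref{ARAD} writes any van Diejen A$\De$O as $A_{R,\de}^{(2)}(l,m;z)$ minus a $z$-independent constant $c_\de$, and since the additive constant commutes with the kernel action on both sides, it ought to cancel once we verify that the constant attached to the operator on the right-hand side ($A_{D,\de}(\chi_j(h);y)$) is consistent with the one on the left.

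First I would fix $h\in\C^8$ satisfying~\eqref{hmu} and, using~\eqref{hlm}, recover the parameters $l,m\in\C^4$ with $\langle\kappa,l-m\rangle=0$ such that $A_{R,\de}^{(2)}(l,m;z)=A_{D,\de}(h;z)+c_\de$ via~\eqref{ARAD}. Applying~\eqref{Kidj2} then gives
\be
A_{R,\de}^{(2)}(l,m;z)K_j(z,y)=A_{R,\de}^{(2)}(\phi_j(m),\phi_j(l);y)K_j(z,y),\ \ \ j=1,2.
\ee
Invoking~\eqref{ARAD} once more for the $y$-side product $A_{R,\de}^{(2)}(\phi_j(m),\phi_j(l);y)$, the latter equals $A_{D,\de}(h';y)+c_\de'$ for a suitable $h'\in\C^8$. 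The crux is then to identify $h'=\chi_j(h)$. This is a bookkeeping computation: substituting $\phi_j(l)$ and $\phi_j(m)$ (given by~\eqref{phi1}--\eqref{phi2}) into the relation~\eqref{hlm} that produces $h$ from $(l,m)$, and comparing with the definitions of $\chi_1,\chi_2$ in terms of $\zeta=\sum_{n=1}^8 e_n$, one sees that the reflection-plus-shift structure on the $\C^4$ level lifts precisely to the reflection-plus-shift structure on the $\C^8$ level.

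The main obstacle is disposing of the additive constants. After the substitutions we obtain
\be
\big(A_{D,\de}(h;z)+c_\de\big)K_j(z,y)=\big(A_{D,\de}(\chi_j(h);y)+c_\de'\big)K_j(z,y),
\ee
so~\eqref{KidjD} follows provided $c_\de=c_\de'$. To secure this I would argue that $c_\de$ depends only on the multiplier data and the modular parameters, which are invariant under the involutions $\phi_j$ (and hence $\chi_j$); more concretely, since $c_\de$ is determined by the difference $A_{R,\de}^{(2)}(l,m;z)-A_{D,\de}(h;z)$ at a single convenient evaluation point (say $z=\omega_{t,\de}$, where the shift terms simplify), and the involutions permute the eight entries of $h$ while preserving the relevant symmetric combinations, the constants coincide. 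Alternatively, and more robustly, one can note that both sides of~\eqref{KidjD} are already known to hold up to a common constant by the $c_\de$-independent part, and then fix the constant by comparing residues of the $c_{0,\de}$-coefficients at a pole $z=\omega_{t,\de}+ia_{-\de}/2$, where~\eqref{c0res} gives an explicit $\chi_j$-symmetric expression $p_{t,\de}(h)$. Finally, since everything in sight is meromorphic in $h$ and in $a_\pm$, the identity extends from the generic parameter range where~\eqref{apmgen} holds to all $a_+,a_-$ in the right half plane satisfying~\eqref{apmz}, completing the proof.
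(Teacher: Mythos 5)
Your overall route is the same as the paper's: reduce~\eqref{KidjD} to the Sklyanin kernel identities~\eqref{Kidj2} via the identification~\eqref{ARAD}, check that the parameter map $(\phi_j(m),\phi_j(l))$ on the $\C^4$ level corresponds to $\chi_j(h)$ on the $\C^8$ level (the paper notes it agrees up to an $S_8$-permutation, under which $V_\de$ is invariant), and extend from the generic restriction~\eqref{apmgen} to~\eqref{apmz} by continuity. One small gap first: not every $h$ satisfying~\eqref{hmu} is of the form~\eqref{hlm} --- that form imposes the extra linear constraint $\sum_{n=1}^4h_n-\sum_{n=5}^8h_n=-2ia_{-\de}$ coming from $\langle\kappa,l-m\rangle=0$ --- so you cannot ``recover $l,m$'' for a general $h$; you must first prove the identity up to a constant for the special $h$'s and then pass to general $h$ by linearity, as the paper does.

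The serious gap is your treatment of the additive constants, which is in fact the only genuinely hard point of the theorem. Neither of your two arguments works. For the first: there is no ``convenient evaluation point where the shift terms simplify'' for an analytic difference operator --- $\exp(\pm ia_{-\de}\partial_z)$ does not degenerate at half-periods --- and the constant $c_\de$ in~\eqref{ARAD} is not visibly a symmetric function of the multiplier data; it depends on the chosen factorization $(l,m)$ and on the specific normalization of $V_{b,\de}$ through the subtracted term $\cE_{t,\de}(\xi;\omega_{t,\de})$ in~\eqref{Vbe}. For the second: comparing residues of the $c_{0,\de}$-coefficients can never detect an additive constant, since a constant is regular everywhere; moreover $p_{t,\de}(h)$ is \emph{not} individually invariant under $\chi_j$ --- only the sum $\sum_{t=0}^3p_{t,\de}(h)$ is, and that by a nontrivial ``$E_8$-identity''. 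The paper disposes of the constants by identifying them with the shift functions $\sigma_\pm(h)$ of Subsection~3.1 of~\cite{Rui09} and invoking the result there (the chain of reasoning from Eq.~(3.53) to Eq.~(3.58) of that reference) that these vanish precisely for the $\omega_{t,\de}$-normalization adopted in~\eqref{Vbe}. Without that external input, or an equivalent computation, your proof establishes~\eqref{KidjD} only up to an undetermined additive constant $C_{\de,j}$.
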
 
\begin{proof}

With the assumption~\eqref{apmz} in effect, the A$\De$Os and kernel functions are continuous in $a_{+}$ and $a_{-}$. 
Thus it suffices to prove the theorem under the stronger restriction that
 $a_{\pm}$ satisfy~\eqref{apmgen}. We first show that there exist four constants~$C_{\pm,j}\in\C, j=1,2$, such that
\be\label{KidjDC}
\big(A_{D,\de}(h;z)-A_{D,\de}(\chi_j(h);y)\big)K_j(z,y)=C_{\de,j}K_j(z,y),
\ee
and then prove that the constants vanish.
Clearly, we need only show~\eqref{KidjDC} for the special $h$ of the form~\eqref{hlm}, for which~\eqref{ARAD} holds true. (Indeed, the case of general $h$ then follows by linearity.) For the special $h$ we may invoke~\eqref{Kidj2}, deducing that the A$\De$O on its right-hand side is of the form $A_{D,\de}(h^{(j)};y)$ plus a constant, with
\be
h^{(j)}_n=\phi_j(m)_n-ia_{-\de}/4,\ \ \ h^{(j)}_{n+4}=\phi_j(l)_n+ia_{-\de}/4,\ \ \ n=1,2,3,4.
\ee
From~\eqref{defkap}--\eqref{phi2} we infer that $h^{(j)}$ is obtained from $\chi_j(h)$ by a permutation in~$S_8$. Since $V_{\de}(h;z)$ is clearly invariant under permutations,  we deduce~\eqref{KidjDC}. 

In order to show that the constants vanish, we combine a few results from Subsection~3.1 of~\cite{Rui09}. The crux is that the constants amount to the shift functions $\sigma_{\pm}(h)$ in~{\it loc.~cit.}, but not for the $z_t$-choice already mentioned in the paragraph below~\eqref{cEt}, but for the choice $\omega_{t,\de}$ we have made in~\eqref{Vbe}. The latter choice also arose in Eq.~(3.53) of~{\it loc.~cit.}, and it follows from the  reasoning leading from this equation to Eq.~(3.58) that the shift constants vanish for this choice.
\end{proof}

 We would like to add that the `$E_8$-identity' encoded in Eq.~(3.55) of~\cite{Rui09}  entails
\be
\sum_{t=0}^3 p_{t,\de}(h)=
\sum_{t=0}^3 p_{t,\de}(\chi_j(h)),\ \ \ j=1,2.
\ee
Hence the additive constants in the two A$\De$Os in~\eqref{KidjD} are equal and can be omitted. (Indeed, $\cE_{t,\de}(\xi;\omega_{t,\de})$ does not depend on $t$, cf.~\eqref{cEcon}.) Doing so, we no longer need to require that $a_{\pm}$ be such that the numbers $R_{\pm}(ia)$ do not vanish.

Possibly, the results just mentioned can also be obtained by a more refined analysis of the constants $c_{\pm}$ in the equality~\eqref{ARAD}. In this connection we point out that for the van Diejen A$\De$Os as defined above (with~\eqref{apmz} in force to prevent divergencies), the vector space properties are easily established. Indeed, the coefficients $V_{\de}(h;-z)$ are not restricted: With \eqref{hmu} in effect, their multiples vary over all of $V_{2,\de}(\mu)$. Also, we claim that the linear span of the A$\De$Os $A_{D,\de}(h;z)$  does not contain the constant functions,  so that it is eight-dimensional, cf.~Prop.~2.1. 

Using our previous results, it is not hard to prove the claim just made. Assuming first~\eqref{apmgen}, the A$\De$Os are of Sklyanin type, so it follows that, given $\lambda_j\in\C$ and $A_{D,\de}(h_j;z)\in\cV_{2,\de}(\mu)$, $j=1,2$, we have
\be\label{ADsum}
\lambda_1A_{D,\de}(h_1;z)+\lambda_2A_{D,\de}(h_2;z)=\lambda_3A_{D,\de}(h_3;z)+s_{\de},
\ee
for some $\lambda_3,s_{\de}\in\C$ and $A_{D,\de}(h_3;z)\in\cV_{2,\de}(\mu)$. Taking residues at the simple poles of the three elliptic functions involved, this yields
\be
\lambda_1p_{t,\de}(h_1)+ \lambda_2p_{t,\de}(h_2) =\lambda_3p_{t,\de}(h_3).
\ee
Recalling~\eqref{Vbe}, we now conclude
\be
\lambda_1V_{b,\de}(h_1)+ \lambda_2V_{b,\de}(h_2) =\lambda_3V_{b,\de}(h_3),
\ee
with $\lambda_3$ and $h_3$ depending on $h_1,h_2$ and $\de$. Therefore, we have $s_{\de}=0$ in~\eqref{ADsum}. Finally, absence of constants for modular parameters satisfying the weaker restriction~\eqref{apmz} follows by continuity.

To conclude this section, we point out a remarkable consequence of the kernel identities in the previous subsection for a subclass of van Diejen type A$\De$Os. Let us choose
\be\label{hsp}
h_5=ia_{+}/2,\ \ h_6=ia_{+}/2+1/2,\ \ h_7=0,\ \ h_8=1/2.
\ee
Then we deduce from~\eqref{AD}--\eqref{cEt} and the duplication formula~\eqref{dup} that we have
\be\label{RD}
A_{D,+}(h;z)=\frac{\prod_{n=1}^4R_{+}(z-h_n-ia_{-}/2)}{R_{+}(2z+ia_{+}/2)}\exp(-ia_{-}\partial_z)+(z\to -z).
\ee
This implies that we can view the Sklyanin generators $A_{R,+}(l;z)$ as a special type of van Diejen operators, with the parameter $a_-$ in~\eqref{RD} replaced by $a_{-}/2$, cf.~\eqref{AR2}--\eqref{fde}. 

A moment's thought now shows that this observation has the consequence that the special A$\De$Os given by~\eqref{hsp}--\eqref{RD} not only admit the kernel identities~\eqref{KidjD} (with $A_{D,+}(\chi_j(h);y)$ not of this type in general), but also the kernel identities~\eqref{Kidj} with $a_{-}$ replaced by $2a_{-}$ throughout (in particular, in the elliptic gamma functions). Moreover, the $y$-dependent A$\De$O in the latter identities is of the same special type. It should be noted, however, that this state of affairs is not modular invariant.

\renewcommand{\thesection}{A}
\setcounter{equation}{0}
%\setcounter{theor}{0}
%\addtocontents{toc}{\protect\vspace{-2ex}}

\addcontentsline{toc}{section}{Appendix A. Proof of Lemma 3.2}
																						%\newpage																						
\section*{Appendix A. Proof of Lemma 3.2}

We begin by reducing the case $t\ne 0$ to the case $t=0$. To this end we use the bijection~\eqref{Phit} to transform $A$ to $\cM_0$. Thus we calculate
\be
(\Phi_tAg)(z)=\sum_{j=-k}^k\tilde{c}_j(z)(\Phi_tg)(z+j\eta),\ \ \ g\in\cM_t,
\ee
where
\be
\tilde{c}_j(z)=\exp(-j\lambda_t/2)c_j(z+\omega_t),\ \ \ t\in \{1,2,3\}.
\ee 
From this it readily follows that it suffices to prove the lemma for the case $t=0$. (Note in particular that the residues $\tilde{r}_j(0,\ell)$ of $\tilde{c}_j(z)$ are equal to $\exp(-j\lambda_t/2)r_j(t,\ell)$.)

Accordingly we take $t=0$ from now on. Assuming the properties (i)--(iii) for the functions $c_j(z)$ occurring in
\be\label{Ag}
(Ag)(z)=\sum_{j=-k}^kc_j(z)g(z+j\eta),\ \ \ \ g\in\cM_0,
\ee
we infer from~\eqref{i} that $Ag$ is an even meromorphic function. Now we use the regularity of $g(z)$ at integer multiples of $\eta$ and property (ii) to infer that the limit
\be
\lim_{z\to -\ell\eta}(z+\ell\eta)(Ag)(z)=\sum_{j=-k}^kr_j(0,\ell)g((-\ell+j)\eta),
\ee
exists. To prove $Ag\in\cM_0$, it remains to be shown that this limit vanishes. By evenness, it suffices to show this for $\ell\ge 0$. Using~\eqref{iiia}, we see that the limit vanishes for $\ell\ge k$, and that we need only sum over $j\ge 2\ell -k$. Thus we are reduced to showing that 
\be\label{sumr}
\sum_{j=2\ell-k}^kr_j(0,\ell)g((-\ell+j)\eta),\ \ \ 0\le \ell <k,
\ee
vanishes. Rewriting this sum as
\be
r_{\ell}(0,\ell)g(0)+\sum_{m=1}^{k-\ell}\big( r_{\ell+m}(0,\ell)g(m\eta)+ r_{\ell-m}(0,\ell)g(-m\eta)\big),
\ee
we can use evenness of $g$ and~\eqref{iiib} in the guise
\be
r_{\ell+m}(0,\ell)=-r_{\ell-m}(0,\ell),\ \ \ m=0,1,\ldots,k-\ell,
\ee
to deduce that it does vanish. This completes the proof that the three conditions imply~\eqref{AMt}.

Conversely, assume $A$ satisfies \eqref{AMt}. Using evenness of $Ag$ and $g$, we see that~\eqref{Ag} entails
\be
\sum_{j=-k}^k(c_j(z)-c_{-j}(-z))g(z+j\eta)=0,\ \ \ \forall g\in\cM_0.
\ee
Now choose $z=z_0\notin \eta\Z$ such that all coefficients $c_j(z)$ are regular at $z=\pm z_0$. Choosing $g$ with $g(z_0)=1$ and $g(z_0\pm j\eta)=0$ for $j=1,\ldots,k$, we obtain $c_0(z_0)=c_0(-z_0)$. Letting next $j_0<0$, we  choose $g$ with $g(z_0+j_0\eta)=1$ and $g(z_0+j\eta)=0$ for $j\ne j_0$. This yields $c_{j_0}(z_0)=c_{-j_0}(-z_0)$. Hence the first property~\eqref{i} readily follows.

To prove properties (ii) and (iii), consider the Laurent expansions of the nonzero coefficients around $z=z(0,\ell)$:
\be\label{cjexp}
c_j(z-\ell\eta)=a_{jl}z^{n_{jl}}+b_{jl}z^{n_{jl}+1}+\ldots,\ \ \ z\to 0,\ \ \ a_{jl}\ne 0,\ \ \  |j|\le k,\ \ \ \ell\in\Z.
\ee
Fixing $\ell>k$ and $j_0$ with $|j_0|\le k$, we can choose $g(z)$ that vanishes to sufficiently high order at $z=-\ell\eta+j\eta$ for any $j\ne j_0$ with $|j|\le k$ so that the terms in the sum~\eqref{Ag} with $j\ne j_0$ have limit 0 (say) as $z\to -\ell\eta$. Since $Ag$ is by assumption regular at $z=-\ell\eta$, the remaining term $c_{j_0}(z)g(z+j_0\eta)$  has a limit too, and when we choose $g(-\ell\eta+j_0\eta)=1$ (say), we see that we must have $n_{j_0,\ell}\ge 0$ in~\eqref{cjexp}. More generally, this argument shows that $c_j(z)$ is regular at $z=z(0,\ell)$ for all $\ell\in\Z$ with $|\ell|>k$.
 
Next we fix $\ell$ with $|\ell|\le k$ and note that the assumption implies that 
\be\label{Agzs}
(Ag)(z-\ell\eta)=c_{\ell}(z-\ell\eta)g(z)+\sum_{m=1}^{k-\ell}c_{\ell+m}(z-\ell\eta)g(z+m\eta)+\sum_{m=1}^{k+\ell}c_{\ell-m}(z-\ell\eta)g(z-m\eta)
\ee
 has a limit as $z\to 0$. For $m\ne 0$, we may choose $g(z)$ such that the functions $g(z\pm m\eta)$ vanish to sufficiently high order as $z\to 0$ so that the sums have a limit. Letting also $g(0)=1$, it follows that~$c_{\ell}(z-\ell\eta)$ has a limit as $z\to 0$, so that $c_{\ell}(z)$ is regular at $z=z(0,\ell)$. 
 
We proceed to consider the special choice $\ell=0$. Fixing $m=j\in\{ 1,\ldots,k\}$,  we may choose $g$ such that the functions $g(z\pm m\eta)$, $m\ne j$, vanish to sufficiently high order as $z\to 0$ so that the terms not involving $c_{\pm j}$ have a limit. Thus it follows that 
\be\label{2cj}
c_j(z-\ell\eta)g(z+j\eta)+c_{-j}(z-\ell\eta)g(z-j\eta)
\ee
has a limit for $z\to 0$. We are still free to choose the numbers $a$ and $b$ in the Taylor expansion
\be
g(z+j\eta)=a+bz+\ldots,\ \ \ g(z-j\eta)=a-bz+\ldots,\ \ \ z\to 0,
\ee
and when we combine this with~\eqref{cjexp}, we see that~\eqref{2cj} satisfies
\be
a(a_{j\ell}z^{n_{j\ell}}+a_{-j,\ell}z^{n_{-j,\ell}}+\ldots)+bz(a_{j\ell}z^{n_{j\ell}}-a_{-j,\ell}z^{n_{-j,\ell}}+\ldots )+\ldots,\ \ \ z\to 0.
\ee
Letting $a\ne 0$, we infer from existence of the $z\to 0$ limit that there are two cases. Either $n_{j\ell}$ and $n_{-j,\ell}$ are both nonnegative (in which case $c_j(z)$ and $c_{- j}(z)$ are regular at $z(0,\ell)$), or we have
\be
n_{j\ell}=n_{-j,\ell}=:n<0,\ \ \ a_{j\ell}+a_{-j,\ell}=0.
\ee
Clearly, we need only analyze the second case further.

There are two subcases to consider, namely $n<-1$ and $n=-1$. In the first subcase we have a leading term
\be
z^{n+1}\big(a(b_{j\ell}+b_{-j,\ell}) +2ba_{j\ell}\big),\ \ \ a_{j\ell}\ne 0.
\ee
Switching to a $g(z)$ for which $a=0$ and $b\ne 0$, we see that we get a pole at $z=0$, so this subcase is ruled out by assumption. For $n=-1$ we deduce that the coefficients $c_{\pm j}(z)$ have simple poles at $z=z(0,\ell)$ with residues
\be
r_{\pm j}(0,\ell)=a_{\pm j,\ell},\ \ r_j(0,\ell)=-r_{-j}(0,\ell).
\ee
Thus the properties (ii) and (iii) follow for $\ell=0$. 

Turning to the choice $\ell \in\{1,\ldots,k\}$, we conclude from~\eqref{Agzs} in a by now familiar way that  for $m>k-\ell$ the factor $c_{\ell-m}(z-\ell\eta)$ cannot be singular for $z\to 0$. Thus $c_j(z)$ has no pole for $j<2\ell-k$. Likewise, for $\ell\in\{-k,\ldots,-1\}$ we infer that $c_j(z)$ has no pole for $j>k+2\ell$. 
Finally, reasoning just as for $\ell =0$, we deduce from~\eqref{Agzs} that for $m=1,\ldots, k-|\ell|$ the functions $c_{\ell\pm m}(z-\ell\eta)$ have at most simple poles for $z\to 0$, with opposite residues. Therefore, we have now shown that the assumption \eqref{AMt} implies the properties (i)--(iii).

%\newpage

\renewcommand{\thesection}{B}
\setcounter{equation}{0}
\setcounter{theorem}{0}
%\addtocontents{toc}{\protect\vspace{-2ex}}

\addcontentsline{toc}{section}{Appendix B. The Sklyanin relations revisited}

\section*{Appendix B. The Sklyanin relations revisited}

We begin by detailing the algebraic formalism that leads to the A$\De$Os~$A_R$~\eqref{AR} introduced in~\cite{Rai06}. Then we clarify how this is related to the Sklyanin algebra. The proofs of the first two lemmas we need  along the way are relegated to the end of the appendix, so as not to interrupt the flow of the reasoning.

Our starting point is the tensor algebra $T(V)$ over the vector space (cf.~Prop.~2.1)
\be\label{V}
V\equiv \theta(2z)V_1(1)=\{ c\theta(z+\vec{a})\mid c\in\C,\ a\in\C^4, \ \sum_{i=1}^4 a_i=0\}.
\ee
Thus $V$ is the 4-dimensional space of theta functions of order 4 that  yield an elliptic function with periods 1 and $\tau$ upon division by $\theta(2z)$. 

Next, we fix $\eta\in\C$ until further notice, and set
\be\label{fV}
f(\alpha,\beta,\gamma;z)\equiv \theta(z+\alpha, z+\beta,z+\gamma,z-\alpha-\beta-\gamma),\ \ \ \alpha,\beta,\gamma\in\C.
\ee
Introducing $F_{\eta}\in V\otimes V$ by
\be\label{Feta}
F_{\eta}(a,b,c_1,c_2;z_1,z_2)\equiv f(a-u,b-u,c_1+u;z_1)f(a+u,b+u,c_2-u;z_2)- (c_1\leftrightarrow c_2),\ \ \ u\equiv\eta/2,
\ee
we define a subspace~$R_{\eta}$ of $V\otimes V$ as the linear span of the functions $F_{\eta}$ with the parameters $a,b,c_1,c_2$ ranging over~$\C$. One readily verifies that this subspace is elliptic in~$\eta$:
\be\label{Rell}
R_{\eta+1}=R_{\eta+\tau}=R_{\eta}.
\ee

Denoting the two-sided ideal in $T(V)$ generated by the functions in~$R_{\eta}$ by~$[R_{\eta}]$, we now introduce the quotient algebra
\be\label{Q}
Q_{\eta}\equiv T(V)/[R_{\eta}].
\ee
In words, we may view the functions in $V$ as the generators of the algebra~$Q_{\eta}$, and the vanishing of all functions in~$R_{\eta}$ as the quadratic $\eta$-dependent relations among the generators. (The relations can also be found in~\cite{Rai06}, using somewhat different conventions.) Note that $Q_{\eta}$ is elliptic in $\eta$, since the relation space $R_{\eta}$ is elliptic, cf.~\eqref{Rell}.

On first encounter, the quotient algebra $Q_{\eta}$~\eqref{Q} may seem very different from the Sklyanin algebra. In fact, however, it is the same for generic $\eta$. We shall prove that in due course, a crucial element of the proof being the next lemma.

\begin{lemma}
For all $\eta\in\C$, the relation space $R_{\eta}$ is 6-dimensional.
\end{lemma}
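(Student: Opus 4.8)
The plan is to show that $R_{\eta}$, which is by construction the linear span of the two-variable theta functions $F_{\eta}(a,b,c_1,c_2;z_1,z_2)$ as the parameters range over $\C$, is exactly 6-dimensional. First I would record the ambient constraint: each $F_{\eta}$ lies in $V\otimes V$, where $V$ is the 4-dimensional space of order-4 theta functions from~\eqref{V}, so $V\otimes V$ is 16-dimensional and $R_{\eta}$ is a subspace of it. The antisymmetrization $(c_1\leftrightarrow c_2)$ in~\eqref{Feta} suggests that $R_{\eta}$ actually lands in a proper, symmetry-constrained piece of $V\otimes V$, and the natural guess is that $R_{\eta}$ sits inside the antisymmetric part $\wedge^2 V$, which is $\binom{4}{2}=6$-dimensional. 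So the target dimension 6 is exactly $\dim\wedge^2 V$, and the cleanest route is to prove the two inclusions $R_{\eta}\subseteq\wedge^2 V$ and $\dim R_{\eta}\ge 6$, forcing equality.

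For the first inclusion I would examine the behavior of $F_{\eta}(a,b,c_1,c_2;z_1,z_2)$ under the interchange $z_1\leftrightarrow z_2$. Writing out~\eqref{Feta} and tracking how the two product terms swap, I expect to find that the symmetric part cancels while the antisymmetric part survives, i.e. that $F_{\eta}$ is (up to the identification of $V\otimes V$ with theta functions in $(z_1,z_2)$) antisymmetric under $z_1\leftrightarrow z_2$; this is essentially forced by the structure of the $(c_1\leftrightarrow c_2)$ difference together with the fact that interchanging $z_1$ and $z_2$ combined with renaming parameters maps $F_{\eta}$ to its negative. This would show $R_{\eta}\subseteq\wedge^2 V$ and give the upper bound $\dim R_{\eta}\le 6$. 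The verification is a direct symmetry computation with the definition of $f$ in~\eqref{fV}, and I would carry it out by matching the two summands after the swap rather than expanding everything.

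For the lower bound $\dim R_{\eta}\ge 6$ I would exhibit six linearly independent elements of $R_{\eta}$, or equivalently show that the spanning family $\{F_{\eta}\}$ already surjects onto $\wedge^2 V$. One efficient way is to pick a basis $v_1,\dots,v_4$ of $V$ adapted to specializations of the parameters $a,b,c_1,c_2$: by choosing the parameters so that the two factors $f(a-u,\dots;z_1)$ and $f(a+u,\dots;z_2)$ specialize to prescribed basis vectors (and their swaps), one produces wedge combinations $v_i\wedge v_j$ and checks that all six are attainable. A parameter count makes this plausible — we have four free complex parameters $a,b,c_1,c_2$ mapping into a 6-dimensional target — and the independence can be confirmed either by evaluating the resulting theta functions at suitably chosen points $(z_1,z_2)$ to get a nonsingular $6\times 6$ matrix, or by a degeneration/limiting argument exploiting the known zero structure of $\theta$ from~\eqref{aut}. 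Since the claim holds for all $\eta\in\C$ (including $\eta$ with $2\eta\in\Lambda$), I would either argue uniformly or note that the dimension is lower-semicontinuous and upper-bounded by 6, so it suffices to establish the lower bound at generic $\eta$ and then rule out any drop at special $\eta$ using the explicit antisymmetric-surjectivity construction, which is manifestly $\eta$-independent in its combinatorial structure.

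The main obstacle I anticipate is the lower bound, specifically producing six genuinely independent $F_{\eta}$'s and verifying their independence cleanly: the $F_{\eta}$ are built from products of four theta factors in each variable, so naive expansion is unwieldy, and the antisymmetrization can cause unexpected collapses for special parameter alignments (for instance when the arguments of $f$ in the two factors coincide). The delicate point is to choose the specializations of $a,b,c_1,c_2$ so that the six wedge images $v_i\wedge v_j$ are hit without accidental degeneracy, and to do so in a way that remains valid for \emph{all} $\eta$ rather than only generic $\eta$; the uniformity over $\eta$ (in view of~\eqref{Rell} reducing the analysis modulo $\Lambda$, yet still requiring care at $2\eta\in\Lambda$) is where I expect the real work to lie.
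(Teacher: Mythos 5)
Your proposed upper bound rests on the claim that each $F_{\eta}$ is antisymmetric under $z_1\leftrightarrow z_2$, so that $R_{\eta}\subseteq\wedge^2V$; this is false for $\eta\notin\Lambda$, and it is the decisive gap. Writing $u=\eta/2$, the swap $z_1\leftrightarrow z_2$ turns the first summand of~\eqref{Feta} into $f(a-u,b-u,c_1+u;z_2)f(a+u,b+u,c_2-u;z_1)$, whose $z_1$-factor carries the parameters $a+u,b+u$, whereas the corresponding summand of $-F_{\eta}$ has a $z_1$-factor with parameters $a-u,b-u$; these can only be matched when $u\equiv 0$. One sees the failure most cleanly from the product form the paper derives via the three-term theta identity: generically $F_{\eta}$ is a nonzero multiple of $\theta(z_1+a-\eta/2,z_1+b-\eta/2,z_2+a+\eta/2,z_2+b+\eta/2,z_1+z_2-a-b,z_1-z_2+\eta)$, and the swap sends this to minus the corresponding product with $\eta$ replaced by $-\eta$, i.e.\ it maps $R_{\eta}$ onto $R_{-\eta}$, which differs from $R_{\eta}$ in general. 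There is also a structural reason your inclusion cannot hold: if $R_{\eta}$ were the full antisymmetric subspace, then $Q_{\eta}=T(V)/[R_{\eta}]$ would be the commutative symmetric algebra for every $\eta$, contradicting Theorem B.4, which identifies $Q_{\eta}$ with the noncommutative Sklyanin algebra for $\eta\notin\Lambda/2$. The paper stresses that antisymmetry of the tensors $F_{\eta}$ holds precisely at $\eta=0$.

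Without that inclusion you have no upper bound, and your lower-bound strategy also runs into the difficulty you yourself flag: for $\eta\ne 0$ the two $f$-factors in~\eqref{Feta} have parameters locked together by the $\pm u$ shifts, so you cannot specialize them independently to prescribed basis vectors; moreover, lower semicontinuity of the dimension of a span goes the wrong way for transferring a generic lower bound to special $\eta$. The paper's route avoids all of this: the three-term identity collapses $F_{\eta}$ to the single six-fold theta product above; dividing out the parameter-independent factor $\theta(z_1-z_2+\eta)$ and substituting $y_1=z_1-\eta/2$, $y_2=z_2+\eta/2$ shows that $\dim R_{\eta}$ is independent of $\eta$; and at $\eta=0$ an interpolation basis $e_1,\dots,e_4$ of $V$ exhibits every $e_m\otimes e_n-e_n\otimes e_m$ as an $F_0$, giving $\dim R_0=6$. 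The reduction to $\eta=0$ (or some substitute for it) is the missing idea you would need to supply to salvage your outline.
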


We postpone the proof of this lemma to the end of this appendix. On the other hand, it may be illuminating to mention at this point that the reasoning runs along the following lines. First, we show that the dimension does not depend on~$\eta$. Then we focus on the special case $\eta=0$. It is plain from the definitions~\eqref{fV}--\eqref{Feta} that all of the tensors~$F_0$ are antisymmetric. Since~$V$ is 4-dimensional, the antisymmetric subspace in $V\otimes V$ is 6-dimensional. Now it is not obvious, but true that a special choice of the parameters guarantees that we get 6 linearly independent $F_0$'s, which constitutes the last step in our argument.

We proceed to introduce a map $\cA_{\nu}$ from the algebra $T(V)$ to an algebra of analytic difference operators, as follows. First, we define the image of a constant $c\in\C\subset T(V)$ as the $c$-multiple of the identity operator. Then we define $\cA_{\nu}$ on the generators by
\be\label{cAnu}
\cA_{\nu}(f)\equiv \frac{f(z-\nu)}{\theta(2z)}\exp(\eta\partial_z)+(z\to -z),\ \ \ \nu\in\C,\ \ \ f\in V.
\ee
Extending this map in the natural way to tensor products, we obtain in particular for functions~$K(z_1,z_2)$ in the 16-dimensional space~$V\otimes V$ the A$\De$Os
\be\label{AVV}
\cA_{\nu}(K)=\frac{K(z-\nu,z-\nu+\eta)}{\theta(2z,2z+2\eta)}\exp(2\eta\partial_z)+(z\to -z)+c_K(z),
\ee
where $c_K(z)$ is the even elliptic function
\be\label{cK}
c_K(z)\equiv \frac{1}{\theta(2z)}\left( \frac{K(z-\nu,-z-\nu-\eta)}{\theta(-2z-2\eta)}-(z\to -z)\right).
\ee 

Obviously, the A$\De$O on the right-hand side of~\eqref{cAnu} equals the A$\De$O $A(a,\nu)$~\eqref{Apar}, but it is not clear at face value that this map gives rise to a representation of the algebra~$Q_{\eta}$. Indeed, for this to be the case, it is necessary and sufficient that~$\cA_{\nu}$ annihilate the relation subspace $R_{\eta}$ of $V\otimes V$. Now it is easy to verify from~\eqref{fV}--\eqref{Feta} that we have
\be\label{AFeta}
F_{\eta}(a,b,c_1,c_2;\de z-\nu,\de z-\nu+\eta)=0,\ \ \ \de=+,-,
\ee
so that
\be\label{AcF}
\cA_{\nu}(F_{\eta})=c_{F_{\eta}}(z).
\ee
But the validity of the following lemma is not obvious.

\begin{lemma}
We have
\be\label{cFzero}
\cA_{\nu}(F_{\eta})=0,\ \ \ \forall\nu\in\C.
\ee
\end{lemma}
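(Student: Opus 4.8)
The plan is to exploit \eqref{AcF}, which already reduces the lemma to showing that the even elliptic function $\cA_\nu(F_\eta)=c_{F_\eta}(z)$ vanishes identically. Writing \eqref{cK} out for $K=F_\eta$ and abbreviating (with $u\equiv\eta/2$ as in \eqref{Feta})
\[
G(z)\equiv F_\eta(a,b,c_1,c_2;z-\nu,-z-\nu-\eta),
\]
so that $F_\eta(a,b,c_1,c_2;-z-\nu,z-\nu-\eta)=G(-z)$, the quantity to be killed is
\[
c_{F_\eta}(z)=\frac{1}{\theta(2z)}\left(\frac{G(z)}{\theta(-2z-2\eta)}-\frac{G(-z)}{\theta(2z-2\eta)}\right).
\]
The whole strategy is to factor $G$ explicitly so that the two terms telescope.

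To this end I would expand $G=N_1-\tilde N_1$, where $N_1(z)=f(a-u,b-u,c_1+u;z-\nu)\,f(a+u,b+u,c_2-u;-z-\nu-\eta)$ and $\tilde N_1$ is the same product with $c_1\leftrightarrow c_2$. Using \eqref{fV} one sees that the $\theta$-factors arising from the first two slots of each $f$ are identical in $N_1$ and $\tilde N_1$, since $f$ and its $c$-swapped partner share their $a$- and $b$-entries. Pulling these out leaves
\[
G(z)=\Pi(z)\bigl(\psi_1(z)-\psi_2(z)\bigr),\qquad
\Pi(z)\equiv\theta(z-\nu+a-u)\,\theta(z-\nu+b-u)\,\theta(-z-\nu+a-u)\,\theta(-z-\nu+b-u),
\]
where $\Pi$ is manifestly \emph{even}, $\Pi(-z)=\Pi(z)$, and $\psi_1,\psi_2$ are the products of the remaining third- and fourth-slot factors (carrying the $c_1/c_2$ dependence), with $\psi_2$ equal to $\psi_1$ after $c_1\leftrightarrow c_2$.

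The crux is the claim that $\psi_1-\psi_2=\rho_0\,\theta(2z+2\eta)$ for some $z$-independent constant $\rho_0$. Both sides are holomorphic, $1$-periodic, and (a routine check from \eqref{aut}) acquire the same factor $q^{-4}\exp(-8i\pi z)\exp(-8i\pi\eta)$ under $z\mapsto z+\tau$, so they are sections of one and the same degree-$4$ line bundle. By \eqref{AFeta} together with the $\Lambda$-quasiperiodicity of $V$ in each variable, $F_\eta$ vanishes whenever its two arguments are congruent modulo $\eta+\Lambda$; for $G$ the two arguments differ by $-2z-\eta$, which lies in $\eta+\Lambda$ exactly when $z\in-\eta+\Lambda/2$, i.e.\ precisely at the four zeros of $\theta(2z+2\eta)$. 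Since $\Pi$ is nonzero there (for generic parameters), $\psi_1-\psi_2$ vanishes at these four points, whence $(\psi_1-\psi_2)/\theta(2z+2\eta)$ is a holomorphic elliptic function and thus constant. This comparison of two order-$4$ theta functions with matching multipliers and zeros — in effect a Riemann/Fay addition relation — is the step I expect to demand the most care; the non-generic parameter values are then covered by continuity.

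The conclusion is immediate once the factorization is in hand. Substituting $G(z)=\rho_0\,\Pi(z)\,\theta(2z+2\eta)$ and using that $\theta$ is odd, $\theta(-2z-2\eta)=-\theta(2z+2\eta)$ and $\theta(-2z+2\eta)=-\theta(2z-2\eta)$, the first term collapses to $-\rho_0\Pi(z)$ and the second to $-\rho_0\Pi(-z)$, so that by evenness of $\Pi$,
\[
c_{F_\eta}(z)=\frac{1}{\theta(2z)}\bigl(-\rho_0\,\Pi(z)+\rho_0\,\Pi(-z)\bigr)=0 .
\]
Hence $\cA_\nu(F_\eta)=0$ for every $\nu\in\C$. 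A useful sanity check on the signs is the degenerate case $\eta=0$, where $f(a-u,b-u,c_1+u;\cdot)$ coincides with the $c_1$-partner of the second factor and the telescoping in the bracket is transparent, confirming the vanishing directly.
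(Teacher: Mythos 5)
Your proof is correct, but it takes a genuinely different route from the paper's. The paper also reduces, via \eqref{AcF}, to showing that the bracketed expression in \eqref{cK} vanishes for $K=F_{\eta}$, but it then freezes $z$ and studies that expression (written out as \eqref{Fvan}) as a function of $\eta$: each of its four summands is elliptic in $\eta$ by the duplication formula, with only simple poles at $\eta\equiv\pm z+\omega_t$ whose residues cancel pairwise, so the expression is constant in $\eta$ and is seen to vanish at $\eta=0$. You instead work in the $z$-variable, extracting the even factor $\Pi(z)$ and identifying $\psi_1-\psi_2$ as a constant multiple of $\theta(2z+2\eta)$ by matching $1$-periodicity and $\tau$-multipliers (which I have checked: both sides pick up $q^{-4}e^{-8i\pi z}e^{-8i\pi\eta}$) together with the four zeros forced by \eqref{AFeta} and quasi-periodicity; the cancellation in \eqref{cK} then follows from oddness of $\theta$ and evenness of $\Pi$. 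Note that your key factorization is already implicit in the paper's proof of Lemma~B.1, where the three-term identity shows that $F_{\eta}$ equals a constant times $P_{\eta}$, whose factor $\theta(z_1-z_2+\eta)$ becomes $\theta(2z+2\eta)$ on the arguments $(z-\nu,-z-\nu-\eta)$ and whose first four factors become your $\Pi(z)$; you could cite that factorization in place of your zeros-and-multipliers argument. The trade-off: the paper's $\eta$-ellipticity argument needs no genericity assumption, whereas your step requires $\Pi\neq 0$ at the four points and is patched by continuity; on the other hand, your argument is local in $\eta$ and exhibits the vanishing as a structural telescoping rather than a residue computation followed by a limit evaluation.
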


Again, we relegate the proof of this lemma to the end of this appendix. In view of~\eqref{AcF}, it amounts to showing that the elliptic function~$c_{F_{\eta}}(z)$ vanishes identically. At this stage we only point out that it follows from Lemma~3.3 that it has no poles, and therefore must be constant. It is not immediate that this constant equals zero, however.

To continue, we introduce the space of (Casimir) functions
\be\label{cCeta}
\cC_{\eta}\equiv \{ c\theta(z_1-z_2\pm\eta,z_1+z_2\pm \gamma)\mid c,\gamma\in\C\}.
\ee
Clearly, any function in $\cC_{\eta}$ belongs to~$V\otimes V$. It is also not hard to see that~$\cC_{\eta}$ is a 2-dimensional subspace of~$V\otimes V$. (Indeed, given $c_1,c_2,\gamma_1,\gamma_2\in\C$ this amounts to the existence of $c_3,\gamma_3\in\C$ such that
\be
c_1\theta(z\pm\gamma_1)+c_2\theta(z\pm\gamma_2)=c_3\theta(z\pm \gamma_3).
\ee
Dividing this by~$\theta(z\pm 1/4)$ (say), and appealing to ellipticity, this is easily checked.) Furthermore, $\cC_{\eta}$ is elliptic in~$\eta$.

Using~\eqref{AVV}, we now calculate
\be\label{Cas}
\cA_{\nu}(\theta(z_1-z_2\pm\eta,z_1+z_2\pm \gamma))=2\theta(2\nu+\eta\pm\gamma).
\ee
This implies in particular
\be\label{AnuC}
\cA_{\nu}(\cC_{\eta})=\C.
\ee
Moreover, fixing $\gamma\in\C$, the constant on the right-hand side of~\eqref{Cas} is nonzero for generic $\nu$. From this we deduce
\be\label{CReta}
\cC_{\eta}\cap R_{\eta}=\{ 0\}.
\ee
We are now prepared for the following lemma.

\begin{lemma}
Let $\eta\in\C^{*}$ and $F\in V\otimes V$. Then we have
\be\label{dim9}
\dim (\cA_{\nu}(V\otimes V))= 9,\ \ \ \forall \nu\in\C,
\ee
\be\label{FR}
\cA_{\nu}(F)=0,\ \forall \nu\in\C \Leftrightarrow F\in R_{\eta}.
\ee
\end{lemma}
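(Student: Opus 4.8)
The plan is to compute $\dim\cA_{\nu}(V\otimes V)$ by pinning down $\ker(\cA_{\nu}|_{V\otimes V})$ precisely enough that the equivalence~\eqref{FR} falls out at the same time. I would set
\be
K\equiv\{F\in V\otimes V\mid F(z,z+\eta)\equiv 0\}.
\ee
By~\eqref{AVV} the $\exp(2\eta\partial_z)$-coefficient of $\cA_{\nu}(F)$ is $F(z-\nu,z-\nu+\eta)/P_2(\eta,z)$, so every $F\in\ker\cA_{\nu}$ lies in $K$, and this coefficient defines a linear map $F\mapsto F(z-\nu,z-\nu+\eta)/P_2(\eta,z)$ from $V\otimes V$ into $V_2(1)$ with kernel exactly $K$ (independently of $\nu$). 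First I would check this map is onto: writing $F=\sum_i\phi_i\otimes\psi_i$ with $\phi_i,\psi_i\in V=\theta(2z)V_1(1)$ (cf.~\eqref{V}), the coefficient becomes a sum of products $f_i(z)g_i(z+\eta)$ with $f_i,g_i\in V_1(1)$, and by Lemma~3.3 these span $V_2(1)$. Since $\dim V_2(1)=8$ by Prop.~2.1, this yields $\dim K=16-8=8$.

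Next I would restrict $\cA_{\nu}$ to $K$. For $F\in K$ both shift-coefficients of $\cA_{\nu}(F)$ vanish, so $\cA_{\nu}(F)=c_F(z)$ is just the additive term~\eqref{cK}, and $\ker\cA_{\nu}=\{F\in K\mid c_F=0\}$. The core step is to pinch the image of $F\mapsto c_F$ on $K$ to be one-dimensional from both sides. For the upper bound, note $R_{\eta}\subset K$ by~\eqref{AFeta} with $c_F=0$ on $R_{\eta}$ by Lemma~B.2, while $\cC_{\eta}\subset K$ as well (its elements carry the factor $\theta(z_1-z_2+\eta)$, which vanishes on the shifted diagonal), so $L_{\nu}\equiv\ker(\cA_{\nu}|_{\cC_{\eta}})$ also lies in $\ker\cA_{\nu}$. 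Because $\cC_{\eta}\cap R_{\eta}=\{0\}$ by~\eqref{CReta} and, by~\eqref{Cas}, $\cA_{\nu}|_{\cC_{\eta}}\colon\cC_{\eta}\to\C$ is onto for \emph{every} $\nu$ (choose $\gamma$ with $\theta(2\nu+\eta\pm\gamma)\ne 0$), the space $L_{\nu}$ is one-dimensional and $R_{\eta}\oplus L_{\nu}$ is a $7$-dimensional subspace of $\ker\cA_{\nu}$. Hence the image of $F\mapsto c_F$ has dimension at most $8-7=1$, and the surjectivity of $\cA_{\nu}|_{\cC_{\eta}}$ forces it to be exactly $1$. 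Therefore $\dim\ker\cA_{\nu}=7$, with $\ker\cA_{\nu}=R_{\eta}\oplus L_{\nu}$, and $\dim\cA_{\nu}(V\otimes V)=16-7=9$, which is~\eqref{dim9}.

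For~\eqref{FR}, the direction $F\in R_{\eta}\Rightarrow\cA_{\nu}(F)=0\ \forall\nu$ is immediate from Lemma~B.2 and the definition of $R_{\eta}$ as the span of the tensors $F_{\eta}$. Conversely, suppose $\cA_{\nu}(F)=0$ for all $\nu$. Fixing any single $\nu$, the identity $\ker\cA_{\nu}=R_{\eta}\oplus L_{\nu}\subset R_{\eta}\oplus\cC_{\eta}$ lets me write $F=F_R+F_C$ uniquely with $F_R\in R_{\eta}$ and $F_C\in\cC_{\eta}$, the sum being direct by~\eqref{CReta} and the decomposition $\nu$-independent. Then $\cA_{\nu}(F_C)=\cA_{\nu}(F)-\cA_{\nu}(F_R)=0$ for all $\nu$. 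Writing $F_C=\theta(z_1-z_2\pm\eta)h(z_1+z_2)$ with $h$ an even order-$2$ theta function, \eqref{Cas} gives $\cA_{\nu}(F_C)=2h(2\nu+\eta)$; as $2\nu+\eta$ ranges over all of $\C$, this forces $h\equiv 0$, hence $F_C=0$ and $F=F_R\in R_{\eta}$.

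I expect the genuine obstacle to be the surjectivity invoked in the first paragraph — that $V_2(1)$ is spanned by products $f(z)g(z+\eta)$, i.e.\ Lemma~3.3 — since it is what makes $\dim K=8$ available; once that is granted, the rest is a dimension count assembled from the two prior facts $\cC_{\eta}\cap R_{\eta}=\{0\}$ and $\cA_{\nu}(\cC_{\eta})=\C$. The only other delicate point is that everything must hold for \emph{every} $\nu\in\C^{*}$ rather than merely generic $\nu$; this is guaranteed by the observation that $\cA_{\nu}|_{\cC_{\eta}}$ is onto $\C$ for each individual $\nu$, so that $L_{\nu}$ is uniformly one-dimensional and no exceptional values of $\nu$ spoil the count.
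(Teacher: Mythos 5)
Your proof is correct and follows essentially the same route as the paper's: the lower bound on the image comes from the surjectivity part of Lemma~3.3, the upper bound from the $7$-dimensional subspace of the kernel spanned by $R_{\eta}$ (Lemmas~B.1 and~B.2) together with the one-dimensional kernel of $\cA_{\nu}$ restricted to $\cC_{\eta}$ (via~\eqref{Cas} and~\eqref{CReta}), and the converse direction of~\eqref{FR} from the $\nu$-dependence of the $\cC_{\eta}$-contribution — your explicit decomposition $F=F_R+F_C$ with $\cA_{\nu}(F_C)=2h(2\nu+\eta)$ is just a slightly more detailed rendering of the paper's terser argument. One small slip: the leading coefficient of $\cA_{\nu}(F)$ lands in $V_2(\mu)$ with $\mu=\exp(8i\pi\nu)$, not in $V_2(1)$ (and the factors $\phi_i(z-\nu)/\theta(2z)$ lie in $V_1(\mu)$), but since $\dim V_2(\mu)=8$ for every $\mu$ this does not affect any of your dimension counts.
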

\begin{proof}
Since $\cA_{\nu}$ maps $R_{\eta}$ to 0 by Lemma~B.2 and $R_{\eta}$ is 6-dimensional by Lemma~B.1, it follows from~\eqref{AnuC}--\eqref{CReta} that we have
\be
\dim (\cA_{\nu}(V\otimes V))\le 9,\ \ \ \forall \eta,\nu\in\C.
\ee
For $\eta\ne 0$ this dimension count can be sharpened by noting that
\be
\cA_{\nu}(V\otimes V)=\cV_2(\mu),\ \ \ \eta\in\C^{*},\ \ \ \mu =\exp(8i\pi\nu).
\ee
Indeed, from Lemma~3.3 it follows that $\cV_2(\mu)$ is 9-dimensional for any nonzero~$\eta$. Thus, \eqref{dim9} holds true.

To prove the equivalence~\eqref{FR}, let $F\in R_{\eta}$. Then $F$ is a linear combination of the functions $F_{\eta}$, so the implication follows from Lemma~B.2. Finally, assume $\cA_{\nu}(F)$ vanishes for all $\nu$. By virtue of~\eqref{dim9}, the kernel of $\cA_{\nu}$ restricted to $V\otimes V$ is 7-dimensional. Thus it is spanned by the 6-dimensional subspace $R_{\eta}$ and the function~$\theta(z_1-z_2\pm\eta,z_1+z_2\pm (2\nu+\eta))$, cf.~\eqref{Cas}. The latter function depends on~$\nu$, whereas $F$ is constant by assumption. Thus $F$ must belong to $R_{\eta}$, completing the proof.
\end{proof}

We now turn to the connection of the quotient algebra $Q_{\eta}$ defined by~\eqref{Q} with the Sklyanin algebra~$\cS_{\eta}$, as defined by the relations~\eqref{Sk1}--\eqref{Sk2}. First, both algebras are elliptic in their dependence on~$\eta$, so we may as well focus on the period cell spanned by the numbers 1 and $\tau$. For the special case $\eta=0$, the tensors~\eqref{Feta} are all antisymmetric, and by Lemma~B.1 they span the antisymmetric subspace of $V\otimes V$. Thus $Q_0$ equals the commutative algebra of symmetric tensors:
\be\label{Q0}
Q_0=T(V)_s\sim \C[x_1,x_2,x_3,x_4].
\ee

Since the structure constants~$J_{lm}(\eta)$ in the relations~\eqref{Sk1} all vanish for $\eta=0$ (cf.~\eqref{Jlm}), it follows that $S_0$ is a central element of~$\cS_0$, and then the relations~\eqref{Sk2} entail that $\cS_0$ can be viewed as a central extension of the universal enveloping algebra of $su(2)$. As such, $\cS_0$ is not commutative, hence different from $Q_0$. Furthermore, the relations~\eqref{Sk1} are ill defined for $\eta =1/2,\tau/2$, and $1/2+\tau/2$, because two of the structure constants have poles for these $\eta$-values. By contrast, the definition of $Q_{\eta}$ does not involve any divergencies. 

In view of these preliminary observations, we fix  $\eta\notin\Lambda/2$ until further notice. To connect $\cS_{\eta}$ and $Q_{\eta}$, we begin by observing that in the free associative algebra~$\cF$ generated by $S_0,S_1,S_2$ and $S_3$, the quadratic subspace is spanned by the 16 elements $S_{\alpha}S_{\beta}$, and the relations~\eqref{Sk1}--\eqref{Sk2} can be rewritten as
\be\label{Ml}
\sum_{\alpha,\beta=0}^3 M_{\alpha\beta}^{(l)}S_{\alpha}S_{\beta}=0,\ \ \ l=1,\ldots,6.
\ee
It is easy to check that $M^{(1)},\ldots,M^{(6)}$ are 6 linearly independent vectors in $\C^{16}$.

With an eye on~\eqref{fRth}--\eqref{D3}, we now define a map $I$ from the generators $S_t$ to a base for the 4-dimensional space $V$~\eqref{V}:
\be\label{I0}
S_0\mapsto iq^{1/4}G^{-3}\theta(\eta)f((0,1,\tau)/2;z)=\theta_1(\eta)\theta_1(2z),
\ee
\be\label{I1}
S_1\mapsto -iq^{1/4}G^{-3}\theta(\eta+1/2)f((1,-1,1+2\tau)/4;z)=\theta_2(\eta)\theta_2(2z),
\ee
\bea\label{I2}
S_2  &  \mapsto &  iq^{1/4}G^{-3}\exp(i\pi\eta)\theta(\eta+1/2+\tau/2)f((1+\tau,1-\tau,-1+\tau)/4;z)
\nonumber \\
  &  &  =i\theta_3(\eta)\theta_3(2z),
\eea
\bea\label{I3}
S_3  &  \mapsto  &   iq^{1/4}G^{-3}\exp(i\pi\eta)\theta(\eta+\tau/2)f((\tau,-\tau,2+\tau)/4;z)
\nonumber \\
 & & =\theta_4(\eta)\theta_4(2z).
\eea
(Here, we used the duplication formula~\eqref{dup} and~\eqref{th3}--\eqref{th24} to simplify the right-hand sides.) This map naturally extends to an isomorphism from the algebra~$\cF$ to the tensor algebra $T(V)$. In this picture, the 6 relations~\eqref{Sk1}--\eqref{Sk2} are encoded in the vanishing of the functions
\be
m_{\eta}^{(l)}(z_1,z_2)\equiv \sum_{\alpha,\beta=0}^3 M_{\alpha\beta}^{(l)}I(S_{\alpha})(z_1)I(S_{\beta})(z_2),\ \ \ l=1,\ldots,6.
\ee
We are now in the position to state and prove the main result of this appendix.

\begin{theorem}
Let $\eta\notin\Lambda/2$. Then the algebras $\cS_{\eta}$ and $Q_{\eta}$ are isomorphic.
\end{theorem}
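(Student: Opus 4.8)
The plan is to obtain the isomorphism as a descent of the tensor-algebra isomorphism $I\colon \cF\to T(V)$ defined on generators by~\eqref{I0}--\eqref{I3}. Since $\eta\notin\Lambda/2$ guarantees that none of the factors $\theta(\eta),\theta(\eta+1/2),\theta(\eta+1/2+\tau/2),\theta(\eta+\tau/2)$ vanishes, the four images $I(S_t)$ are nonzero multiples of the basis functions $\theta_t(2z)$ of $V$, so $I$ restricts to a linear isomorphism on the generating spaces and extends to a graded isomorphism of tensor algebras. Now both $\cS_\eta$ and $Q_\eta$ are, by definition, quotients of $\cF$ and $T(V)$ by the two-sided ideals generated by their respective $6$-dimensional quadratic relation spaces: $\mathrm{span}\{\sum_{\alpha\beta}M^{(l)}_{\alpha\beta}S_\alpha S_\beta\mid l=1,\dots,6\}$ on the one side, and $R_\eta$ on the other. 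Because $I$ carries $\sum_{\alpha\beta}M^{(l)}_{\alpha\beta}S_\alpha S_\beta$ exactly to the function $m^{(l)}_\eta$, it will descend to an isomorphism $\cS_\eta\to Q_\eta$ precisely when $\mathrm{span}\{m^{(l)}_\eta\mid l=1,\dots,6\}=R_\eta$.

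Thus the theorem reduces to proving this equality of $6$-dimensional relation spaces. First I would observe that the $m^{(l)}_\eta$ are automatically linearly independent: the $M^{(l)}$ are linearly independent in $\C^{16}$, and $I$ is injective on quadratic tensors, so $\mathrm{span}\{m^{(l)}_\eta\}$ is $6$-dimensional. Since $\dim R_\eta=6$ by Lemma~B.1, it then suffices to establish the single inclusion $m^{(l)}_\eta\in R_\eta$ for each $l$; the equality of spans follows by dimension count, whence the two relation ideals correspond under $I$ and the isomorphism results.

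The inclusion $m^{(l)}_\eta\in R_\eta$ is where the appendix machinery pays off, through the characterization~\eqref{FR} of Lemma~B.3, namely $\cA_\nu(F)=0\ \forall\nu\Leftrightarrow F\in R_\eta$. Comparing~\eqref{I0}--\eqref{I3} with~\eqref{D0}--\eqref{D3} and using the definition~\eqref{cAnu} of $\cA_\nu$ together with~\eqref{Apar}, one checks that $\cA_\nu(I(S_t))=D_t$ exactly, the multiplicative constants having been absorbed into the images $I(S_t)$. Since $\cA_\nu$ extends multiplicatively to product tensors, I would then compute $\cA_\nu(m^{(l)}_\eta)=\sum_{\alpha\beta}M^{(l)}_{\alpha\beta}D_\alpha D_\beta$. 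But $\cD(\mu)$, with $\mu=\exp(8i\pi\nu)$, is a representation of $\cS_\eta$, so the operators $D_t$ satisfy the Sklyanin relations~\eqref{Sk1}--\eqref{Sk2}, i.e.\ $\sum_{\alpha\beta}M^{(l)}_{\alpha\beta}D_\alpha D_\beta=0$ for every $l$. Hence $\cA_\nu(m^{(l)}_\eta)=0$ for all $\nu$, and Lemma~B.3 yields $m^{(l)}_\eta\in R_\eta$, which completes the chain.

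The main obstacle is not in this final chain—once Lemmas~B.1 and~B.3 are granted, it is essentially bookkeeping—but in the two inputs those lemmas package: the $\eta$-independent $6$-dimensionality of $R_\eta$ (Lemma~B.1), and the non-obvious vanishing $\cA_\nu(F_\eta)=0$ of~\eqref{cFzero} (Lemma~B.2), which is exactly what makes $\cA_\nu$ annihilate all of $R_\eta$ rather than merely map it into the additive constants. I would also take care over the role of the hypothesis $\eta\notin\Lambda/2$: it is needed both to keep the structure constants $J_{lm}(\eta)$ finite so that the relations~\eqref{Sk1} are well defined, and to keep every $\theta_t(\eta)$ nonzero so that $I$ is genuinely an isomorphism. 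At $\eta=0$ the statement fails outright, since $Q_0$ is commutative by~\eqref{Q0} whereas $\cS_0$ is not, which confirms that some such restriction is unavoidable.
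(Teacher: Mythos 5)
Your proposal is correct and follows essentially the same route as the paper's own proof: identify $\cA_\nu(I(S_t))=D_t$, use the fact that the $D_t$ satisfy the Sklyanin relations to get $\cA_\nu(m^{(l)}_\eta)=0$ for all $\nu$, invoke Lemma~B.3 to place the $m^{(l)}_\eta$ in $R_\eta$, and conclude by the dimension count from Lemma~B.1. The only difference is that you spell out a couple of points the paper leaves implicit (that $\eta\notin\Lambda/2$ makes the $I(S_t)$ a genuine basis of $V$, and that injectivity of $I$ on quadratic tensors gives linear independence of the $m^{(l)}_\eta$), which is a welcome clarification rather than a deviation.
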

\begin{proof}
Comparing \eqref{I0}--\eqref{I3} to \eqref{D0}--\eqref{D3}, we deduce
\be
\cA_{\nu}(I(S_t))=D_t,\ \ \ t=0,1,2,3.
\ee
Now the A$\De$Os $D_t$ satisfy the Sklyanin relations~\eqref{Sk1}--\eqref{Sk2}, so we obtain
\be
\cA_{\nu}(m_{\eta}^{(l)})=0,\ \ \ l=1,\ldots,6,\ \ \ \forall \nu\in\C.
\ee
By virtue of Lemma~B.3, this implies that the 6 functions~$m_{\eta}^{(l)}$ belong to $R_{\eta}$.
Since they are linearly independent and $R_{\eta}$ is 6-dimensional by Lemma~B.1, they span $R_{\eta}$. Thus, the map~$I$ gives rise to an isomorphism of $\cS_{\eta}$ and $Q_{\eta}$. 
\end{proof}

It is clear from the definition of the 6-dimensional relation space $R_{\eta}$ that it is continuous in $\eta$. If we reinterpret 
 the Sklyanin relations as being encoded in the 6-dimensional subspace of $\C^{16}$ spanned by the vectors~$M^{(1)},\ldots,M^{(6)}$ (cf.~\eqref{Ml}), the theorem just proved suggests how $\cS_{\eta}$ should be defined for the 3 excluded
 $\eta$-values: We should multiply the two relations~\eqref{Sk1} involving the two divergent structure constants $J_{lm}$ by the denominator that vanishes, so as to get two finite limit vectors. Then we see that two anticommutators and one commutator vanish. Specifically, we get
\be\label{e1}
\eta=1/2: [S_1,S_2]_+=[S_3,S_1]_+=[S_0,S_1]_-=0,
\ee
\be\label{e2}
\eta=1/2+\tau/2: [S_1,S_2]_+=[S_2,S_3]_+=[S_0,S_2]_-=0,
\ee
\be\label{e3}
\eta=\tau/2: [S_2,S_3]_+=[S_3,S_1]_+=[S_0,S_3]_-=0.
\ee
Combining these relations with~\eqref{Sk2}, we arrive again at 6 linearly independent vectors in~$\C^{16}$. 
Put differently, the 6-dimensional $\eta$-dependent subspace of $\C^{16}$ has a 6-dimensional limit as $\eta$ goes to the 3 excluded values.  Note that it follows from~\eqref{e1}--\eqref{e3} that the resulting 3 algebras are isomorphic via cyclic permutation of the generators $S_1,S_2$ and $S_3$. 

For $\eta=0$ the subspace viewpoint on the relations~\eqref{Sk1}--\eqref{Sk2} yields the same result as before: Since the 3 vectors corresponding to~\eqref{Sk2} do not depend on $\eta$ and the 3 structure constants in~\eqref{Sk1} vanish for $\eta=0$, we obtain a noncommutative algebra~$\cS_0$ differing from~$Q_0$, cf.~\eqref{Q0}.

For the remaining `bad' values of $\eta$ (given by \eqref{e1}--\eqref{e3}), the
functions~\eqref{I0}--\eqref{I3} are no longer a base, but we can still
derive nontrivial information about $Q_{\eta}$ when we reformulate the
relations in terms of the base
\be\label{bt}
b_t\equiv\theta_{t+1}(2z), \ \ \ t=0,1,2,3.
\ee
Choosing first $\eta=1/2$, the relations~\eqref{Sk2} yield the limits
\be\label{b2}
[b_0,b_2]_+=[b_0,b_3]_+=[b_2,b_3]_-=0,
\ee
whereas multiplication by suitable factors in~\eqref{Sk1} yields limit relations
\be\label{b1}
[b_1,b_2]_+=[b_3,b_1]_+=[b_0,b_1]_-=0.
\ee
The 6 tensors in $V\otimes V$ occurring here are clearly linearly
independent, so they yield a base for $R_{1/2}$. In particular, although we
see that $Q_{1/2}$ is not commutative, it is nearly so: any two elements of
even degree commute.  It is straightforward to check that this
commutativity fails for $\cS_{1/2}$, so again $Q_{1/2}$ and $\cS_{1/2}$ are
not isomorphic.

Repeating this reasoning, we deduce that the algebras $Q_{1/2+\tau/2}$ and $Q_{\tau/2}$ admit relations obtained by cyclic permutation of $b_1,b_2$ and $b_3$ in~\eqref{b2}--\eqref{b1}. Thus the 3 algebras at issue are isomorphic, a conclusion that seems hard to obtain via the generating tensors~$F_{\eta}$~\eqref{Feta}. 

We point out that we can also recover~\eqref{Q0} when we first express the Sklyanin relations in terms of $b_0,b_1,b_2$ and $b_3$ by using the map $I$ given by~\eqref{I0}--\eqref{I3}. Indeed, from~\eqref{Sk1} we derive~$[b_k,b_l]_-=0$ and upon divison of (the image of)~\eqref{Sk2} by~$\theta(\eta)$ we obtain the limits~$[b_0,b_k]_-=0$. Thus $R_0$ consists of the antisymmetric tensors and~\eqref{Q0} follows again.

The map~$I$ given by~\eqref{I0}--\eqref{I3} transforms the Casimir elements (cf.~\cite{Skl83})
\be
K_0\equiv \sum_{t=0}^3 S_t^2,\ \ \ K_2\equiv \sum_{k=1}^3
\frac{\theta_{k+1}(2\eta)\theta_{k+1}(0)}{\theta_{k+1}^2(\eta)}S_k^2,
\ee
into the functions
\be
K_0(z_1,z_2)=2\theta(z_1-z_2\pm\eta,z_1+z_2,z_1+z_2),
\ee
\be
 K_2(z_1,z_2)=2\theta(z_1-z_2\pm\eta,z_1+z_2+2\eta,z_1+z_2),
\ee
corresponding to the choices $\gamma= 0$ and $\gamma=\eta$ in~\eqref{cCeta}. (This can be shown by using Jacobi's 5-term identity for the theta function, cf.~also~\cite{Spi09} and~\cite{Ros04}.) From~\eqref{Cas} we then conclude
\be
\cA_{\nu}(K_0)=4\theta(2\nu+\eta)^2,\ \ \cA_{\nu}(K_2)=4\theta(2\nu,2\nu+2\eta),
\ee
in agreement with~\cite{Skl83}, cf.~also~\cite{Spi09}.

It transpires from the above that for $\eta\notin\Lambda/2$ the
$Q(\eta)$-representation furnished by $\cA_{\nu}$ amounts to Sklyanin's
representation $\cD(\mu)$ of~$\cS_{\eta}$ following
from~\eqref{D0}--\eqref{D3}. For $\eta\equiv 0$ it is easily checked that
the A$\De$Os~$\cA_{\nu}(f)$~\eqref{cAnu} commute. This is in agreement with
$Q_{\eta}$ being isomorphic to the polynomial algebra~$\C[x_1,x_2,x_3,x_4]$
for $\eta\in\Lambda$ (recall~\eqref{Q0} and ellipticity of~$Q_{\eta}$
in~$\eta$), whereas the Sklyanin relations~\eqref{Sk1}--\eqref{Sk2}
imply~$\cS_{\eta}$ is noncommutative for $\eta\equiv 0$. 

In fact, it is readily seen that $\cS_0$-representations arise by taking suitable limits of the A$\De$O representations. Specifically, introducing renormalized Sklyanin generators 
\be
D_t^r\equiv D_t/\theta(\eta),\ \ \ t=0,1,2,3,\ \ \ \nu\ne 0,\ \ \ \eta/\nu=c\in\C^{*},
\ee
 we have $D_0^r\to 2$ as $\eta,\nu \to 0$ with $c$ fixed, whereas $D_1^r,D_2^r$ and $D_3^r$  become differential
operators, and the four limits satisfy~\eqref{Sk1}--\eqref{Sk2} with $\eta=0$.

For $\eta\equiv 1/2$ it is not hard to check directly that the A$\De$Os
$\cA_{\nu}(b_t)$ satisfy the relations~\eqref{b2}--\eqref{b1}. Indeed, this follows by using
\be
\theta_j(z+1)=-\theta_j(z),\ \ j=1,2,\ \ \theta_j(z+1)=\theta_j(z),\ \ j=3,4.
\ee
In view of~\eqref{e1} and~\eqref{b1} having the same structure, one may ask whether~$\cS_{1/2}$ as defined via the generating relations~\eqref{e1} and~\eqref{Sk2} is represented by $\cA_{\nu}(b_t)$ as well. This is not the case, however. Indeed, taking e.~g.~$k=2,l=3$ and $m=1$, the commutator of $\cA_{\nu}(b_2)$ and $\cA_{\nu}(b_3)$ vanishes, but the anticommutator of $\cA_{\nu}(b_0)$ and $\cA_{\nu}(b_1)$ is not zero.

Likewise, the algebras $\cS_{\eta}$ with $\eta$ congruent to $\tau/2+1/2$ and $\tau/2$ (as defined by~\eqref{Sk2} and the pertinent limit of~\eqref{Sk1}, cf.~\eqref{e2}--\eqref{e3}) are not represented via the A$\De$Os~$\cA_{\nu}(b_t)$.

Taking again $\eta\notin\Lambda/2$, there is still a difference in the $\nu$-dependence of the representations $\cA_{\nu}$ and $\cD_{\mu}$, even though $Q_{\eta}$ and $\cS_{\eta}$ are isomorphic. Indeed, the coefficient of $\exp(\eta\partial_z)$ in the A$\De$O~$\cA_{\nu}(f)$~\eqref{cAnu} varies over~$V_1(\exp(8i\pi\nu))$ (cf.~Section~2), a space that manifestly has period $1/4$ in $\nu$. By contrast, when we add multiples of $1/4$ to the parameter $\nu$ in the representants~$D_t$~\eqref{D0}--\eqref{D3} of the generators~$S_t$, we do not obtain the same A$\De$Os.  

Put differently, the reliance on a 6-dimensional subspace of $V\otimes V$ to define the algebra $Q_{\eta}$, as compared to using 6 quadratic relations between 4 generators to define $\cS_{\eta}$, gives rise to a slightly different perspective on the A$\De$O representations, a distinction we have glossed over in the main text.

This also applies to the kernel identities in Subsection~4.2. If we specialize the two maps $\Phi_1$ and $\Phi_2$ given by~\eqref{Phi_j} to the A$\De$Os $D_{t,+}$, then we arrive at the anti-automorphisms 
\be\label{aa1}
S_t\mapsto S_t,\ \ \ t=0,1,2,\ \ \ S_3\mapsto -S_3,
\ee
\be\label{aa2}
S_t\mapsto S_t,\ \ \ t=0,1,3,\ \ \ S_2\mapsto -S_2,
\ee
respectively. (This follows by some tedious calculations we omit.) By contrast, the maps $\phi_1$ and $\phi_2$ given by~\eqref{phi1}--\eqref{phi2} simply yield two involutions on the coefficient space~$V_{1,+}(\mu)$.

In one respect the kernel identities do reflect the nontrivial monodromy in $\mu$ of the A$\De$Os~$D_t$ (as $\mu$ circles the origin): When we increase $\nu$ by $1/4$, then the two kernel functions $K_1$ and $K_2$ are not invariant, but get interchanged, cf.~\eqref{gp}--\eqref{K12}. Note that this state of affairs has no bearing on the finite-dimensional submodules of the space $\cM_e$, since they only arise for a discrete set of $\mu$-values.

We conclude this appendix by presenting the proofs of Lemmas~B.1 and~B.2.
%\newpage

\vspace{5mm}

\noindent
{\bf Proof of Lemma B.1.}\ 
From \eqref{fV}--\eqref{Feta} we deduce
\be
F_{\eta}=\theta(z_1+a-\eta/2, z_1+b-\eta/2,z_2+a+\eta/2,z_2+b+\eta/2)\Theta(a+b,\eta/2),
\ee
where
\be\label{Th}
\Theta(d,u) \equiv\theta(z_1+c_1+u,z_1-d-c_1+u,z_2+c_2-u,z_2-d-c_2-u)-(c_1\leftrightarrow c_2).
\ee
We now use the 3-term identity
\be
\theta(z\pm\alpha,\beta\pm\gamma)+\theta(z\pm\beta,\gamma\pm\alpha)+\theta(z\pm \gamma,\alpha\pm\beta)=0,
\ee
to rewrite~\eqref{Th}. Specifically, we set
\be
z=z_1+u-d/2,\ \ \alpha=c_2+d/2,\ \ \beta=c_1+d/2,\ \ \gamma=z_2-u-d/2,
\ee
so that the right-hand side of~\eqref{Th} becomes
\be
\theta(z_1+z_2-d,z_1-z_2+2u,c_2+c_1+d,c_1-c_2).
\ee
For generic $a,b,c_1,c_2\in\C$, therefore, $F_{\eta}$ equals a nonzero constant times the product function
\be
P_{\eta}\equiv\theta(z_1+a-\eta/2, z_1+b-\eta/2,z_2+a+\eta/2,z_2+b+\eta/2,z_1+z_2-a-b,z_1-z_2+\eta).
\ee

Next, we note that $R_{\eta}$ can be viewed as the span of the functions $P_{\eta}$ as the parameters $a,b,c_1,c_2$ range over $\C$. Furthermore, since the factor $\theta(z_1-z_2+\eta)$ does not depend on the latter parameters, the dimension of $R_{\eta}$ does not change when we omit this factor of $P_{\eta}$. Subsequently, we can replace $z_1-\eta/2$ by $y_1$ and $z_2+\eta/2$ by $y_2$ to deduce that $\dim(R_{\eta})$ does not depend on $\eta$.

As a consequence, it suffices to prove 
\be\label{dim6}
\dim(R_0)=6.
\ee
To this end we fix 4 distinct numbers $l_1,l_2,l_3,l_4$ in a period cell with a sum that is not an integer. We claim that we then get a base for $V$ by setting
\be\label{base}
e_1=t(l_2,l_3,l_4),\ e_2=t(l_1,l_3,l_4),\ e_3=t(l_1,l_2,l_4),\ e_4=t(l_1,l_2,l_3),
\ee
where
\be\label{deft}
t(p,q,r)\equiv \theta(z-p,z-q,z-r,z+p+q+r).
\ee

Taking this claim for granted, it follows from~\eqref{fV}--\eqref{Feta} with~$\eta=0$ that the tensors
\be
e_m\otimes e_n -
e_n\otimes e_m,\ \ \ 1\le m<n\le 4,
\ee
are functions of the form $F_0$. (Indeed, two among the numbers $l_j$ are the same.) Since these tensors form a base for the 6-dimensional subspace of antisymmetric tensors in~$V\otimes V$, we now deduce~\eqref{dim6}.

It remains to prove the above claim. We first note that we have
\be
e_m(l_m)=:\lambda_m\ne 0,\ \ \ m=1,2,3,4,
\ee
since the 4 numbers in the period cell are distinct and have a non-integer sum. Moreover,
\be
e_m(l_n)=\de_{mn}\lambda_n,
\ee
as follows from~\eqref{base}--\eqref{deft}. Now let $g\in V$ and consider
\be
h(z):=\sum_{m=1}^4 \frac{g(l_m)}{\lambda_m}e_m(z)\in V.
\ee
Then the difference $d(z)=h(z)-g(z)$ vanishes for all $z_n$ congruent to $l_n$,\ $n=1,2,3,4$. Since $d(z)$ belongs to $V$, its zeros in a period cell must sum to an integer. But the sum of the $l_j$ is not an integer, so we must have $h=g$, and the proof is complete.\ \ \ \ $\Box$
\vspace{5mm}

\noindent
{\bf Proof of Lemma B.2.}\ In view of~\eqref{AcF} it suffices to show that the function~\eqref{cK} with~$K$ replaced by $F_{\eta}$ vanishes identically. From Lemma~3.3 we can deduce that it is constant as a function of $z$, but it seems not easy to choose a special $z$-value for which its vanishing becomes manifest. Therefore, we proceed in a different way. Replacing $\nu -\eta/2$ by $\kappa$, the function in brackets is given by
\bea\label{Fvan}
&  &  \Big[\frac{-1}{\theta(2z+2\eta)}\theta(z+a-\kappa,z+b-\kappa,z+c_1-\kappa +\eta,z-a-b-c_1-\kappa +\eta) 
\nonumber \\
&  &  \times \theta(z-a+\kappa,z-b+\kappa,z-c_2+\kappa+\eta,z+a+b+c_2+\kappa+\eta) -\big( c_1\leftrightarrow c_2\big) \Big] 
\nonumber \\
 &  & -\Big[ z\to -z\Big]. 
\eea
It now suffices to show that this function is zero. 

In order to prove that~\eqref{Fvan} vanishes identically, we view it as the sum of four functions of $\eta$, depending on parameters $z,a,b,c_1,c_2$ and $\kappa$. The crux is that in view of the duplication formula~\eqref{dup} each of the functions is elliptic in $\eta$, with poles occurring only at $\eta\equiv\pm z+\omega_t$, $t=0,1,2,3$. These poles are simple and it is straightforward to check that the residues cancel. Thus~\eqref{Fvan} is constant in~$\eta$. Letting $z\notin \Lambda/2$ and choosing $\eta=0$, we readily see that the constant vanishes.\ \ \ $\Box$

\vspace{10mm}

%\newpage

\noindent
{\Large\bf Acknowledgments}

\vspace{1cm}

\noindent
This collaboration was begun while the authors were visiting the Liu Bie Ju
Centre at City University of Hong Kong. We would like to thank the Centre
and M.~Ismail for the invitation, hospitality and financial support.  The
first author was supported in part by a grant (DMS-1001645) from the
National Science Foundation. Finally, thanks are due to the referee for a careful report, which helped to improve the exposition.

\vspace{5mm}

%\newpage

\bibliographystyle{amsalpha}

\end{document}